\pdfoutput=1
\documentclass[11pt,a4paper]{article}
\usepackage[margin=1in]{geometry}
\usepackage{float}
\usepackage{multitool}
\defintervals\defeps\defps

\usepackage{authblk}
\usepackage{hyperref}

\usepackage{braket}
\usepackage{cleveref}
\usepackage{mathrsfs}
\usepackage{tikz}
\usetikzlibrary{cd}

\usepackage{nicematrix}
\usepackage{enumerate}

\title{A Quantum Pigeonhole Principle and Two \\Semidefinite Relaxations of Communication Complexity}
\author[1]{Pavel Dvo\v rak}
\author[2]{Bruno Loff}
\author[2]{Suhail Sherif}
\affil[1]{Charles University, Prague}
\affil[2]{LASIGE and Faculty of Sciences of the University of Lisbon}

\date{}

\newcommand{\R}{\mathbb{R}}
\newcommand{\N}{\mathbb{N}}
\newcommand{\ip}[2]{\langle #1,#2 \rangle}

\newmathsfs{Alice,Bob}
\newcommand{\DCC}{\mathsf{D}^{\mathsf{cc}}}

\newcommand{\GDCC}{\mathsf{\Gamma_2 D}^{\mathsf{cc}}}
\newcommand{\GLCC}{\mathsf{\Gamma_2 L}^{\mathsf{cc}}}
\newcommand{\disc}{\mathsf{disc}}

\newmathcals{A,K,H,M,D,T,X,Y,Z,P,L} 
\newmathsfs{lspan:span,affine,PSD,CP,tr:t,NP,coNP,poly,PTIME:P,EQ,NC,cl,KW}

\newtheorem{theorem}{Theorem}
\newtheorem{lemma}[theorem]{Lemma}
\newtheorem{claim}[theorem]{Claim}
\newtheorem{corollary}[theorem]{Corollary}
\newtheorem{definition}[theorem]{Definition}

\newtheorem*{remark*}{Remark}

\numberwithin{theorem}{section} 

\usepackage{todonotes}

\begin{document}

\maketitle




\begin{abstract}
    We are interested in what happens when we take a $\Pi_1$ combinatorial statement, write its negation as a homogeneous quadratic feasibility problem (HQFP) (which is always possible since they are NP-complete), and relax the problem into a positive semidefinite feasibility problem. This question is particularly interesting owing to the fact that any statement written as a PSD feasibility problem can be proven or disproven using a short proof. We investigate this for one very simple and one very complicated statement.

    We start with the pigeonhole principle, writing its negation as a particular HQFP, and taking the PSD relaxation. We prove that this relaxed negation of the PHP, which in principle could be easier to satisfy, remains unsatisfiable, and we thus obtain a new ``quantum'' pigeonhole principle (QPHP) which is a stronger statement than the vanilla PHP. The QPHP states that if we take $n$ copies of the same state, and measure each copy using a measurement with only $n-1$ outcomes (the measurement can be different for different copies), then there will be an outcome $j$ and two copies $i_1, i_2$ where the resulting states, obtained when the outcome is $j$ for both copies, are not orthogonal.

    We then work with the statement ``the deterministic communication complexity of $f$ is $\le k$'', where $f$ could be either a function or a relation. We write this statement in two equivalent ways, using two different HQFPs. By relaxing to PSD feasibility, we increase the set of available protocols, and thus we always get a communication model which is stronger than deterministic communication complexity. It can be shown, by an argument from proof complexity, that any model obtained in this way will solve all Karchmer--Wigderson games efficiently. However, the details of how this happens are not at all clear: the argument is very indirect and does not give us an explicit protocol in the new model. We then work to find such protocols in the two communication models obtained by relaxing our two formulations.

    When relaxing the first of the two formulations, we obtain a kind of \textit{structured} variant of the $\gamma_2$ norm. This communication model is to matrices with subunit $\gamma_2$ norm like deterministic protocols are to rectangles, and so  we call \textit{$\gamma_2$ protocols} to the protocols in this model. We show that log-inverse-discrepancy is a lower-bound for this model, so, e.g., inner-product-mod-2 is a hard function in the model. We then show how to compute equality (deterministically) using $O(1)$ bits of $\gamma_2$-communication, which implies that KW games are easy in the model.

    When relaxing the second of the two formulations, we obtain a communication model, which we call \textit{quantum lab protocols}. This model happens to have a functional description, as follows. Alice is given $x$, Bob is given $y$, and they have access to a quantum lab where they have prepared some quantum system in an initial state $\psi_0$ (independent of $x$ and $y$). Then Alice and Bob take turns going to the lab, at each turn interacting with the quantum system by performing a single measurement, and writing down the outcome in the lab's whiteboard. The outcome of the last measurement should be $f(x,y)$ (with zero error probability). We use the QPHP to prove a lower-bound of $n$ against two-round quantum lab protocols for equality. We expected this to generalize to any number of rounds, but we ultimately show that \emph{any} Boolean function $f$ can be computed in three rounds and four measurements.
\end{abstract}

\newpage
\tableofcontents

\newcommand{\mysubsubsection}[1]{%
\phantomsection
\addcontentsline{toc}{subsection}{#1}%
\subsubsection*{#1}%
}
\newpage
\section{Introduction}

The good thing about $\Sigma_1$ statements is that proving them amounts to finding a witness, after which the proof is a routine verification. But---if we assume that $\NP \neq \coNP$---there will necessarily exist $\Pi_1$ statements which cannot be proven in this way. Simultaneously, there exists a small number of situations when a particular class of $\Sigma_1$ statements is closed under negation, meaning, every statement in this class can be either proven or disproven by finding an explicit, easy-to-verify witness. Of course, this includes all ``easy'' statements (decidable in $\PTIME$), but beyond that the exhaustive list is quite short: conic feasibility, which includes semidefinite feasibility, (approximate) lattice problems, and stochastic games. To our knowledge, these three families of problems include all problems that are known to be in $\NP \cap \coNP$,\ 
\footnote{More precisely, conic feasibility is known to be in $\NP(\R) \cap \coNP(\R)$, as there are issues with the bitlength of solutions, which appear unavoidable. For example, one can construct a semidefinite feasibility problem $(\cA,b)$, with polynomially-many bits of of precision, which is satisfiable, but any solution $x$ must be specified with exponentially-many bits of precision \cite{khachiyan2000integer}.} 
but not known to be in $\PTIME$.
In this paper, we focus on semidefinite feasibility problems (SDFPs), which are a particular kind of conic feasibility, although similar considerations could be made for lattice problems and stochastic games.

\mysubsubsection{Homogenous Quadratic and Semidefinite Feasibility Problems (HQFP and SDFP)} In a linear feasibility problem, we are given a linear map $\cA:\R^n\to\R^m$ and a vector $b \in \R^m$, and we wish to know if there exists $x \in \R^n_{\ge 0}$ such that $\cA(x) = b$. As it turns out, many (but not all) of the properties of linear programming generalize to the case where the non-negative orthant $R^n_{\ge 0}$ is replaced by a closed, convex cone $\cK$, namely, a subset of $\R^n$ closed under limits, sums and multiplication by non-negative scalars.

In a semidefinite feasibility problem, we are given a linear map $\cA:\R^{\frac{n(n+1)}{2}}\to\R^m$ from the set of all symmetric matrices to $\R^m$, and a vector $b \in \R^m$, and we wish to know if there exists a positive semidefinite matrix $M$ such that $\cA(M) = b$. I.e., we replace the non-negative orthant $\R^n_{\ge 0}$ with the cone of \textit{positive semidefinite} $n\times n$ matrices $\PSD_n \subseteq \R^{\frac{n(n+1)}{2}}$ (such matrices are symmetric). This set can be alternatively characterized as the set of symmetric matrices with non-negative eigenvalues, or as the set of \textit{Gram} matrices, i.e., matrices equal to $A^\tr A$ for some $n\times m$ matrix $A$, or in other words, matrices $M$ of inner products, given by a family of vectors $a_1, \ldots, a_n$ (the columns of $A$), so that $M_{i j} = \braket{a_i \mid a_j}$.

It follows that a SDFP is asking whether there exist vectors $a_1, \ldots, a_n$ obeying a given system of linear equations on their inner products $\braket{a_i \mid a_j}$. (With linear programming being the special case where the linear equations only depend on the diagonal entries of $M$.) One can easily show that the dimension $m$ can be made to be $\le n$. Hence, the matrix $A$ serves as a short, easy-to-verify witness that a given SDFP is feasible.

Now, suppose we further restrict the solution $M$ to have rank $1$, i.e., the vectors $a_i$ and $a_j$ are now scalars. We then obtain a system of linear equations on degree-2 products $a_i \cdot a_j$, and we wish to know if some choice of scalars satisfies these equations. This is a different kind of problem, called a Homogeneous Quadratic Feasibility Problem (HQFP), and it is easily shown to be NP-hard.

\mysubsubsection{Being Relaxed about the Truth Helps in Finding Short Proofs}
It then follows that it is possible to take any $\Sigma_1$ combinatorial statement $\Psi$, write it down as a HQFP $Q$, and then relax it by droping the rank-$1$ restriction, to obtain a SDFP $P$. 

A radical transformation always happens in this process. The statement ``$Q$ is feasible'' is equivalent to $\Psi$, and by relaxation it always implies ``$P$ is feasible''. However, there is a fundamental result of Ramana \cite{ramana1997exact} saying that given any SDFP $P$ we can efficiently construct a different ``dual'' SDFP $P'$, such that ``$P$ is \textit{not} feasible'' if and only if ``$P'$ is feasible''. Hence, if $P$ is not feasible, we can always prove that $P$ is not feasible by presenting a short, easy witness --- the witness that $P'$ is feasible. So if $\Psi$ is true, ``$P$ is feasible'' remains true, and if $\Psi$ is false, then either ``$P$ is feasible'' becomes true (we relaxed too much), or ``$P$ is feasible'' is also false. In the latter case, there exists a short, easy witness that proves ``$P$ is not feasible'', and hence also proves that $\Psi$ is false. In other words, the relaxation map sends instances of an $\NP$-complete problem to instances of a problem in $\NP \cap \coNP$. Understandably, then, not all false $\Sigma_1$ statements $\Psi$ will remain false after relaxation, but when they do, we are guaranteed to have short proofs of falsity. 

\medskip
Now suppose there exists a particular $\Pi_1$ statement $\Psi$ we wish to prove. Maybe it is a tautological combinatorial principle, or even a complexity lower-bound. We then write $\neg\Psi$ as a HQFP $Q$ and relax it into the SDFP $P$ and try to prove that $P$ is false by constructing a solution for $P'$. If we succeed, it then follows that $\neg\Psi$ is false, i.e., $\Psi$ is true, and this is witnessed by a short, easy-to-verify object. Or maybe, encouraged by the guaranteed existence of a short proof of $P'$, we may try to prove that $P$ is false in another way, without necessarily aiming for a ``canonical'' proof.

In this paper, we report on what happens when we carry out the above approach, for two different $\Pi_1$ statements: the pigeonhole principle, and communication complexity lower-bounds. The whole approach can be seen as trying to express $\Pi_1$ statements in a very simple proof system, and we will have more to say below on the connection with proof complexity.

\mysubsubsection{The Quantum Pigeonhole Principle}

We formalize the negation of the pigeonhole principle (PHP) as a HQFP in a way similar to what has been done before in the polynomial calculus proof system (e.g. \cite{razborov1998lower}), by having $n m$ variables $v_{ij}$, indicating whether pigeon $i$ went to hole $j$, requiring that $\sum_j v_{ij}^2 = 1$, $v_{i j} \cdot v_{i j'} = 0$ (pigeon $i$ does not go into two holes) and $v_{i j} \cdot v_{i' j} = 0$ (no two pigeons go to the same hole). A small difference to the previous formalization is required so that the program is homogenous, but the crucial difference is that we then relax the homogenous quadratic program to a semidefinite program. The quantum pigeonhole principle (QPHP) is then the negation of this relaxed negation of the PHP, and therefore it is necessarily a stronger statement, i.e. it implies the PHP. 

In the language of linear algebra, the QPHP states the following. Suppose we take a unit vector $\lambda$ and decompose it into $h$ orthogonal vectors, in $p$ different ways:
\begin{align*}
    \sum_{j=1}^h v_{i,j} & = \lambda \tag{$i = 1, \ldots, p$}\\
    \braket{v_{i, j} \mid v_{i, j'}} & = 0 \tag{$\forall i \forall j \neq j'$}
\end{align*}
(i.e. we orthogonally distribute each of $p$ equal ``pigeons'' among $h$ ``holes''). Then if $h < p$, there will always exist a ``hole'' $j \in [h]$ and two ``pigeons'' $i \neq i'$, such that $\braket{v_{i, j} \mid v_{i', j}} \neq 0$

It is also possible to state the QPHP using only quantum language, as follows. Suppose that we have $p$ quantum registers $1,\ldots, p$, which are all initialized in the same state: $\ket{\psi_1} = \cdots = \ket{\psi_p}$. We then apply an $h$-outcome measurement to each of the registers. The specific measurement which we make may be different for different registers. Regardless, the measurements cause the registers to collapse to possibly-different states $\ket{\phi_1}, \ldots, \ket{\phi_p}$. The QPHP states that, if $h < p$, there will always exist an outcome $j$ and two registers $i \neq i'$, such that there is a non-zero probability of obtaining the same outcome $j$ after measuring both registers $i$ and $i'$, and when this happens the resulting states $\ket{\phi_i}$ and $\ket{\phi_{i'}}$ are not orthogonal.

In \Cref{sec:qphp}, we prove three versions of this statement. In \Cref{sec:qphp-AM-GM}, we prove it using the AM-GM inequality. The proof is short and simple, but can only show non-orthogonality if the number of holes $h$ is significantly smaller than the number $p$ of pigeons, namely $h < \frac{1}{4} \sqrt p$. In \ref{sec:qphp-geometric}, we prove a strong generalization of the QPHP, which allows for the initial states $\psi_i$ to be different for different pigeons, and gives a tight lower-bound on the maximal overlap $\braket{\phi_i \mid \phi_{i'}}$, as a function of the average initial overlap $\frac{1}{p (p-1)} \sum_{i \neq i'} \braket{\psi_i \mid \psi_j}$. Finally, in \Cref{sec:qphp-dual}, we provide one of the short ``canonical'' proofs which are guaranteed to exist via duality. Namely, we derive a feasibility problem dual to the relaxed negation of the QPHP, and give an explicit solution for it.

\mysubsubsection{Connection with Natural Proofs and Proof Complexity}

\Cref{sec:sgamma2,sec:quantumlab} of the paper apply the above approach to statements of the form ``the communication complexity of $f$ is $> k$''. This is a $\Pi_1$ statement when the two-player function (or relation) $f$ is given as a communication matrix. Indeed, the statement ``the communication complexity of $f$ is $\le k$'' is easily seen to be $\Sigma_1$, by taking an existential quantifier over all protocols. 

When starting this project over two years ago, our naive hope was that maybe we could use semidefinite programming to prove some new lower-bounds against Karchmer--Wigderson games. This would follow a long, successful tradition of using convex optimization to prove lower-bounds: approximate and threshold degree \cite{bun2022approximate}, the quantum adversary bound \cite{li2021general}, and the $\gamma_2$ norm \cite{{LMSS07}} are all examples of complexity measures which relax classical measures in one way or another, and which have been used to prove lower-bounds on classical and quantum query complexity, communication complexity, proof complexity, data structures, \textit{etc}. 

But also, such attempts have systematically failed against more powerful computational models, such as Boolean circuits and formulas. A famous result by Karchmer, Kushilevitz and Nisan \cite{karchmer1995fractional} (CCC'92) shows that the smooth partition bound is small for every Karchmer--Wigderson relation.\footnote{This result was generalized by Hrubeš et al. \cite{hrubevs2010convex}, to show that any ``convex rectangle measure'' assigns small complexity to KW relations.} A smooth partition is a linear-programming relaxation of an integer program defining the partition number, which is the smallest number of monochromatic rectangles needed to partition a communication matrix, itself a relaxation of the number of leaves in a communication protocol. KKN were hoping \cite[page 2]{karchmer1995fractional} that such a linear relaxation would help them prove lower bounds on the communication complexity of Karchmer--Wigderson relations, and hence lower bounds on the depth of Boolean formulas. Sadly, they could only report on a failed attempt. A few years later, Razborov and Rudich presented their natural-proofs barrier \cite{razborov1997natural} (STOC'95), which strongly suggests that no linear programming relaxation, or any other efficiently computable quantity, will be able to approximate the computational complexity of any model which is powerful enough to contain pseudorandom function generators.

One might think that the natural proofs barrier applies here, but one would be subtly mistaken. Indeed, semidefinite feasibility is not known to be in $\PTIME$, and there is significant evidence that it is actually a hard problem \cite{tarasov2008semidefinite}\footnote{We are referring to a result by Tarasov and Vyalyi, showing that any algorithm for solving semidefinite feasibility could be used to compare numbers represented by arithmetic circuits. Note that here we do not have a bound on the degree of the circuits, which could then be exponential in the size of the circuit, and efficiently comparing the (possibly doubly-exponentially large) numbers output by such arithmetic circuits is an old, longstanding problem, which includes the infamous sum-of-square-roots problem as a special case, and which may well not be polynomial-time solvable.}
However, semidefinite feasibility \textit{is} in $\NP(\R) \cap \coNP(\R)$, and one can formulate a sufficiently strong cryptographic conjecture, which would imply the existence of a natural proofs barrier that would apply here.\footnote{In a follow-up to his and Razborov's natural-proofs result \cite{rudich1997super}, Rudich extended the natural proofs barrier as follows. Clearly no pseudorandom generator can fool $\NP$, since in order to distinguish a random from a pseudorandom string, one can always guess the preimage. In his work, Rudich considers the possibility that there exist pseudorandom generators that fool $\coNP$. In other words, he conjectures that there exist pseudorandom generators such that no family of short, efficiently recognizable ($\ZO^\ast$-valued) objects serve to witness that a given string is \textit{not} pseudorandom (not even for a non-negligible fraction of all strings). One could extend Rudich's conjecture from $\coNP$ distinguishers to $\coNP(\R)$ distinguishers: that no family of low-dimensional, efficiently recognizable real-valued objects could serve to witness that a given string is not pseudorandom. Under this generalization of Rudich's conjecture, it necessarily follows that all attempts at approximating complexity using semidefinite feasibility are doomed to fail, since the real-valued dual witnesses could ultimately be used to witness that a given string is not pseudorandom.}. One could argue whether such a strong cryptographic assumption is believable, but such a discussion will soon become irrelevant to our purpose.

Because shortly after we started working on this, Austrin and Risse \cite{austrin2023sum} showed that the sum of squares proof system (SOS) needs degree roughly $S$ to prove, for any given function $f$, that $f$ needs circuits of size $S$. Carefully checking their proof, and doing the necessary adaptations, it also follows from their results that SOS needs degree roughly $2^d$ to prove a depth-$d$ lower-bound on Boolean formulas. And it is possible to formalize the Karchmer--Wigderson theorem in the SOS proof system, and hence it will follow that SOS needs degree roughly $2^d$ to prove a lower-bound of $d$ on the communication complexity of a Karchmer--Wigderson relation. However, a satisfying instance of a semidefinite feasibility problem can be verified in the SOS proof system using a degree-2 proof! It must then follow that, if we define a communication model using our approach, i.e., we generalize communication complexity by formalizing the existence of a deterministic protocol using a HQFP, and relaxing it to a SDFP, then either (1) the proof that our communication model is stronger than the usual deterministic protocols cannot be show by low-degree SOS proofs (``our formalization of communication complexity is weird''), or (2) our generalized communication model can actually solve every single Karchmer--Wigderson game. This follows because our generalized model is such that we always have short, low-degree proofs of any true lower-bound.

The above considerations lead to \textit{no-go theorem}, which (informally stated) says that, unless a weird ``high-degree'' ingredient is introduced somewhere in the formalization (of communication complexity as a HQFP), the model obtained by semidefinite relaxation will be too strong, and will solve all Karchmer--Wigderson relations. We found it remarkable that statements in proof complexity about lengths of proofs imply the existence of algorithms for Karchmer--Wigderson relations, in a large class of computational models!

This no-go theorem should be seen as a natural, expected consequence of the results of Austrin and Risse. But, perhaps owing to our inexperience with proof complexity, it was not easy for us to verify that the formal connection is really there, and so in \Cref{sec:nogo} we provide a formalization and proof of this no-go theorem (\Cref{thm:nogo}).

In light of such a result, one should ask: is it still worthwhile to pursue the stated aim, of formalizing communication complexity using a HQFP, relaxing to a SDFP, and studying the resulting communication model? As it turned out, we went through this formalize-and-relax process twice, and in both times there was something interesting to be found on the other side. In one case we ended up with a communication model which a kind of structured version of the well-known and well-studied $\gamma_2$ norm. In the other case, we ended up with a communication model that has a natural, physical description, and understanding this model revealed to us something non-obvious about the nature of quantum measurements.

And although it is now expected that both models can solve all Karchmer--Wigderson relations, the above no-go theorem is not constructive, and gives us no explicit description of the algorithms in the model that actually do this. So it is still worthwhile to give a constructive proof of this, i.e., to find algorithms in the model for solving Karchmer--Wigderson relations. 

\medskip\noindent
We now describe the two models.

\mysubsubsection{\texorpdfstring{$\gamma_2$}{gamma-2} Communication}

Our first attempt to express a communication protocol as an HQFP proceeds as follows.
A two-party communication protocol computing a function $f: \cX \times \cY \to \{0,1\}$ induces a tree structure over rectangles of $\cX \times \cY$, describing the nodes in the protocol tree, which player speaks at which node, and which child will follow for each possible message sent at each node --- i.e., everything needed to define a protocol, except for the specific combinatorial rectangles which are associated with each node. 
Now, for a given tree structure $\cT$ we will design an HQFP $Q_{\text{protocol}}$ such that solutions to $Q_{\text{protocol}}$ are in 1-1 correspondence with protocols of structure $\cT$ for computing $f$, i.e., associations of rectangles to the nodes of $\cT$ that form a valid protocol for computing $f$. Then, there will exist a protocol with structure $\cT$ computing the function $f$ if and only if there is a solution to $Q_{\text{protocol}}$.

The details are in \Cref{ssec:GammaProgram}, but the central feature of $Q_{\text{protocol}}$ is that we have one variable $A_{t}(x)$ for each node $t$ and each input $x$ of Alice, and one variable $B_{t}(y)$ for each node $t$ and each input $y$ of Bob, so that the product $A_t(x) \cdot B_t(y)$ is to be interpreted as an indicator of whether the input $(x,y)$ belongs to the rectangle associated with node $t$. Given this particular choice of variables, the constraints are the most obvious possible.

In \Cref{ssec:GammaProgram}, we describe the HQFP $Q_{\text{protocol}}$ and relax it into an SDFP $P_{\text{protocol}}$, as discussed above.
It will follow, then, that one can view a solution of $P_{\text{protocol}}$ as a generalization of a protocol computing $f$.
We refer to the solutions of $P_{\text{protocol}}$ as ``$\gamma_2$ protocols'' due to their relationship with the $\gamma_2$ norm.

The $\gamma_2$ norm is a matrix norm, which was introduced to the TCS community by Linial et al.~\cite{LMSS07}. We will give the formal definition in \Cref{sec:sgamma2} in \Cref{ssec:GammaProgram}, but for now it suffices to say the following. If we take any the matrix $M$ which is $1$ inside a combinatorial rectangle, and $0$ outside, i.e. it is the indicator function of a combinatorial rectangle, then its $\gamma_2$ norm is exactly $1$. Indeed, one can define a HQFP $Q_{\text{rectangle}}$ whose feasibility is equivalent to the statement ``the matrix $M$ is the indicator matrix of some combinatorial rectangle'', and then relax it to a SDFP $P_{\text{rectangle}}$, so that the feasibility of $P_{\text{rectangle}}$ is equivalent to the statement ``$\gamma_2(M) \le 1$''. So it is fair to say that matrices with subunit $\gamma_2$ norm are a semidefinite relaxation of the notion of a combinatorial rectangle. Our HQFP $Q_{\text{protocol}}$ is then obtained by putting together HQFPs of the form $Q_{\text{rectangle}}$, one for each node in the protocol structure $\cT$, with some additional constraints to ensure that the rectangles associated with a node and its children form a valid message for that node.

In other words, $Q_{\text{protocol}}$ describes the constraints required so that a collection of rectangles has the structure of a protocol. Analogously, $P_{\text{protocol}}$ will impose a similar structure to a collection of matrices with subunit $\gamma_2$ norm. This is why we call ``$\gamma_2$ protocols'' to solutions to $P_{\text{protocol}}$, and ``$\gamma_2$ communication'' to the resulting communication model.

So, how powerful are $\gamma_2$ protocols? In \Cref{ssec:GammaLB} we prove a discrepancy lower bound for the $\gamma_2$ communication complexity. So, for example, the inner-product mod-2 function cannot be computed by $\gamma_2$ protocols of depth $o(n)$.

On the other hand, in \Cref{ssec:EQGammaProtocol} we design a two-round $\gamma_2$ protocol for the equality function, where in the first round Alice sends 1 of 11 possible messages and in the second round Bob replies with 1 bit. By the usual binary-search reduction of Karchmer--Wigderson relations to equality, it follows that every Karchmer--Wigderson relation can be solved in $\gamma_2$ communication $O(\log n)$.


\mysubsubsection{Quantum Lab Protocols}

Let us begin by contrasting what we will do in \Cref{sec:quantumlab} with what we have done in \Cref{sec:sgamma2}. As before, we will formulate the existence of a two-party deterministic protocol computing $f$ as a HQFP. In the previous section, we had two variables $A_t(x)$ and $B_t(y)$ for every node $t$ in the protocol tree $\cT$, and every input $(x,y) \in X\times Y$. The different starting point here is that our HQFP will have a single variable $C_t(x,y)$. Before, we interpreted $A_t(x) \cdot B_t(y) \in \ZO$ as indicating whether $(x,y)$ is in the rectangle associated with $t$. Now, instead, we let $C_t(x,y) \in \ZO$ indicate the same thing. The constraints of the new program are again designed in the most obvious way possible, so as to ensure that the HQFP is feasible if and only if $f$ can be computed by a deterministic communication protocol with the given structure. We will then relax the quadratic program to a semidefinite program and see what we get.

Notice the difference in approach. In the previous section we had a rationale to obtain the semidefinite program which we obtained: we wanted to add structure to a known rectangle-like notion, the $\gamma_2$ norm, in a similar way to how protocols are obtained from rectangles. The previous model can thus be justified on technical grounds, as, \textit{what happens when we add structure to the $\gamma_2$ norm?}. In contrast, the work in this section began by simply trying to make a different set of constraints where the variables are organized differently. It was surprising to us, then, to discover that the resulting computational model has a natural, functional definition, which can be described as follows.

Alice and Bob work in a idealized quantum laboratory. In this quantum lab, they can prepare any quantum state that they wish, and they can manipulate it without any error using the available equipment. With this lab at their disposal, they play the following ``communication'' game. Before they receive their respective inputs, Alice and Bob are allowed to go to the lab together, and prepare a quantum system in some initial state $\ket{\psi_0}$, known to both. Then they are separated, Alice receives an input $x \in X$, and Bob receives an input $y \in Y$. Their goal is now to compute $f(x,y)$. For this purpose, Alice and Bob take separate turns going to the lab. When one of them is in the lab, she or he is allowed to perform a binary measurement on the quantum system, and write the outcome, $0$ or $1$, in the lab's whiteboard. The measurement that is performed by each player can depend on the input known to her or him, and on the \textit{transcript} of all previous measurement outcomes, which are written in the whiteboard. The question is then: how many times (in the worst case) must Alice and Bob visit the lab, in order to discover $f(x,y)$? Note that, unusually for a quantum model, here we require that Alice and Bob learn $f(x,y)$ without any error. To this minimum number we could call the \textit{(deterministic) quantum-lab complexity of $f$.} 

The first observation is that Alice and Bob can simulate a deterministic protocol. Indeed, if they prepare the two qubit state $\ket{01}$, then Alice can ``communicate'' a $0$ to Bob by measuring the first qubit, which will always be $0$, and she can communicate a $1$ by measuring the second qubit. So this shows, for example, that the two-round quantum-lab complexity of any Boolean function is at most $n+1$, since Alice can communicate their entire input to Bob, and Bob replies with $f(x,y)$. The question is now: can Alice and Bob do better if the lab is quantum? \footnote{As a passing remark, we note that we could have given the very same definition above, but for a \textit{classical laboratory}. In a classical lab, Alice and Bob can prepare any classical state (a distribution over basic states), and measurements correspond orthogonal projections on a fixed basis, followed by renormalization in the $\ell_1$ norm. One can get a sense for the model by imagining a lab made of mechanical contraptions that toss random coins and pull strings and send metal spheres rolling down rails and so on. Every day Alice or Bob go to the lab, and do a ``orthogonal measurement in a fixed basis'', meaning they partition the set of possible outcomes into two, and ask in which of the two sets is the state of the lab. (One can imagine that they look through a window to learn one bit about the state.) As it turns out, this model corresponds to the completely positive relaxation of our HQFP, and it can be shown that, if we require the output to be correct with probability at least $\eps \in [0,1]$, our program gives us exactly the $\eps$-error randomized communication complexity.}

On our part, after discovering this functional description of the model, we were possessed of the following strong intuition: \textit{the measurement that a player is allowed to make depends on her/his input and on the current state $\ket{\psi}$, but if it is a binary measurement, then it cannot reveal more than $1$ bit of information about her/his input, and hence there should exist some kind of information-theoretic lower-bound on the quantum-lab complexity.} We were hoping to prove, at least, that the quantum information complexity \cite{touchette2015quantum} would serve as a lower-bound for quantum-lab complexity.

This intuition, however, turned out to be spectacularly wrong. We were first encouraged by a proof that equality requires $\Omega(n)$ bits to be computed by a two-round quantum-lab protocol (in a two-round protocol Alice does several measurements, then Bob, after which the answer must be known). A simple proof of this, using the quantum pigeonhole principle, appears in \Cref{sec:quantum-lab-2-round-lower-bound}. This early result was encouraging but highly misleading. After a lot of effort trying to prove a lower-bound for 3 rounds, we eventually discovered that equality has a 3-round quantum lab protocol with $O(1)$ complexity. Perhaps this is not surprising, since the information complexity of equality is $O(1)$, and the no-go theorem implies that KW-games will all be easy in the model.

However, a small adjustment to the same protocol revealed that \textit{every Boolean function} can be solved in three rounds with $O(1)$ measurements! This, we did find very surprising, as did everyone to whom we explained the result. On the nature of quantum measurements, we can conclude that although each measurement in the quantum lab can only reveal one bit of information (about $x$ to Bob, and about $y$ to Alice), measurements alone can manipulate the state so that \textit{any} joint bit $f(x,y)$ is revealed.

Perhaps here the reader is tempted to try and solve the puzzle themselves, for which we give the structure of the protocol as a clue: Alice goes to the lab, makes a 1-bit measurement depending on $x$, then Bob goes and makes a two-bit measurement depending on $y$ and on the outcome of Alice's measurement, and then Alice returns to the lab, and does one final 1-bit measurement (depending on $x$ and the previous outcomes) whose answer will be exactly $f(x,y)$. This same protocol structure works for computing any Boolean function $f$, it is only the chosen measurements that vary. Our solution appears in \Cref{sec:quantum-lab-model-collapse}.

\mysubsubsection{Future directions}

We have proposed a specific way of generalizing $\Pi_1$ statements. We would like to suggest a few questions for the future.

\begin{itemize}
    \item What other combinatorial principles can be relaxed by the above approach? An interesting avenue is to investigate the several different combinatorial principles that lie at the basis of TFNP classes, write each of them down by a HQFP, relax to a SDFP, and see what is there. Does this work often? Do we get interesting quantum versions of known principles? In other words, we have an (incomplete) proof system for $\Sigma_1$ and $\Pi_1$ statements, such that every statement or its negation has short proofs. What other interesting theorems can it prove?

    \item Could we take a similar approach using lattice duality? E.g. we could try to express $\Sigma_1$ statements using the closest vector problem (which is NP-hard), and then relax the approximation factor to $\sqrt{n}$, which puts the problem in $\NP \cap \coNP$ \cite{aharonov2005lattice}, and see if the statement is still meaningful.

    \item Could we take a similar approach using stochastic games? Here we have no suggestion for which NP-hard problem could be used, that has stochastic games as a relaxation. 

    \item We have proven that any KW game can be solved by $\gamma_2$ protocols of depth $\le \log(11\times 2) \cdot \log n \approx 4.45 \log n$, i.e. size $\approx n^{4.45}$. However, the best known lower-bounds on formula size are (roughly) cubic \cite{stad1998shrinkage}. Although it seems like a long shot, perhaps one can still prove a super-cubic lower-bound on formula size by constructing an explicit dual to the SDFPs defining $\gamma_2$ protocol for the Karchmer--Wigderson game of some explicit function?

    \item We chose not include the details in this write-up, but it is possible to relax the HQFPs using the completely positive cone, instead of the semidefinite cone. The semidefinite cone is the cone of matrices of inner products of vectors in the entire space, and the completely positive cone is the cone of matrices inner products of vectors in the non-negative orthant. When doing so, one systematically obtains \textit{randomized} versions of the statements, instead of \textit{quantum} versions. We did not explore this much, because completely positive feasibility is still an NP-complete problem. But it might be interesting to see what one gets by such relaxation: maybe new randomized versions of known combinatorial principles?

\end{itemize}

\newpage
\section{Preliminaries}


\medskip\noindent
We assume that the reader is familiar with Boolean formulas, Boolean circuits, and communication complexity. Recall that the Karchmer--Wigderson theorem states that the minimum depth of a Boolean circuit or formula that computes a given Boolean function $f:\ZO^n\to\ZO$, is equal to the communication complexity of the Karchmer--Wigderson relation $\KW_f$, where Alice is given $x \in f^{-1}(1)$ and Bob is given $y \in f^{-1}(0)$, and they wish to find some $i$ such that $x_i \neq y_i$. A proof can be found in \cite[Section 10.2, see also Chapters 5 \& 10]{KN97}.

\subsubsection*{Discrepancy}
A well-known lower bound for the communication complexity of several models is the discrepancy of a function $f$ (see, e.g., \cite[Section 3.5]{KN97}). Informally speaking, if a function $f$ has a small discrepancy, then any large rectangle $R$ is almost balanced (the number of 1's and 0'z in $R$ is roughly the same).

\begin{definition}\label{def:discrepancy}
    Let $f: \cX \times \cY \to \{0,1\}$ be a function, $R \subseteq \cX \times \cY$ be a rectangle, and $\mu$ be a distribution over $\cX \times \cY$.
    Denote
    \[
    \disc_\mu(R,f) = \Bigl| \Pr_{(x,y) \sim \mu}\bigl[f(x,y) = 0, (x,y) \in R\bigr] -  \Pr_{(x,y) \sim \mu}\bigl[f(x,y) = 1, (x,y) \in R\bigr] \Bigr|.
    \]
    The \emph{discrepancy} of $f$ according to $\mu$ is 
    \[
    \disc_\mu(f) = \max_R \disc_\mu(R,f), 
    \]
    where the maximum is over all rectangles $R \subseteq \cX \times \cY$.
    The \emph{discrepancy} of $f$ is
    \[
    \disc(f) = \min_\mu \disc_\mu(f).
    \]
\end{definition}

\subsubsection*{The notation $\Sigma_1$, $\Pi_1$, $\NP$, $\coNP$, $\NP(\R)$ and $\coNP(\R)$}

We use $\Sigma_1$ and $\Pi_1$ to informally refer to existential and universal statements, respectively. When precision is required, we will use $\NP$ and $\coNP$ for the well-known Boolean complexity classes, and $\NP(\R)$ and $\coNP(\R)$ for the low-degree Blum-Shub-Smale (BSS) variants. The definition is rather technical, but here it is: The BSS model is a variant of the multitape Turing machine where each tape cell holds a real number, and at each step the machine can read the numbers under some of the tape heads, apply a multilinear polynomial to the numbers (which polynomial depends on the state), and write the result back; it can also branch on comparisons between cells, or between a cell and a fixed constant. The low-degree polytime variant imposes the restriction that the computation is syntactically polynomial-degree, meaning that the machine runs in polynomial time, but furthermore: at any given time, for each possible branching that happened before time $t$, the contents of each cell will be a polynomial in the real numbers $x_1, \ldots, x_n$ placed in the tape at the start of the computation, and we then require that the degree of this polynomial to also be $\poly(n)$-bounded (in principle the degree after $t$ steps could be $2^t$ by repeated squaring). Then $\NP(\R)$ is the class of languages $L \subseteq \R^\ast$ for which there exists a low-degree polytime BSS machine $M$ such that $(x_1, \ldots, x_n) \in L \iff \exists (y_1, \ldots, y_m) \in \R^{\poly(n)} M(\bar x, \bar y) = 1$.\footnote{If the reader is wondering why the low-degree restriction, it is because polytime BSS machines without degree constraints can do things that seem too powerful, such as factoring \cite{shamir1979factoring}.}

\subsubsection*{Conic feasibility problems}

Here we discuss duality for conic feasibility problems. 

\begin{definition}
    Let $S, T \subseteq \cH$ denote arbitrary, non-empty subsets of a finite-dimensional real Hilbert space $\cH$. I.e., $\cH = \R^d$ for some $d$, but equipped with a possibly non-standard inner-product $\langle \cdot, \cdot \rangle_\cH$. 

\begin{itemize}
    \item We let $\cl(S)$, the \emph{closure} of $S$, be the set of points $x \in \cH$ for which there exists a sequence $(x_i)_{i \in \N}$ of points in $S$ such that $\|x_i - x\|_\cH \to 0$. We call $S$ \emph{closed} if $S = \cl(S)$.
    
    \item   For $\lambda \in \R$, we denote $\lambda S = \{ \lambda s \mid s \in S \}$, $S + T = \{s + t \mid s \in S, T \in S \}$.

    \item A set $S$ is called \emph{convex} if it contains all the line segments between its points, i.e., $\alpha S + (1-\alpha) S \subseteq S$ for every $0 \le \alpha \le 1$.

    \item $S$ is called a \emph{cone} if $\lambda S \subseteq S$ for all $\lambda \ge 0$. A cone $S$ will be convex iff $S + S \subseteq S$. A cone is called \emph{pointed} if $S \cap -S = \{0 \}$.

     For example, a subspace is a closed convex cone. The non-negative orthant is a closed, convex, pointed cone.

    \item The \emph{polar of $S$}, denoted $S^\ast$, is the set
    \[
    S^\ast = \{ y \in \cH^\ast \mid \forall x \in S\; \langle x, y \rangle_\cH \ge 0 \}.
    \]

\end{itemize}
\end{definition}

\medskip\noindent
\textbf{Examples.} The following sets are closed, convex, pointed cones:
\begin{itemize}
    \item The non-negative orthant $\R^n_{\ge 0}$. It is self-dual, meaning $(\R^n_{\ge 0})^\ast = \R^n_{\ge 0}$.

    \item The set of \textit{positive semidefinite} $n\times n$ matrices $\PSD_n$, which is a subset of the space $\R^{\frac{n(n+1)}{2}}$ of symmetric matrices, with the inner product $\braket{M, N} = \sum_{i,j} M_{i,j} N_{i,j}$.
    
    This set can be alternatively characterized as the set of symmetric matrices with non-negative eigenvalues, or as the set of \textit{Gram} matrices, i.e., matrices equal to $A A^\tr$ for some $n\times m$ matrix $A$, i.e., matrices $M$ of inner products, given by a family of vectors $a_1, \ldots, a_n$ (the rows of $A$), so that $M_{i j} = \braket{a_i \mid a_j}$. It is also self-dual.
    
    \item The set of \textit{completely positive} $n\times n$ matrices $\CP_n \subseteq \R^{\frac{n(n+1)}{2}}$ (also symmetric). This set can be alternatively characterized as the set of symmetric matrices with non-negative eigenvalues whose eigenvectors are entrywise non-negative in the standard basis, or the matrices of the form $M = A A^\tr$ for some $n\times m$ matrix $A$ with non-negative entries, or matrices of inner-products of vectors in the non-negative orthant. Its dual cone is the cone of co-positive matrices, but we will not define it or mention it again.
\end{itemize}

\begin{definition}
    Let $\cK \subseteq \R^n$ be a closed, convex, pointed cone. A conic feasibility problem over $\cK$ is defined by a linear map $\cA:\R^n\to\R^m$ and a point $b \in \R^m$. The problem asks whether there exists an element $Z \in \cK$ such that $\cA(Z) = b$. Such a $Z$ is called a \textit{solution}. If a solution exists, we say that the problem $(\cA,b)$ is \textit{feasible}, or \textit{satisfiable}, and otherwise we say that the problem $(\cA,b)$ is \textit{infeasible}, or unsatisfiable.
\end{definition}

\medskip\noindent
\textbf{Examples.} A linear feasibility problem is a conic feasibility over the non-negative orthant. A semidefinite feasibility problem (SDFP) is a conic feasibility problem over the cone of positive semidefinite matrices.

\subsubsection*{Duality for SDFPs}

The feasibility of a conic feasibility problem over $\cK$ is an existential statement, in fact it it is a $\Sigma_1$ statement provided that $Z \in \cK$ is itself a $\Sigma_1$ statement. A remarkable general fact about conic feasibility is that the \textit{infeasibility} of a conic feasibility problem can \textit{also} be formulated as a $\Sigma_1$ statement . This fact is really non-obvious: it was first proven for SDFPs by Ramana \cite{ramana1997exact} (see \cite{lourencco2023simplified} for a simplified treatment), and for general conic feasibility by \cite{liu2018exact}. This result is an instance of the general phenomenon of \textit{convex duality}, which is also the source of the $\NP \cap \coNP$ inclusions of approximate lattice problems \cite{aharonov2005lattice} and stochastic games (e.g. \cite{polyhedra_equiv_mean_payoff,issac2016jsc}, although here convexity is over the tropical semiring).

The precise statement which is equivalent to the infeasibility of a conic optimization problem, the so called \textit{dual problem}, is not easy to describe in general. It is usually a $\Sigma_1$ statement with another cone as an oracle, usually the polar cone $\cK^\ast$ over a larger dimension, or another related cone.

However, in some cases, a dual problem exists which \textit{is} easy to describe, whose flavor is similar to Farkas' lemma of linear feasibility, and indeed gives exactly Farkas' lemma when applied to the non-negative orthant. It was proven long ago by Ben-Israel:

\begin{theorem}[Ben-Israel \cite{benisrael_linear_1969}]\label{thm:ben-israel} Let $\cK \subseteq \R^n$ be a closed convex cone. Let $\cA: \R^n \to \R^{m}$ be a linear map, and $b \in \R^m$. Suppose that $\ker(\cA) + \cK$ is a closed set (\textit{Ben-Israel's criterion}). Then exactly one of the following two things are true:
\begin{enumerate}[(i)]
  \item\label{itm:alternative1} Either there exists $Z \in \cK$ such that $\cA(Z) = b$,
  \item\label{itm:alternative2} Or there exists $w \in \R^m$ such that $\cA^\tr(w) \in \cK^\ast$ and $\langle w, b \rangle < 0$.
\end{enumerate}
\end{theorem}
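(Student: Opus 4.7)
The plan is to prove this in two parts: first show that alternatives (i) and (ii) are mutually exclusive (the ``easy direction''), and then show that if (i) fails then (ii) must hold, using Ben-Israel's closure condition together with a separating-hyperplane argument.

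For mutual exclusivity, suppose both hold simultaneously. Then we would have $Z \in \cK$ with $\cA(Z) = b$, and $w$ with $\cA^\tr(w) \in \cK^\ast$. By the definition of the polar cone,
\[
\langle w, b \rangle = \langle w, \cA(Z) \rangle = \langle \cA^\tr(w), Z \rangle \ge 0,
\]
contradicting $\langle w, b \rangle < 0$. This part requires no hypothesis beyond the definition of $\cK^\ast$.

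For the interesting direction, assume (i) fails, i.e., $b \notin \cA(\cK)$, and construct $w$ satisfying (ii). The key intermediate claim is that $\cA(\cK)$ is a \emph{closed} convex cone in $\R^m$, and this is exactly where Ben-Israel's criterion is used. To see this, pick any linear complement $V$ of $\ker(\cA)$ in $\R^n$, so that the restriction $\cA|_V : V \to \mathrm{Im}(\cA)$ is a linear isomorphism. Denoting by $\pi : \R^n \to V$ the projection along $\ker(\cA)$, one checks that $\pi(\cK) = V \cap (\cK + \ker(\cA))$, so the assumption that $\cK + \ker(\cA)$ is closed yields that $\pi(\cK)$ is closed in $V$. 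Since $\cA|_V$ is a linear isomorphism (hence a homeomorphism) onto $\mathrm{Im}(\cA)$ and $\cA(\cK) = \cA|_V(\pi(\cK))$, we conclude that $\cA(\cK)$ is closed in $\mathrm{Im}(\cA)$, and therefore in $\R^m$ (since $\mathrm{Im}(\cA)$ itself is closed). Convexity and the cone property of $\cA(\cK)$ are inherited directly from $\cK$.

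With closedness in hand, we apply the standard separating-hyperplane theorem in $\R^m$ to the point $b$ and the closed convex set $\cA(\cK)$: there exists $w \in \R^m$ with $\langle w, b \rangle < \inf_{y \in \cA(\cK)} \langle w, y \rangle$. Because $\cA(\cK)$ is a cone containing $0$, the infimum is at most $0$; and if it were strictly negative, scaling by arbitrarily large positive reals would drive it to $-\infty$, contradicting the strict separation. Hence the infimum equals $0$, giving $\langle w, b \rangle < 0$ and $\langle w, \cA(Z) \rangle \ge 0$ for all $Z \in \cK$. Rewriting $\langle w, \cA(Z) \rangle = \langle \cA^\tr(w), Z \rangle$ and invoking the definition of $\cK^\ast$, we obtain $\cA^\tr(w) \in \cK^\ast$, which is exactly alternative (ii).

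The main obstacle is the closure step: the separating-hyperplane argument is standard and essentially formal, but the reduction from closedness of $\ker(\cA) + \cK$ in $\R^n$ to closedness of $\cA(\cK)$ in $\R^m$ is precisely where the hypothesis must be used with care. Without Ben-Israel's criterion, $\cA(\cK)$ can fail to be closed even for very tame cones $\cK$ (e.g., the PSD cone with a bad $\cA$), and then the separating hyperplane can only be \emph{weakly} separating, which would yield a $w$ with $\langle w, b \rangle \le 0$ rather than the strict inequality demanded by (ii)—and it is this subtlety that explains why, in the general SDP setting, an unconditional strong-duality statement like Ramana's is more delicate than the Ben-Israel version stated here.
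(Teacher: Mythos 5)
The paper does not supply a proof of this theorem; it is stated as a cited result of Ben-Israel \cite{benisrael_linear_1969}, so there is no in-paper argument to compare against. Your proof is correct and is the standard route: the criterion that $\ker(\cA)+\cK$ is closed is used exactly to show that $\cA(\cK)$ is closed (via $\pi(\cK)=V\cap(\cK+\ker(\cA))$ for a complement $V$ of $\ker(\cA)$, followed by the homeomorphism $\cA|_V$), after which strict separation of $b$ from the closed convex cone $\cA(\cK)$ produces $w$ with $\langle w,b\rangle<0$ and $\cA^\tr(w)\in\cK^\ast$, while the mutual-exclusivity direction is the usual weak-duality computation $\langle w,b\rangle=\langle\cA^\tr(w),Z\rangle\ge0$. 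The only places that merit a slightly more careful phrasing are routine: the identity $\pi(\cK)=V\cap(\cK+\ker(\cA))$ (both inclusions are immediate once one notes $\pi$ kills $\ker(\cA)$ and fixes $V$), and the argument that $\inf_{y\in\cA(\cK)}\langle w,y\rangle$ must equal $0$ (it is $\le 0$ since $0\in\cA(\cK)$, and if any $\langle w,y_0\rangle<0$ then positive scaling sends the infimum to $-\infty$, which is incompatible with the strict separation from the finite value $\langle w,b\rangle$). Your closing remark about why closedness of $\cA(\cK)$ is the crux, and why without it one only gets weak separation, is accurate and is exactly the distinction that makes Ramana's unconditional SDP duality nontrivial.
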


A sufficient condition for the closure of $\ker(\cA) + \cK$ is given by the following lemma. It appears in a paper by Berman and Ben-Israel \cite{berman_more_1971}, and there the proof is attributed to A.~Charnes and A.~Lent.

\begin{lemma}[Berman--Ben-Israel criterion]\label{thm:berman-ben-israel} If $L \subseteq \R^n$ is a linear subspace, $S \subseteq \R^n$ is a closed convex cone, and $L \cap S$ is a linear subspace, then $L + S$ is closed. Hence, a sufficient condition for Ben-Israel's criterion to hold is that $\ker(\cA) \cap \cK$ is a linear subspace, for example, $\ker(\cA) \cap \cK = \{0\}$.
\end{lemma}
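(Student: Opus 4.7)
The plan is to prove the first (main) claim directly, by a standard compactness-and-rescaling argument in finite dimensions. The ``hence'' assertion will follow immediately: if $\ker(\cA) \cap \cK$ is a linear subspace (for instance $\{0\}$), we apply the main result with $L = \ker(\cA)$ (a closed subspace of $\R^n$) and $S = \cK$, concluding that $\ker(\cA) + \cK$ is closed.

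So let $V = L \cap S$, which is a linear subspace by hypothesis, and take any sequence $z_k = \ell_k + s_k$ with $\ell_k \in L$, $s_k \in S$, and $z_k \to z$. The goal is to show $z \in L + S$. First, I would do a \emph{normalization step}: since $V$ is a subspace contained in both $L$ and $S$, I can replace any decomposition $z_k = \ell_k + s_k$ by $z_k = (\ell_k + v) + (s_k - v)$ for any $v \in V$, because $\ell_k + v \in L$ (as $V \subseteq L$) and $s_k - v \in S + S \subseteq S$ (as $-v \in V \subseteq S$ and $S$ is a convex cone). Choosing $v$ to be the orthogonal projection $P_V(s_k)$, I obtain a new decomposition with $s_k \perp V$. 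This is precisely where the hypothesis that $V$ is a subspace (and not merely a cone) is used: without $-v \in S$, we could not subtract $v$ from $s_k$ while staying in $S$.

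Next I would argue \emph{boundedness} of $\|s_k\|$. Suppose not; by passing to a subsequence, assume $\|s_k\| \to \infty$, and set $\hat s_k = s_k / \|s_k\|$ and $\hat \ell_k = \ell_k / \|s_k\|$. Then $\hat s_k + \hat \ell_k = z_k/\|s_k\| \to 0$ and $\|\hat s_k\| = 1$. By compactness of the unit sphere, a further subsequence satisfies $\hat s_{k_j} \to \hat s$ with $\|\hat s\| = 1$; since $S$ is a closed cone and each $\hat s_{k_j} \in S$, we get $\hat s \in S$, and since $\hat s_k \perp V$ for every $k$, also $\hat s \perp V$. Furthermore $\hat \ell_{k_j} \to -\hat s$, and $L$ closed gives $-\hat s \in L$, hence $\hat s \in L$ (as $L$ is a subspace). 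So $\hat s \in L \cap S = V$, contradicting $\hat s \perp V$ together with $\|\hat s\| = 1$.

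With boundedness in hand, the \emph{conclusion} is routine: extract a convergent subsequence $s_{k_j} \to s \in S$ (using $S$ closed), after which $\ell_{k_j} = z_{k_j} - s_{k_j} \to z - s \in L$ (using $L$ closed), exhibiting $z = (z-s) + s \in L + S$. The main obstacle is really the normalization step: it is the only place where the structural hypothesis on $V = L \cap S$ is used, and without it the rescaled limit $\hat s$ could lie in $V$ with $\|\hat s\| = 1$ and no contradiction would be reached. Once the right representatives $s_k$ are chosen, the rescaling/compactness argument is classical.
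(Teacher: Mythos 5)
Your proof is correct. Note that the paper does not supply its own proof of this lemma; it cites Berman and Ben-Israel, where the argument is attributed to Charnes and Lent, so there is no in-paper proof to compare against line by line. Your compactness-and-rescaling argument is a clean, self-contained route to the result, and you have correctly isolated the crux: the normalization step, in which $s_k$ is replaced by $s_k - P_V(s_k)$ (and $\ell_k$ by $\ell_k + P_V(s_k)$), is precisely where the hypothesis that $V = L \cap S$ is a \emph{linear subspace}---and not merely the convex cone it automatically is---gets used, since one needs $-P_V(s_k) \in V \subseteq S$ to keep the adjusted summand inside $S$. Once the representatives are orthogonal to $V$, the rescaling argument ruling out $\|s_k\| \to \infty$ (the limit direction $\hat s$ would lie in $L \cap S = V$ while also being a unit vector orthogonal to $V$) and the subsequent Bolzano--Weierstrass extraction are standard and carried out correctly, as is the derivation of the ``hence'' clause from the main claim by taking $L = \ker(\cA)$, $S = \cK$.
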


In all the SDFPs we will consider, we will have the simplest of conditions $\ker(\cA) \cap \cK = \{0\}$.

\subsubsection*{HQFPs, and their relaxation}

A SDFP asks whether there exists a positive semidefinite (symmetric) $n\times n$ matrix $Z$ such that $\cA(Z) = b$, where $\cA$ is a linear map in the entries of $\cZ$ and $b \in \R^m$. In other words, $\cA(Z) = (\braket{A_1, Z}, \ldots, \braket{A_m, Z})$ for some symmetric real matrices $A_1, \ldots, A_m$. Since positive semidefinite matrices are matrices of inner-products, we can rephrase this question as follows: We wish to know whether there exist vectors $a_1, \ldots, a_n \in \R^n$ obeying a set of linear equations in their inner-products $\braket{a_i, a_j}$.

We can now consider the same problem, with the additional constraint that the vectors $a_1, \ldots, a_n$ are scalars (i.e. come from the same $1$-dimensional subspace). This is equivalent to requiring that the solution $Z$ has rank $1$. With this additional constraint, we have a system of linear equations in the quadratic products $a_i \cdot a_j$, and we wish to know whether there exists some choice of scalars that satisfy the system. We call such a problem a \textit{Homogenous Quadratic Feasibility Problem} (HQFP). Naturally, we can take \textit{any} HQFP and relax it to a SDFP by droping the rank-1 restriction, i.e. by replacing scalars with vectors and products with inner-products.

\section{The Quantum Pigeonhole Principle}\label{sec:qphp}

The pigeonhole principle (PHP) asserts that placing $p$ pigeons in $h < p$ holes will always result in there being a hole with more than one pigeon. In its simplest form, the quantum pigeonhole principle (QPHP) is the following:

\begin{theorem}[QPHP]\label{thm:qphp}
  Let $\{\lambda\} \cup \{ v_{i,j} \mid i\in [p], j \in [h] \} \subseteq \cH$ be a family of vectors in a Hilbert space $\cH$, such that
  \begin{align*}
    \|\lambda\|^2 & = 1\\
    \sum_{j=1}^h v_{i,j} & = \lambda & \forall i \in [p]\\
    \langle v_{i,j}, v_{i,j'} \rangle & = 0 & \forall i\, \forall j \neq j'\\
  \end{align*}
  I.e., each family $V_i = \{v_{i,j} \mid j \in [h]\}$ decomposes the same unit vector $\lambda$ as a sum of $h$-many orthogonal vectors (\emph{We have $p$ copies of $\lambda$ --- the ``pigeons'' --- and divide each pigeon among $h$ ``holes''}). Suppose that $h < p$. Then, there exists $j \in [h]$ and $i \neq i'$ in $[p]$, such that
  \[
    \langle v_{i,j}, v_{i',j} \rangle \neq 0
  \]
  (\emph{one of the holes must have more than one pigeon}).
\end{theorem}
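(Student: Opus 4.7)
The plan is to proceed by contradiction. Assume that $\langle v_{i,j}, v_{i',j}\rangle = 0$ for every $j \in [h]$ and every pair $i \neq i'$, so that for each fixed $j$ the family $v_{1,j}, v_{2,j}, \ldots, v_{p,j}$ is pairwise orthogonal (in addition to each row family $\{v_{i,j}\}_j$ being orthogonal by hypothesis). The idea is to evaluate $\bigl\|\sum_{i,j} v_{i,j}\bigr\|^2$ in two different ways, each one consuming one of the two available orthogonality hypotheses, and then compare.

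For the first evaluation, I would sum first over $j$, so that each inner sum collapses to $\lambda$ by the hypothesis $\sum_{j} v_{i,j} = \lambda$. Thus $\sum_{i,j} v_{i,j} = p\lambda$, which together with $\|\lambda\|^2 = 1$ gives $\bigl\|\sum_{i,j} v_{i,j}\bigr\|^2 = p^2$.

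For the second evaluation, write $S_j := \sum_{i} v_{i,j}$ and apply the Cauchy--Schwarz inequality in the form $\bigl\|\sum_{j} S_j\bigr\|^2 \le h \sum_{j} \|S_j\|^2$. The contradiction hypothesis makes the expansion of $\|S_j\|^2$ collapse to its diagonal, $\|S_j\|^2 = \sum_{i} \|v_{i,j}\|^2$. Swapping the order of summation and using the row-orthogonality hypothesis $\langle v_{i,j},v_{i,j'}\rangle = 0$ for $j \neq j'$ (together with $\sum_{j} v_{i,j} = \lambda$), I get $\sum_{j} \|v_{i,j}\|^2 = \|\lambda\|^2 = 1$ for each $i$, and therefore $\sum_{j} \|S_j\|^2 = p$. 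The two evaluations combine to give $p^2 \le h p$, i.e., $p \le h$, contradicting $h<p$.

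I do not anticipate any real obstacle: the proof is essentially a double-counting identity bridged by Cauchy--Schwarz, and the only real subtlety is ensuring each of the two orthogonality hypotheses is used exactly once. Observe, however, that the Cauchy--Schwarz step is quite lossy (it is tight only when all $\|S_j\|$ are equal), which suggests that replacing it by a finer expansion, in which the cross-terms $\langle v_{i,j},v_{i',j}\rangle$ are kept explicit rather than assumed to vanish, would upgrade this qualitative statement into a quantitative lower bound on $\max_{j,\,i\neq i'}|\langle v_{i,j},v_{i',j}\rangle|$, of the kind presumably obtained by the geometric proof promised later in the section.
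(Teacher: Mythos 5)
Your proof is correct, and it takes a genuinely different and more direct route than anything in the paper. The paper never proves \Cref{thm:qphp} by a dedicated direct argument: the AM-GM proof in \Cref{sec:qphp-AM-GM} only yields the theorem under the stronger hypothesis $h < \tfrac14\sqrt{p}$; the geometric proof in \Cref{sec:qphp-geometric} first symmetrizes over $S_p \times S_h$ and then invokes the ``flower'' \Cref{clm:qphp-flower} to control a family of equiangular vectors, which gives a tight quantitative bound but at the cost of a considerably longer argument; and the proof in \Cref{sec:qphp-dual} constructs an explicit dual certificate for the SDFP. Your double-counting of $\bigl\|\sum_{i,j} v_{i,j}\bigr\|^2$ — row-wise collapse to $p\lambda$ yielding $p^2$, versus column-wise bound $\|\sum_j S_j\|^2 \le h\sum_j\|S_j\|^2$ with $\sum_j\|S_j\|^2 = p$ under the contradiction hypothesis — is a genuinely self-contained, one-paragraph argument using only Pythagoras and the elementary $\|a_1+\dots+a_h\|^2 \le h\sum_j\|a_j\|^2$. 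What your approach buys is simplicity and full generality (the entire range $h<p$) in one stroke; what the paper's geometric proof buys in return is the sharp quantitative lower bound on $\max_{j,\,i\neq i'}\langle v_{i,j},v_{i',j}\rangle$. Your closing remark is also on point: if you drop the Cauchy–Schwarz step and instead expand $\sum_j\|S_j\|^2 = p + \sum_j\sum_{i\neq i'}\langle v_{i,j},v_{i',j}\rangle$ while keeping $p^2 = \|\sum_j S_j\|^2$ exact, averaging over the $h\,p(p-1)$ cross-terms recovers the same threshold $\frac{p-h}{h^2(p-1)}$ that \Cref{thm:qqphp} gives for identical initial pigeons, so your skeleton is indeed one route to the quantitative version as well.
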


The above theorem generalizes the Pigeonhole Principle. Indeed, it is equivalent to the Pigeonhole Principle if $\cH$ is one-dimensional. In this case, $\lambda = \pm 1$, and the last two equations imply that, for each $i \in [p]$, $v_{i,j} = \lambda$ for exactly one choice of $j$, and $v_{i,j} = 0$ for the remaining choices. It then follows that there exists a hole $j$ and two pigeons $i \neq i'$ with $v_{i,j} = v_{i,j'} = \pm1$.

We will now see, in \Cref{sec:qphp-conic-program}, how one obtains the QPHP as a semidefinite relaxation of the pigeonhole principle.\footnote{More precisely, as the negation of a semidefinite relaxation of the negation of the pigeonhole principle.} We will also show that the corresponding completely positive relaxation characterizes a probabilistic pigeonhole principle. Then in \Cref{sec:qphp-AM-GM} we prove the QPHP using the AM-GM inequality. This proof is not tight, in the sense that it does not provide the best possible lower-bound on the minimum inner-product $\langle v_{i,j}, v_{i',j} \rangle$ appearing in the theorem. So in \Cref{sec:qphp-geometric}, we prove a tight bound using a geometric argument. Now, we know that the theorem being true implies that there exists an explicit proof in ``canonical form'', namely, a solution to a certain semidefinite feasibility problem. So in \Cref{sec:qphp-dual}, we compute the dual of the semidefinite feasibility problem, and give a solution for it.

\subsection{The Semidefinite Feasibility Problem}\label{sec:qphp-conic-program}

In proof complexity, more specifically in a proof system called Polynomial Calculus, the negation of the pigeonhole principle is sometimes formalized as the following quadratic feasibility problem:
\begin{align}
    \text{There exist } \lambda & \in \R \notag\\
     v_{i,j} & \in \R  & \forall i\in[p], j\in[h]\notag\\
    \text{such that} & &\notag\\
    \lambda^2 & = 1 \notag\\
    \sum_{j=1}^h v_{i,j} & = \lambda & \forall i \in [p] \notag\\
    v_{i,j} \cdot v_{i,j'} & = 0 & \forall i \forall j \neq j'\notag\\
    v_{i,j} \cdot v_{i',j} & = 0 & \forall j \forall i \neq i'\notag
\end{align}
This system is not homogeneous, so it is not immediate how to express it as a SDFP. Nonetheless, we can attempt to naively relax this program to higher dimensions, by replacing real numbers with vectors, and products with inner products. This gives us exactly the negation of the QPHP (\Cref{thm:qphp}):
\begin{align}
    \text{There exists a vector space } & V \notag\\
    \text{ and vectors } \lambda & \in V \notag\\
     v_{i,j} & \in V  & \forall i\in[p], j\in[h]\notag\\
    \text{such that} & &\notag\\
    \|\lambda\|^2 & = 1 \notag\\
    \sum_{j=1}^h v_{i,j} & = \lambda & \forall i \in [p] \label{eq:qphp-sum}\\
    \braket{v_{i,j} \mid v_{i,j'}} & = 0 & \forall i \forall j \neq j'\label{eq:qphp-orth}\\
    \braket{v_{i,j} \mid v_{i',j}} & = 0 & \forall j \forall i \neq i'\label{eq:qphp-php}
    \\
    \intertext{
Again it is not immediate that this is a SDFP, since \ref{eq:qphp-sum} is not directly an equation about inner-products. However, we can replace \ref{eq:qphp-sum} with:
}
    \sum_{j=1}^h \|v_{i,j}\|^2 & = \|\lambda\|^2 & \forall i \in [p] \label{eq:qphp-sum-a}\tag{\ref*{eq:qphp-sum}a}\\
    \sum_{j=1}^h \braket{ v_{i,j} \mid \lambda} & = \|\lambda\|^2 & \forall i \in [p] \label{eq:qphp-sum-b} \tag{\ref*{eq:qphp-sum}b}
\end{align}
To see the equivalence, notice that for each fixed $i \in [p]$, \ref{eq:qphp-orth} states that the $v_{i,j}$ are orthogonal. Under such orthogonality, it is obvious that \ref{eq:qphp-sum} implies \ref{eq:qphp-sum-a} and \ref{eq:qphp-sum-b}, by Pythagoras' Theorem. Conversely, let $\lambda'_i = \sum_j v_{i,j}$. Then \ref{eq:qphp-sum-b} states that $\langle \lambda'_j, \lambda \rangle = \|\lambda\|^2$ and, under orthogonality, Pythagoras' Theorem says $\|\lambda'_i\|^2 = \sum_{j} \|v_{i,j}\|^2$, and so (3) is saying that $\|\lambda'_j\|^2 = \|\lambda\|^2$. These two together imply, by the equality case of Cauchy-Schwarz, that $\lambda'_j = \lambda$.

It is now clear that we have a semidefinite feasibility problem. It can also be seen that taking constraints (1)-(4), and further restricting $\lambda$, $v_{i,j}$ to have dimension $1$, gives us a HQFP, which is equivalent to the negation of the PHP. We will see in \Cref{sec:qphp-dual} that the problem is nice enough that it has a simple dual problem, which is satisfiable if and only if the QPHP is true, and we will also provide an explicit solution to the dual. But for now, we prove the QPHP theorem by other means.


\subsection{A Non-tight Proof Using the AM-GM Inequality}\label{sec:qphp-AM-GM}

We prove a stronger statement that implies a non-tight QPHP, namely, that QPHP holds provided that the number $h$ of holes is sufficiently smaller than the number $p$ of pigeons.

\begin{theorem}[Weak Quantitative QPHP]\label{thm:weak-quantitative-qphp}
    Let $\psi_1, \ldots, \psi_p$ be vectors in a Hilbert space, and for each $i \in [p]$ let $\psi_{i,0}, \psi_{i,1}$ give an orthogonal decomposition of $\psi_i$:
    \[
    \psi_i = \psi_{i,0} + \psi_{i,1} \qquad \psi_{i,0} \bot \psi_{i,1}.
    \]
    Then
    \[
    \left|\sum_{i,j} \braket{\psi_i \mid \psi_j}\right| \le 2 \left(\left|\sum_{i,j} \braket{\psi_{i,0} \mid \psi_{j,0}}\right| + \left|\sum_{i,j} \braket{\psi_{i,1} \mid \psi_{j,1}}\right|\right).
    \]
\end{theorem}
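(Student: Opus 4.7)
The plan is to recognize that each of the three quantities appearing in the inequality is actually a squared norm of a summed vector, and then to apply the elementary fact $\|a+b\|^2 \le 2\|a\|^2 + 2\|b\|^2$, which is itself a direct consequence of the AM-GM inequality $2\langle a,b\rangle \le \|a\|^2 + \|b\|^2$. This explains why the theorem is labelled as using AM-GM even though no explicit statement about pigeons or holes appears in it.

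Concretely, I would proceed as follows. First, for any finite collection of vectors $\{u_i\}_i$ one has the identity $\sum_{i,j} \braket{u_i \mid u_j} = \|\sum_i u_i\|^2$, so in particular all three sums appearing in the statement are non-negative real numbers and the absolute value bars are purely decorative. Introducing the abbreviations
\[
S := \sum_i \psi_i, \qquad S_0 := \sum_i \psi_{i,0}, \qquad S_1 := \sum_i \psi_{i,1},
\]
the inequality to prove reduces to
\[
\|S\|^2 \;\le\; 2\|S_0\|^2 + 2\|S_1\|^2.
\]
Since $\psi_i = \psi_{i,0} + \psi_{i,1}$ for every $i$, summing over $i$ gives $S = S_0 + S_1$, so the inequality becomes the familiar $\|S_0 + S_1\|^2 \le 2\|S_0\|^2 + 2\|S_1\|^2$.

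Finally, expanding the squared norm gives $\|S_0 + S_1\|^2 = \|S_0\|^2 + \|S_1\|^2 + 2\,\mathrm{Re}\braket{S_0 \mid S_1}$, and then AM-GM applied to $\|S_0\|$ and $\|S_1\|$ (together with Cauchy-Schwarz in the complex case) yields $2\,\mathrm{Re}\braket{S_0 \mid S_1} \le 2|\braket{S_0 \mid S_1}| \le 2\|S_0\|\|S_1\| \le \|S_0\|^2 + \|S_1\|^2$, which closes the argument.

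There is no serious obstacle; the only thing to be careful about is to notice that the orthogonality $\psi_{i,0} \bot \psi_{i,1}$ is \emph{not} used at all in this weak form, which already hints that the bound will not be tight and in particular cannot recover the full QPHP unless $h$ is much smaller than $p$. The later subsections presumably exploit the orthogonality of each individual decomposition to get tighter bounds.
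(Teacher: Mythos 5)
Your proof is correct and is essentially the paper's argument, just written in cleaner vector notation: the paper phrases the same computation via matrices $A, A_0, A_1$ whose rows are the $\psi_i, \psi_{i,0}, \psi_{i,1}$, with $|A\cdot A^\tr|$ denoting the absolute sum of entries, which is exactly your $\|S\|^2$ etc., and then closes with the same scalar AM-GM step. You also correctly note, as the paper does explicitly after the theorem statement, that the orthogonality $\psi_{i,0}\perp\psi_{i,1}$ is never used in this weak version.
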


Note two things about the above. First, the theorem does not need to assume orthogonality of the decomposition. Second, this is a pigeonhole principle where we only have two holes. However, by repeated application, we can obtain a weak version \Cref{thm:qphp}, provided there are sufficiently-many more pigeons than holes, as follows. We split the holes into two sets of the same size $\pm 1$, repeatedly, until we are left only with sets containing a single hole. This gives us a (partial) binary tree, and we apply \Cref{thm:weak-quantitative-qphp} repeatedly, starting at the root and then following whichever set of holes that has higher total sum-of-inner-products $\left|\sum_{i,j} \braket{\psi_i \mid \psi_j}\right|$. At the start, the sum-of-inner products is $p^2$, since there are $p$ ``pigeons'', each being the same unit vector. By \Cref{thm:weak-quantitative-qphp}, at the end the sum is at least $\frac{p^2}{4^{\lceil \log h \rceil}} \ge \frac{p^2}{4 h^2}$. Since the decomposition is orthogonal, the norms of $\psi_{i,j}$ cannot increase, and so the contributions of the squared norms $\braket{\psi_{i,j} \mid \psi_{i,j}}$ sum to at most $p$. Hence, if the total sum of all inner products is greater than $p$, two distinct pigeons must have non-zero inner-product. This will happen whenever $h < \frac{1}{4} \sqrt{p}$. So we cannot place $p$ pigeons into fewer than $\frac{1}{4} \sqrt{p}$ holes without two pigeons overlapping. Of course, this is not optimal. But \Cref{thm:weak-quantitative-qphp} has a short proof that is easy to check. This proof was suggested to us by Carlos Florentino, who approached \Cref{thm:qphp} as a fun puzzle in linear algebra.


\begin{proof}
    Let $A$ be the matrix whose rows are $\psi_1, \ldots, \psi_p$. Then
    \[
    \left|\sum_{i,j} \braket{\psi_i \mid \psi_j}\right| = |A \cdot A^\tr|.
    \]
    Let $A_b$, $b\in\ZO$, be the matrix whose rows are $\psi_{1,b}, \ldots, \psi_{p,b}$. Then
    \[
    |A \cdot A^\tr| = |(A_0 + A_1) \cdot (A_0^\tr + A_1^\tr)| \le |A_0 \cdot A_0^\tr| + |A_0 \cdot A_1^\tr| + |A_1 \cdot A_0^\tr| + |A_1 \cdot A_1^\tr|,
    \]
    and the theorem follows from the following \textit{AM/GM inequality for matrices}. For any two matrices $B, C$ of compatible dimension:
    \[
    |B \cdot C^\tr| \le \frac{|B \cdot B^\tr| + |C \cdot C^\tr|}{2}
    \]
    (note that the absolute value is only needed if the two matrices being multiplied are not the same). The proof of this is a direct calculation, using the AM/GM inequality for reals. Let $\beta(x)$, $\gamma(y)$ index the rows of $B$ and $C$, respectively. Then:
    \begin{align*}
        |B \cdot C^\tr| & = \left|\sum_{x y} \braket{\beta(x) \mid \gamma(y)}\right|\\
        & = \left|\sum_i \left(\sum_x \beta(x)_i\right) \left(\sum_y \gamma(y)_i\right)\right|\\
        & \le \sum_i \frac{(\sum_x \beta(x)_i)^2 + (\sum_y \gamma(y)_i)^2}{2}    \tag*{(AM/GM)}\\
        & = \frac{1}{2}\sum_{x x'} \braket{\beta(x) \mid \beta(x')} + \frac{1}{2}\sum_{y y'} \braket{\gamma(y) \mid \gamma(y')}\\
        & = \frac{|B \cdot B^\tr| + |C \cdot C^\tr|}{2}. \tag*{\qedhere}
    \end{align*}
\end{proof}
 
\subsection{A Tight Proof Using a Geometric Argument}\label{sec:qphp-geometric}

In Theorem \ref{thm:qphp} we consider the vectors $\{\lambda\} \cup \{ v_{i,j} \mid i\in [p], j \in [h] \} \subseteq \cH$ such that the vectors $v_{i,j}$ form an orthogonal decomposition of the unit vector $\lambda$. The theorem then claims that there must be $j \in [h]$ and $i \neq i' \in [p]$ such that $\ip{v_{i,j}}{v_{i',j}} \neq 0$. Here $\lambda$ represents the initial state of each pigeon and $v_{i,j}$ the part of pigeon $i$ in hole $j$. In this section we will consider a more general case where the initial states can be different. That is, we have initial states $\{v_i\}_{i \in [p]}$ which are all unit vectors in $\cH$. The vectors $\{v_{i,j}\}_{j \in [h]}$ are an orthogonal decomposition of $v_i$. What we will show is the following.

\begin{theorem}[Quantitative QPHP]\label{thm:qqphp}
  Let $\{v_i \mid i \in [p]\} \cup \{ v_{i,j} \mid i\in [p], j \in [h] \} \subseteq \cH$ be a family of vectors in a finite-dimensional Hilbert space $\cH$, such that
  \begin{align*}
    \|v_i\|^2 & = 1 & \forall i \in [p]\\
    \sum_{j=1}^h v_{i,j} & = v_i & \forall i \in [p]\\
    \langle v_{i,j}, v_{i,j'} \rangle & = 0 & \forall i\, \forall j \neq j'\\
  \end{align*}
  I.e., each family $V_i = \{v_{i,j} \mid j \in [h]\}$ decomposes $v_i$ as a sum of $h$-many orthogonal vectors. Let 
  \[
  \beta = \frac{1}{p(p-1)} \sum_{i \neq i'} \ip{v_i}{v_{i'}}
  \]
  (the average overlap between the initial states of the pigeons).
  Then, there exists $j \in [h]$ and $i \neq i'$ in $[p]$, such that
  \[
    \ip{v_{i,j}}{v_{i',j}} \geq \frac{1}{h^2} \left( \beta - \frac{h-1}{p-1} \right).
  \]
  Furthermore, for all choices of $\beta \ge 0, p \ge h \ge 1$, this is the best possible lower-bound holding for all such families of vectors.
\end{theorem}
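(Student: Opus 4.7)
The plan is to reduce the statement to a single application of convexity followed by two averagings. Define the aggregate vectors $W := \sum_{i=1}^p v_i$ and $w_j := \sum_{i=1}^p v_{i,j}$, and note that $W = \sum_{j=1}^h w_j$. Expanding directly and using $\|v_i\|^2 = 1$ together with the definition of $\beta$,
$$\|W\|^2 = \sum_{i,i'}\langle v_i, v_{i'}\rangle = p + p(p-1)\beta.$$

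Next, apply the convexity inequality $\|\sum_j w_j\|^2 \le h \sum_j \|w_j\|^2$ (equivalently, Jensen's inequality for the squared norm) to obtain $\sum_j \|w_j\|^2 \ge (p + p(p-1)\beta)/h$. On the other hand, expanding
$$\|w_j\|^2 = \sum_i \|v_{i,j}\|^2 + \sum_{i \neq i'} \langle v_{i,j}, v_{i',j}\rangle$$
and summing over $j$, the Pythagorean identity $\sum_j \|v_{i,j}\|^2 = \|v_i\|^2 = 1$ (valid by the orthogonal-decomposition hypothesis) collapses the diagonal contribution to $p$, giving $\sum_j \|w_j\|^2 = p + \sum_j \sum_{i \neq i'} \langle v_{i,j}, v_{i',j}\rangle$. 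Combining the two bounds yields
$$\sum_j \sum_{i \neq i'} \langle v_{i,j}, v_{i',j}\rangle \ge \frac{p(p-1)\beta - p(h-1)}{h},$$
and since there are $h \cdot p(p-1)$ terms on the left, some one of them is at least $\tfrac{1}{h^2}\bigl(\beta - \tfrac{h-1}{p-1}\bigr)$, as claimed.

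The main obstacle is tightness. The proof uses three inequalities — the convexity bound on $W = \sum_j w_j$, the averaging over $j$, and the averaging over pairs $i \neq i'$ — so I need a single configuration that saturates all three simultaneously. Equality forces $w_j = W/h$ for every $j$, $\|v_{i,j}\|^2 = 1/h$ for every $i,j$ (otherwise the implicit averaging inside $\|w_j\|^2$ is lossy), and all the diagonal overlaps $\langle v_{i,j}, v_{i',j}\rangle$ to equal the target value $\tfrac{1}{h^2}(\beta - \tfrac{h-1}{p-1})$. I would seek such a family via a highly symmetric rotated-basis setup: choose initial states $v_i$ forming an equal-angle configuration that realizes the desired $\beta$, assign to each pigeon an orthonormal basis via a symmetric group action (e.g.\ derived from a simplex, regular polytope, or mutually-unbiased-bases construction), and set $v_{i,j}$ to be the projection of $v_i$ onto the $j$-th basis vector. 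Verifying that such an assignment can indeed be realized and simultaneously saturates all three inequalities for the full range of parameters $\beta \ge 0$, $p \ge h \ge 1$ is the part I expect to require the most care.
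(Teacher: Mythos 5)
The lower-bound half of your proposal is correct and noticeably shorter than the paper's argument. You avoid the symmetrization step entirely and bypass the ``flower'' lemma: your argument rests on the identity $\|W\|^2 = p + p(p-1)\beta$ for $W = \sum_i v_i$, the exact identity $\sum_j\|w_j\|^2 = p + \sum_j\sum_{i\neq i'}\langle v_{i,j}, v_{i',j}\rangle$ (via Pythagoras for the diagonal), and the Cauchy--Schwarz/Jensen bound $\|W\|^2 = \bigl\|\sum_j w_j\bigr\|^2 \le h\sum_j\|w_j\|^2$. Combining these and averaging over the $h\cdot p(p-1)$ ordered pairs gives exactly the claimed bound. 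The paper instead symmetrizes the family under $S_p\times S_h$, projects each symmetrized $w_{i,j}$ onto the span of the symmetrized initial states, shows the orthogonal complements $x_{i,j}$ form a ``flower'' (constant norm, constant pairwise inner product), and applies a separate lemma on the minimum off-diagonal entry of a flower's Gram matrix. Your route buys brevity and dispenses with auxiliary machinery; the paper's route buys structure that it reuses to build the tight example.

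That structure is exactly where the gap in your proposal lies: the ``furthermore, this is the best possible'' clause. You correctly identify that saturation forces $w_j = W/h$ for every $j$ and that all $\langle v_{i,j},v_{i',j}\rangle$ with $i\neq i'$ must be equal. (Your third claimed tightness condition, $\|v_{i,j}\|^2 = 1/h$, is \emph{not} actually forced by your proof, since your expansion of $\sum_j\|w_j\|^2$ is an exact identity rather than an inequality; it does hold in the symmetric example, but the justification ``otherwise the averaging inside $\|w_j\|^2$ is lossy'' does not correspond to any inequality you used.) More importantly, you then stop at a plan --- choose an equal-angle family of $v_i$, choose per-pigeon bases via a symmetric group action, project --- without exhibiting the bases, verifying the overlaps, or checking the full range $\beta\ge 0$, $p\ge h\ge 1$. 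This is precisely where the paper does its real work: it writes an explicit decomposition $v_{i,j} = \tfrac{v_i}{h}\oplus\tfrac{t_i\otimes s_j}{\sqrt{1/h-1/h^2}}$ built from three flower families (one realizing the prescribed average initial overlap $\beta$, one of size $h$ summing to zero so that $\sum_j v_{i,j}=v_i$, one of size $p$ tuned so the per-hole overlap lands exactly on the bound) and verifies every constraint directly. Without such a construction you have proven the inequality but not its optimality.
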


\begin{proof}
    Our first step is a symmetrization. We will consider a new system of vectors in $\cH^{\oplus p!h!}$ defined as follows.
\begin{align*}
    w_i &:= \frac{1}{\sqrt{p!h!}} \cdot \operatornamewithlimits{\oplus}_{\sigma \in S_p,\tau \in S_h} v_{\sigma(i)}\\
    w_{i,j} &:= \frac{1}{\sqrt{p!h!}}  \cdot \operatornamewithlimits{\oplus}_{\sigma \in S_p,\tau \in S_h} v_{\sigma(i),\tau(j)}
\end{align*}

Note that $w_i$ is still of unit norm. Furthermore for each $\sigma \in S_p,\tau \in S_h$ and $i \in [p]$ the vectors $\{v_{\sigma(i),\tau(j)}\}_{j \in [h]}$ are still an orthogonal decomposition of $v_{\sigma(i)}$. Hence $\{w_{i,j}\}_{j \in [h]}$ continues to be an orthogonal decomposition of $w_i$.

These symmetrized vectors are very useful to us since (a) they have much more structure to work with and (b) the worst-case overlap between pigeons in a hole for the symmetrized pigeons is at most the worst-case overlap for the unsymmetrized pigeons. We elaborate on this in the following analysis of some important inner products of our symmetrized pigeons.
\begin{itemize}
    \item $\ip{w_i}{w_i} = 1$ for all $i$.
    \item $\ip{w_i}{w_{i'}} = \frac{1}{p!}\sum_{\sigma} \ip{v_{\sigma(i)}}{v_{\sigma(i')}}$, where the right hand side is the same expression for all $i \neq i'$.\\
    Note that this is exactly the value $\beta$.
    
    \item $\ip{w_{i,j}}{w_{i,j}} = \frac{1}{p!h!}\sum_{\sigma,\tau} \ip{v_{\sigma(i),\tau(j)}}{v_{\sigma(i),\tau(j)}}$ which is the same for all $i,j$.\\
    Since $\sum_j \ip{w_{i,j}}{w_{i,j}} = \ip{w_i}{w_i}$, this must equal $1/h$ for every $i,j$.
    \item $\ip{w_{i,j}}{w_{i',j}} = \frac{1}{p!h!}\sum_{\sigma,\tau} \ip{v_{\sigma(i),\tau(j)}}{v_{\sigma(i'),\tau(j)}}$ which is the same for all $i \neq i',j$.\\
    Note that this value is the overlap between any two pigeons in any hole for the symmetrized pigeons. It is clearly at most $\max_{i \neq i' \in [p],j \in [h]} \ip{v_{i,j}}{v_{i'j}}$, which is the worst-case overlap for the unsymmetrized pigeons.\\Since this is an important value, we will call this value $\alpha$.
    \item[] The following two inner products have no innate significance, but are used in the proof.
    \item $\ip{w_i}{w_{i,j}} = \frac{1}{p!h!}\sum_{\sigma,\tau} \ip{v_{\sigma(i)}}{v_{\sigma(i),\tau(j)}}$ which is the same for all $i,j$.
    \item $\ip{w_i}{w_{i',j}} = \frac{1}{p!h!}\sum_{\sigma,\tau} \ip{v_{\sigma(i)}}{v_{\sigma(i'),\tau(j)}}$ which is the same for all $i \neq i',j$.
\end{itemize}

Now we only need to prove that the value $\alpha = \ip{w_{i,j}}{w_{i',j}}$ must be at least $\frac{1}{h^2}\left(\beta - \frac{h-1}{p-1}\right)$. This proof will involve analyzing families of vectors having equal length and having the same overlap between them. We call such a family of vectors a ``flower'', and we will need the following properties.

\begin{claim}\label{clm:qphp-flower}
Let $r_1,\dots,r_d$ be vectors in a Hilbert space such that $\|r_i\|^2 = a$ for all $i$ and $\ip{r_i}{r_{i'}} = b$ for all $i \neq i'$. Then
\begin{enumerate}
    \item $b \geq -\frac{a}{d-1}$.
    \item Any $a \geq b$ satisfying the above is achievable.
    \item $\sum r_i = 0$ if and only if $b = -\frac{a}{d-1}$.
\end{enumerate}
\end{claim}

\begin{proof}
    The lower bound on $b$ can be easily seen using the fact that Gram matrices are the same as PSD matrices. The Gram matrix $M$ of the vectors $r_i$ is a $d \times d$ matrix with the diagonal entries being $a$ and the others being $b$. Letting $u$ denote the all-$1$ vector, $u^{\tr} M u = d(a + (d-1)b)$. Since this must be at least $0$, we have $b \geq -a/(d-1)$. 
    
    The second part can also be seen using the connection to PSD matrices. Let $M$ be the $d \times d$ matrix with $a$ on the diagonals and $b$ elsewhere. $M$ has $u$ as an eigenvector with eigenvalue $d(a + (d-1)b)$. Furthermore for each $i \in \{2,\dots,d\}$ the vector $e_1 - e_i$ is an eigenvector with eigenvalue $a-b$. These $d$ eigenvectors are independent, and so this shows that $M$ is PSD, and hence a Gram matrix of some vectors.

    For the third part, if $\sum r_i = 0$ then $\ip{r_1}{\sum r_i} = 0$. But $\ip{r_1}{\sum r_i} = a + b(d-1)$, so this implies $b = -a/(d-1)$. Conversely if $b = -a/(d-1)$ then for all $i$, $\ip{r_i}{\sum r_j} = a + b(d-1) = 0$ and so $\sum r_j$ must be orthogonal to each $r_i$. Hence $\sum r_j \perp \lspan(\{r_i\}_{i \in [d]})$ and so $\sum r_i = 0$.
\end{proof}

Now back to our proof. Let $W = \lspan(\{w_i\}_{i \in [p]})$. Fix a pigeon $i$. Note that the vector $\{\ip{w_{i,j}}{w_{i'}}\}_{i' \in [p]} \in \R^p$ is the same for all $j \in [h]$. Hence the projection to $W$, $\Pi_W w_{i,j}$, is the same vector for all $j$. But since $\sum_j w_{i,j} = w_i$, we know $w_{i,j} = w_i/h + x_{i,j}$ where $x_{i,j} \perp W$. 
And since $\ip{w_i}{w_i} = 1$ and $\ip{w_{i,j}}{w_{i,j}} = 1/h$, we know $\ip{x_{i,j}}{x_{i,j}} = 1/h - 1/h^2$

Now we consider two pigeons $i,i'$ in a hole $j$. We can expand $\ip{w_{i,j}}{w_{i',j}} = \ip{w_i/h + x_{i,j}}{w_{i'}/h + x_{i',j}} = \ip{w_i}{w_i'}/h^2 + \ip{x_{i,j}}{x_{i',j}}$. Hence $\ip{x_{i,j}}{x_{i',j}} = \alpha - \beta/h^2$.

This tells us that the vectors $\{x_{i,j}\}_{i \in [p]}$ form a flower. We can use Claim~\ref{clm:qphp-flower} with $d = p$, $a = 1/h-1/h^2$ and $b = \alpha - \beta/h^2$. Hence
\begin{align*}
    \alpha - \frac{\beta}{h^2} &\geq -\left(\frac{1}{h}-\frac{1}{h^2}\right)/(p-1)\\
    \implies \alpha &\geq \frac{1}{h^2}\left(\beta - \frac{h-1}{p-1}\right)
\end{align*}
which is what we set out to prove.

To prove the tightness of this result, we need to exhibit a tight example of distributing pigeons among pigeonholes. Let $\alpha$ denote the worst-case overlap of two pigeons in a hole. We want to exhibit an example where $\alpha = \frac{1}{h^2}\left(\beta - \frac{h-1}{p-1}\right)$. We follow the path set for us by the symmetrization.

We choose three sets of vectors:
\begin{itemize}
    \item $\{v_i\}_{i \in [p]}$ is a flower with $d = p, a = 1, b = \beta$. Such a flower ought to exist because if a setting of inital pigeons is possible with average overlap $\beta$, then their symmetrization will result in such a flower.
    \item $\{s_j\}_{j \in [h]}$ is a flower with $d = h, a = 1/h-1/h^2, b = -(1/h-1/h^2)/(h-1) = -1/h^2$. Such a flower exists by \Cref{clm:qphp-flower}.
    \item $\{t_i\}_{i \in [p]}$ is a flower with $d = p, a = 1/h-1/h^2, b = \alpha-\beta/h^2$. Such a flower exists since it can be seen that $b = -a/(p-1)$, and by \Cref{clm:qphp-flower}.
\end{itemize}

We now consider the initial pigeons $\{v_i\}_{i \in [p]}$ along with decompositions \[ v_{i,j} = \frac{v_i}{h} \oplus \frac{t_i \otimes s_j}{\sqrt{1/h - 1/h^2}}. \]

It is easy to verify that
\begin{itemize}
    \item $\braket{v_i,v_i} = 1$,
    \item $\braket{v_i,v_{i'}} = \beta$,
    \item $\sum_j v_{i,j} = v_i \oplus \frac{t_i}{\sqrt{1/h - 1/h^2}}\otimes (\sum_j s_j) = v_i$ (by \Cref{clm:qphp-flower}),
    \item $\braket{v_{i,j},v_{i,j'}} = 1/h^2 + (1/h-1/h^2)(-1/h^2)/(1/h-1/h^2) = 0$, and
    \item $\braket{v_{i,j},v_{i',j}} = \beta/h^2 + (\alpha - \beta/h^2)(1/h-1/h^2)/(1/h-1/h^2) = \alpha$.
\end{itemize}
\end{proof}

\subsection{An Explicit Proof via Duality}\label{sec:qphp-dual}

In \Cref{sec:qphp-conic-program}, we displayed a semidefinite feasibility problem equivalent to the negation of the QPHP. It is not hard to see that this feasibility problem obeys the criterion of Berman and Ben-Israel (\Cref{thm:berman-ben-israel}), since setting all constants of the equations equal to $0$, the initial vector $\lambda$ is $0$, and since all the other vectors are orthogonal decompositions of $\lambda$, the only possible solution is when the vectors are all $0$. And so it has a simple dual as in \Cref{thm:ben-israel}, which is computed so that the QPHP is true if and only if there exists  $W \in \PSD_{1 + p h}$ of the form:
 \begin{align*}
    \begin{pNiceArray}{c|c|ccc|c}
        y^{(0)} - 2\sum_i y^{(1a)}_i - \sum_i y^{(1b)}_i & \cdots & \cdots & y^{(1a)}_i & \cdots & \cdots \\\hline
        & \ddots &\mathsf{diag}(\dots & y_{i, 1, j}^{(2)} & \dots) & \dots\\\hline
        & & \ddots && y^{(3)}_{j i i'}&\\
        & & & y^{(1b)}_i &&\\
        && y^{(3)}_{j i i'}&&\ddots&\\\hline
        & & &&& \ddots
    \end{pNiceArray}
\end{align*}
with $y_0 < 0$. (The numbers inside the superscript parenthesis correspond to the equations in the primal.) Since such an explicit proof exists, one should try to find it. One such dual solution is:
\[
  W = \begin{pNiceMatrix} 
    1 &-\frac{1}{h} & \dots & -\frac{1}{h} & \dots & -\frac{1}{h} & \dots & -\frac{1}{h}\\
    & \frac{1}{h} & \Block{2-2}{ \frac{1}{h}} & &&  &  & \\
    &  & \Ddots&                              &&     \Block{2-2}{ }     &  & \\
    &  & & \frac{1}{h}                            &&               &  & \\
    &         &                             & &&  \frac{1}{h} & \Block{2-2}{ \frac{1}{h}} & \\
    &  &       &                              &&             & \Ddots & \\
    &  & &                                    &&               &  & \frac{1}{h}\\
  \end{pNiceMatrix},
\]
I.e., we set $y^{(0)} =1 - \frac{p}{h}$, $y^{(1a)}_i = -\frac{1}{h}$, $y^{(1b)}_i = y^{(3)}_{j i i'} = \frac{1}{h}$, and $y^{(2)}_{i j j'} = 0$. Hence $y^{(0)} < 0$ precisely when $p > h$. This dual solution is PSD, of rank 1! Since the variables $y^{(2)}_{i j j'}$ are equal to $0$, we have also proved that QPHP follows from constraints (1a), (1b) and (3), without needing the equation (2), which states that all parts of each pigeon are orthogonal.\footnote{Note that we used this orthogonality to apply the Berman and Ben-Israel criterion, without which we have no strong duality for the program given above. But weak duality is enough to conclude that the negation of the QPHP is false.}

\section{\texorpdfstring{$\gamma_2$}{gamma-2} Communication}\label{sec:sgamma2}

In this section, we introduce a generalization of deterministic protocols.
We call these generalized protocols ``$\gamma_2$ protocols'' because of a connection with the $\gamma_2$ norm of matrices.
The $\gamma_2$ norm was introduced to the TCS community by Linial et al.~\cite{LMSS07} to study sign matrices.
\begin{definition}
    Let $A \in \R^{m \times n}$ be a matrix. Then,
    \[
    \gamma_2(A) = \min \{ r(X) r(Y) \mid A = X Y^\tr \},
    \]
    where $r(M)$ is the largest $\ell_2$ norm of a row of the matrix $M$.
\end{definition}

One can see a matrix $A$ with $\gamma_2(A) \leq 1$ as a generalization of a rectangle.
Let $\cX$ and $\cY$ be sets and $R = A \times B$ be a rectangle, where $A \subseteq \cX$ and $B \subseteq Y$.
Let $M_R = \{0,1\}^{\cX \times \cY}$ be a matrix representing the rectangle $R$, i.e., $M_R[x,y] = 1$ if and only if $(x,y) \in R$.
We can decompose the matrix $M_R$ as $M_R = u v^\tr$, where $u \in \{0,1\}^\cX$ and $v \in \{0,1\}^\cY$ are the characteristic vectors of the sets $A$ and $B$, respectively.
Clearly, $r(u) = r(v) = 1$, if we take the vectors $u$ and $v$ as matrices with one column.
Thus, $\gamma_2(M_R) \leq 1$. From the Cauchy-Schwarz inequality, it follows that $\gamma_2(M_R) = 1$. 
Hence, one can think of matrices with $\gamma_2(M) \le 1$ as a generalization of the notion of a combinatorial rectangle.\footnote{In fact, it is possible to write down a HQFP whose solutions are precisely indicator matrices of combinatorial rectangles, and whose semdefinite relaxations are precisely matrices of subunit $\gamma_2$ norm. We leave this as an exercise, which should be very doable after reading \Cref{ssec:GammaProgram}.} This line of thought bore many fruits in the study of communication complexity, such as lower bounds, lifting theorems, the ability to approximate PP-communication-complexity using semidefinite programming, etc, see \cite{lee2009lower} for a survey.

However, a protocol is more than just a rectangle, it is a \textit{structured collection} of rectangles. One can then naturally wonder if we can extend this analogy to include protocols, meaning, we wish to have structured collections of matrices with subunit $\gamma_2$ norm, in a similar way to how protocols are structured collections of rectangles.

Our HQFP to SDFP approach gives us a natural way of doing this, which results in a computational model, which we call \textit{$\gamma_2$ communication.} 
In \Cref{ssec:GammaProgram}, we define a HQFP $Q_{f,\cT}$ whose solutions are exactly deterministic protocols with a certain structure $\cT$ for computing $f$, and we relax it into a SDFP $P_{f,\cT}$ whose solutions will then be deterministic $\gamma_2$ protocols with structure $\cT$ for computing $f$. In \Cref{ssec:GammaLB}, we show how the $\gamma_2$ protocols add structure to a collection of matrices with subunit $\gamma_2$ norm, and we prove a lower bound for $\gamma_2$ protocols using discrepancy.
In \Cref{ssec:EQGammaProtocol}, we present a $\gamma_2$ protocol of depth $O(1)$ for computing the equality function, which implies the existence of an $O(\log n)$ depth protocol for solving any Karchmer--Wigderson game.

\subsection{Definition of the Model}
\label{ssec:GammaProgram}

We now describe a family of algorithms that generalize communication protocols. 
For this purpose, we start by giving a definition of deterministic communication protocols. 
This definition is idiosyncratic, in that it is given by way of a quadratic feasibility problem. 
It will not be immediately obvious why the constraints are chosen the way they are, but it will be possible to see that this feasibility problem is completely equivalent to the usual definition of deterministic protocols.
We will then take that same quadratic feasibility problem, and relax it into a conic feasibility problem, where quadratic products are replaced with inner products. 
This will give us the definition of \emph{$\gamma_2$ protocols}.

A \emph{(two-player, binary) protocol structure} is a finite binary rooted ordered tree $\cT$. 
Each internal node $t \in \cT$ is either an \emph{Alice's node} or a \emph{Bob's node} (but not both). 
We will denote the root of $\cT$ by $\lambda$ (the empty binary string), and the two children of an internal node $t \in \cT$ are denoted by the binary strings $t0$ and $t1$ so that any node is denoted by the binary string which goes from the root to it. 

We then define a \emph{(two-player, binary) deterministic protocol} as a tuple $\pi = (\cX\times\cY, \cT, A, B)$, where $\cX\times\cY$ is a finite product set of \emph{inputs}, $\cT$ is a protocol structure, and $A$ and $B$ are a collection of maps $A_t:\cX\to\R$ and $B_t:\cY\to\R$, for each node $t$ of $\cT$, satisfying the following restrictions.

\begin{description}


\item[Root constraints.] For the root $\lambda$ of $\cT$ we will have the following constraints:
\begin{align*}
  A_\lambda(x) \cdot A_\lambda(x') & = 1 & \forall x,x' \in \cX\\
  B_\lambda(y)\cdot B_\lambda(y') & = 1 & \forall y,y' \in \cY\\
  A_\lambda(x) \cdot B_\lambda(y) & = 1 & \forall (x,y) \in \cX\times\cY
\end{align*}
These imply that every $A_\lambda(x)$ and $B_\lambda(y)$ are either all $1$, or all $-1$.

\item[Alice's nodes constraints.] Let $t \in \cT$ be an Alice's node with two children $t0, t1$. Think that Alice sends a bit $i$ to Bob when going into $t i$. We impose the following constraints.
\begin{align*}
  A_{t0}(x)^2 + A_{t1}(x)^2 & = A_{t}(x)^2 & \forall x \in \cX\\
  A_{t0}(x)\cdot A_t(x) + A_{t1}(x)\cdot A_t(x) & = A_{t}(x)^2 & \forall x \in \cX\\
  A_{t0}(x)\cdot A_{t1}(x) & = 0 & \forall x \in \cX\\
  B_{t0}(y)^2  & = B_{t}(y)^2 & \forall y \in \cY\\
  B_{t1}(y)^2  & = B_{t}(y)^2 & \forall y \in \cY\\
  B_{t0}(y)\cdot B_t(y) & = B_{t}(y)^2 & \forall y \in \cY\\
  B_{t1}(y)\cdot B_t(y) & = B_{t}(y)^2 & \forall y \in \cY
\end{align*}

Take these constraints together. 
By hypothesis, we assume that $A_{t}(x), B_{t}(y) \in \{0,\pm 1\}$, moreover the signs of every non-zero $A_{t}(x)$ and $B_{t}(y)$ are the same.
We conclude (from the last 4 constraints) that $B_{t0}(y) = B_{t1}(y) = B_{t}(y)$ for every $y$, and (from the first three constraints) that for each $x$ we must choose either $A_{t0}(x) = A_{t}(x)$ and $A_{t1}(x) = 0$, or $A_{t0}(x) = 0$ and $A_{t1}(x) = A_{t}(x)$. Thus, if we think of $A_{t'}$ and $B_{t'}$ as subsets of $\cX$ and $\cY$, respectively, these constraints mean that $A_{t0}$ and $A_{t1}$ form a partition of $A$, whereas $B$ is not modified. That is the usual definition of a protocol.

\item[Bob's nodes constraints.] The constraints for Bob's nodes are analogous to Alice's node constraints.
\end{description}

Let $f \subseteq \cX\times\cY\times\cZ$ be a relation with an output set $\cZ \subseteq \ZO^k$ and let $\pi = (\cX\times\cY,\cT,A,B)$ be a deterministic communication protocol. 
We say that \emph{$\pi$ computes $f$} if the depth of every leaf $\ell \in \cT$ is at least $k$, and the collections $A$ and $B$ satisfy the following constraints. 
\begin{description}
    \item[Computational constraints.]  For every leaf $\ell \in \cT$ of the form $\ell = t z$ for some $z \in \ZO^k$ we have the following constraints:
    \begin{align*}
    A_\ell(x) \cdot B_\ell(y) & = 0 & \forall (x,y) \in \cX \times \cY \text{ s.t. } (x,y,z) \notin f
\end{align*}
\end{description}
I.e., we consider the last $k$ bits of the protocol as the output.
In a standard language of protocols, the computational constraints assert the following.
Consider a leaf $\ell$ of $\cT$ that outputs $z \in \cZ$.
Let $R_\ell = C \times D$ be the rectangle associated with the leaf $\ell$ and let $(x,y,z) \not \in f$.
Then, it holds that $(x,y) \not \in R_\ell$.
If we think of $A_\ell$ and $B_\ell$ as characteristic functions of $C$ and $D$, then $A_\ell(x) \cdot B_\ell(y) = 0$ implies that $x \not \in C$ or $y \not \in D$.
That means $(x,y) \not \in R_\ell$ indeed.

The \emph{deterministic communication complexity} of $f$, denoted $\DCC(f)$, is the smallest depth of a protocol structure $\cT$ such that there exists a deterministic communication protocol $\pi = (\cX\times\cY,\cT,A,B)$ that computes $f$.

From the above, it follows that for every fixed protocol structure $\cT$, the predicate ``$f$ can be computed by a deterministic protocol with the protocol structure $\cT$'' can be written as a quadratic feasibility problem. 
As discussed in the introduction, we relax the quadratic feasibility problem into a positive semidefinite feasibility problem.

A \emph{(binary, two-player) $\gamma_2$ deterministic protocol} is a tuple $\pi = (\cX\times\cY, \cT, d, \alpha, \beta)$, where $\cX\times\cY$ is a finite product set of \emph{inputs}, $\cT$ is a protocol structure, and $\alpha$ and $\beta$ are collections of maps $\alpha_t:\cX\to\R^d$ and $\beta_t:\cY\to\R^d$, for each node $t \in \cT$, satisfying a number of constraints below -- that arise from relaxation of the standard protocol constraints described above, where we replace the multiplication by the standard inner product $\ip{\cdot}{\cdot}$ in $\R^d$.

\begin{description}


\item[Root constraints.] For the root $\lambda$ of $\cT$ we have the following constraints.
\begin{align*}
  \langle\alpha_\lambda(x),\alpha_\lambda(x')\rangle & = 1 & \forall x,x' \in \cX\\
  \langle\beta_\lambda(y),\beta_\lambda(y')\rangle & = 1 & \forall y,y' \in \cY\\
  \langle\alpha_\lambda(x),\beta_\lambda(y)\rangle & = 1 & \forall (x,y) \in \cX\times\cY
\end{align*}
This implies that every $\alpha_\lambda(x)$ and $\beta_\lambda(x)$ is the same unit-length vector (in $\ell_2$ norm).

\item[Alice's nodes constraints.] Let $t \in \cT$ be an Alice's node with children $t0, t1$. We impose the following constraints.
\begin{align*}
  \|\alpha_{t0}(x)\|^2 + \|\alpha_{t1}(x)\|^2 & = \|\alpha_{t}(x)\|^2 & \forall x \in \cX\\
  \langle\alpha_{t0}(x),\alpha_{t}(x)\rangle + \langle\alpha_{t1}(x),\alpha_{t}(x)\rangle & = \|\alpha_{t}(x)\|^2 & \forall x \in \cX\\
  \langle\alpha_{t0}(x),\alpha_{t1}(x)\rangle & = 0 & \forall x \in \cX\\
  \|\beta_{t0}(y)\|^2  & = \|\beta_{t}(y)\|^2 & \forall y \in \cY\\
  \|\beta_{t1}(y)\|^2  & = \|\beta_{t}(y)\|^2 & \forall y \in \cY\\
  \langle\beta_{t0}(y),\beta_{t}(y)\rangle & = \|\beta_{t}(y)\|^2 & \forall y \in \cY\\
  \langle\beta_{t1}(y),\beta_{t}(y)\rangle & = \|\beta_{t}(y)\|^2 & \forall y \in \cY
\end{align*}

The above constraints together are equivalent to saying (using the Cauchy-Schwarz inequality and the Pythagorean theorem) that for any $x \in \cX$ and $y \in \cY$, we have that $\alpha_{t}(x) = \alpha_{t0}(x) + \alpha_{t1}(x)$, $\alpha_{t0}(x)$ and $\alpha_{t1}(x)$ are orthogonal, and $\beta_{t}(y) = \beta_{t0}(y) = \beta_{t1}(y)$.

\item[Bob's nodes constraints.] The constraints for Bob's nodes are analogous to Alice's node constraints.
\end{description}

Let $f \subseteq \cX\times\cY\times\cZ$ be a relation with output set $\cZ \subseteq \ZO^k$ and let $\pi = (\cX\times\cY, \cT, d, \alpha, \beta)$ be a $\gamma_2$ protocol. 
We say that \emph{$\pi$ computes $f$} if the depth of every leaf $\ell \in \cT$ is at least $k$, and the collections $\alpha$ and $\beta$ satisfy the following constraints. 
\begin{description}
    \item[Computational constraints.]  For every leaf $\ell \in \cT$ of the form $\ell = t z$ for some $z \in \ZO^k$ we have the following constraints:
    \begin{align*}
    \langle \alpha_\ell(x), \beta_\ell(y) \rangle & = 0 & \forall (x,y) \in \cX \times \cY \text{ s.t. } (x,y,z) \notin f
\end{align*}
\end{description}

The \emph{deterministic $\gamma_2$ communication complexity} of $f$, $\GDCC(f)$, is the smallest depth of a protocol structure $\cT$ such that there exists a $\gamma_2$ deterministic protocol $\pi = (\cX\times\cY,\cT,d,\alpha,\beta)$ that computes $f$.

\subsection{A Lower-bound Using Discrepancy}
\label{ssec:GammaLB}

As we discussed above, protocols induce a tree-like structure over rectangles.
We will show an analogous property of $\gamma_2$ protocols. 
Formally, let $\pi$ be a protocol computing a relation $f \subseteq \cX \times \cY \times \cZ$ with a protocol structure $\cT$.
For a node $t$ of $\cT$, there is a rectangle $R^\pi_t = A^\pi_t \times B^\pi_t \subseteq \cX \times \cY$ containing all input pairs for which the protocol $\pi$ follow the path from the root $\lambda$ of $\cT$ to the node $t$.
For the rectangles $R_t$'s we have the following.
\begin{enumerate}
\item For the root $\lambda$ of $\cT$, it holds that $R^\pi_\lambda = \cX \times \cY$.
\item For a node $t$ of $\cT$ with two children $t0$ and $t1$, it holds that $R^\pi_t = R^\pi_{t0} \dot\cup R^\pi_{t1}$. Moreover, if $t$ is an Alice's node, then $A^\pi_t = A^\pi_{t0} \dot\cup A^\pi_{t1}$ and $B^\pi_t = B^\pi_{t1} = B^\pi_{t0}$. Analogously, if $t$ is a Bob's node, then $B^\pi_t = B^\pi_{t0} \dot\cup B^\pi_{t1}$ and $A^\pi_t = A^\pi_{t0} = A^\pi_{t1}$.
\item For a leaf $\ell$ of $\cT$ outputting $z \in \cZ$, it holds that for each $(x,y) \in \cX \times \cY$ with $(x,y) \in R^\pi_\ell$ we have $(x,y,z) \in f$.
\end{enumerate}

For a $\gamma_2$ protocol $\pi = (\cX \times \cY, \cT, d, \alpha,\beta)$, and a node $t$ of $\cT$ we define a matrix $M^\pi_t \subseteq \R^{\cX \times \cY}$ as $M^\pi_t[x,y] = \ip{\alpha_t(x)}{\beta_t(y)}$.
The next theorem shows that the matrices $M_t$'s have analogous properties to rectangles of protocols.

\begin{theorem}
\label{thm:GammaStructure}
Let  $\pi = (\cX \times \cY, \cT, d, \alpha,\beta)$ be a $\gamma_2$ protocol computing a relation $f \subseteq \cX \times \cY \times \cZ$.
Then,
\begin{enumerate}
\item\label{it:Root} For the root $\lambda$ of $\cT$, it holds that $M^\pi_t[x,y] = 1$ for all $(x,y) \in \cX \times \cY$.
\item\label{it:Node} For a node $t$ of $\cT$ with two children $t0$ and $t1$, it holds that $M^\pi_t = M^\pi_{t0} + M^\pi_{t1}$. Moreover, if $t$ is an Alice's node, then $\alpha_t(x) = \alpha_{t0}(x) + \alpha_{t1}(x)$ for all $x \in \cX$ and $\beta_t(y) = \beta_{t1}(y) = \beta_{t0}(y)$ for all $y \in \cY$. Analogously, if $t$ is a Bob's node, then $\beta_t(y) = \beta_{t0}(y) + \beta_{t1}(y)$ and $\alpha_t(x) = \alpha_{t0}(x) = \alpha_{t1}(x)$ for all $y \in \cY$ and $x \in \cX$.
\item\label{it:Leaf} For a leaf $\ell$ of $\cT$ outputting $z \in \cZ$, it holds that for each $(x,y)$ with $M^\pi_\ell[x,y] \neq 0$ we have $(x,y,z) \in f$.
\item\label{it:Gamma} For each node $t$ of $\cT$, it holds that $\gamma_2(M^\pi_t) \leq 1$.
\end{enumerate}
\end{theorem}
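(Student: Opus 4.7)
The plan is to verify items (1)--(4) by unpacking the $\gamma_2$-protocol constraints listed in \Cref{ssec:GammaProgram}. Item (1) is immediate, since the root constraint $\langle\alpha_\lambda(x),\beta_\lambda(y)\rangle = 1$ is literally the statement $M^\pi_\lambda[x,y] = 1$. Item (3) is also immediate: the computational constraint says $\langle\alpha_\ell(x),\beta_\ell(y)\rangle = 0$ whenever $(x,y,z)\notin f$, i.e.\ $M^\pi_\ell[x,y] = 0$, which contrapositively is the claim.

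The substantive content is in item (2). I would first derive the vector identities and then obtain the matrix identity as a consequence. Consider an Alice's node $t$. The orthogonality $\langle\alpha_{t0}(x),\alpha_{t1}(x)\rangle = 0$ combined with $\|\alpha_{t0}(x)\|^2 + \|\alpha_{t1}(x)\|^2 = \|\alpha_t(x)\|^2$ gives $\|\alpha_{t0}(x) + \alpha_{t1}(x)\| = \|\alpha_t(x)\|$ by the Pythagorean theorem, while the constraint $\langle\alpha_{t0}(x)+\alpha_{t1}(x),\alpha_t(x)\rangle = \|\alpha_t(x)\|^2$ then puts us in the equality case of Cauchy--Schwarz, forcing $\alpha_t(x) = \alpha_{t0}(x) + \alpha_{t1}(x)$. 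On Bob's side, $\|\beta_{tb}(y)\| = \|\beta_t(y)\|$ together with $\langle\beta_{tb}(y),\beta_t(y)\rangle = \|\beta_t(y)\|^2$ are again the equality case of Cauchy--Schwarz, so $\beta_{tb}(y) = \beta_t(y)$ for $b\in\{0,1\}$. Substituting these into $M^\pi_t[x,y] = \langle\alpha_t(x),\beta_t(y)\rangle$ and distributing yields $M^\pi_t = M^\pi_{t0} + M^\pi_{t1}$. Bob's nodes are symmetric.

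For item (4), observe that $M^\pi_t = U_t V_t^\tr$, where $U_t$ and $V_t$ are the matrices whose rows are the vectors $\alpha_t(x)$ and $\beta_t(y)$ respectively, so by definition $\gamma_2(M^\pi_t) \le \max_x \|\alpha_t(x)\| \cdot \max_y \|\beta_t(y)\|$. I would show by induction along the tree, from the root down, that $\|\alpha_t(x)\| \le 1$ and $\|\beta_t(y)\| \le 1$ for every node $t$. The base case is the root constraint, which makes every $\alpha_\lambda(x)$ and $\beta_\lambda(y)$ the same unit vector. For the inductive step at an Alice node, using the identities already established in item (2), one has $\|\alpha_{tb}(x)\|^2 \le \|\alpha_{t0}(x)\|^2 + \|\alpha_{t1}(x)\|^2 = \|\alpha_t(x)\|^2 \le 1$ and $\|\beta_{tb}(y)\| = \|\beta_t(y)\| \le 1$; Bob nodes are symmetric. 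There is no real obstacle; the only step requiring care is the twofold Cauchy--Schwarz equality argument in item (2), which makes the ``relaxed'' constraints as rigid as they can be along a single branching.
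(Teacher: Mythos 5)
Your proof is correct, and for items (1)--(3) it is essentially the paper's argument: the paper asserts these follow ``immediately'' from the constraints, deferring to the discussion in the model definition where the Cauchy--Schwarz and Pythagoras steps you spell out are already carried out. You have simply made that deferred argument explicit.

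For item (4) your approach differs from the paper's, and in fact repairs a flaw in it. The paper's stated inductive step is that $\gamma_2(M^\pi_{t0}) + \gamma_2(M^\pi_{t1}) \le \gamma_2(M^\pi_t)$. As a statement about the $\gamma_2$ norm this is false: at an Alice node below the root, take $\cX = \cY = \{1,2\}$, $\alpha_t(1) = \alpha_t(2) = \beta_t(1) = \beta_t(2) = e_1$, and let Alice send her input bit, so $\alpha_{t0}(1) = e_1$, $\alpha_{t0}(2) = 0$, $\alpha_{t1}(1) = 0$, $\alpha_{t1}(2) = e_1$, with $\beta_{tb}(y)=e_1$. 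Then $M^\pi_t$ is the $2\times 2$ all-ones matrix with $\gamma_2 = 1$, while $M^\pi_{t0}$ and $M^\pi_{t1}$ are the two row-rectangles $\{1\}\times\{1,2\}$ and $\{2\}\times\{1,2\}$, each with $\gamma_2 = 1$, so the sum of the children's $\gamma_2$ norms is $2 > 1$. The difficulty is that $\gamma_2$ is a minimum over all factorizations, whereas the protocol only certifies one particular factorization; passing from the protocol factorization of $M^\pi_t$ to those of $M^\pi_{t0}$ and $M^\pi_{t1}$ does not control the $\gamma_2$ norms themselves in the way the paper's inequality claims. Your argument sidesteps this entirely: you track the row norms $\|\alpha_t(x)\|$ and $\|\beta_t(y)\|$ of the specific protocol factorization down the tree, show by induction (using the identities from item (2)) that they never exceed $1$, and then invoke $\gamma_2(M^\pi_t) \le \max_x\|\alpha_t(x)\| \cdot \max_y\|\beta_t(y)\|$. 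This yields $\gamma_2(M^\pi_t) \le 1$ for every node, which is exactly what is needed, and it is the correct way to carry out the induction.
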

\begin{proof}
Items~\ref{it:Root},~\ref{it:Node} and~\ref{it:Leaf} immediately follow from the fact that the collections $\alpha$ and $\beta$ satisfy the root, nodes, and computational constraints introduced in the last section.

Item~\ref{it:Gamma} can be shown by induction from the root $\lambda$.
By Item~\ref{it:Root}, we have that $\gamma_2(M^\pi_\lambda) = 1$.
By Item~\ref{it:Node}, we can easily verify that for any node $t$ with children $t0$ and $t1$ it holds that $\gamma_2(M^\pi_{t0}) + \gamma_2(M^\pi_{t1}) \leq \gamma_2(M^\pi_t)$.
\end{proof}

We end this section with a lower bound for $\gamma_2$ protocols.
For a relation $f \subseteq \cX \times \cY \times \cZ$, let $\gamma_2$ leaf complexity $\GLCC(f)$ denote the smallest number of leaves of the protocol structure of a $\gamma_2$ protocol that computes $f$. 
It clearly holds that
\[
  \GDCC(f) \ge \log \GLCC(f).
\]
We will show the following lower bound analogous to the rank lower bound in communication complexity.

\begin{theorem} 
\label{thm:LeafLB}
For any Boolean function $f:\cX\times\cY\to\ZO$, it holds that
  \[
    \gamma_2(f) \le \GLCC(f).
  \]
\end{theorem}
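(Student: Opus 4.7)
The plan is to exhibit $f$, viewed as a $\ZO$-valued matrix $F$, as the sum of the leaf matrices associated with the $1$-leaves of a $\gamma_2$ protocol, and then use subadditivity of the $\gamma_2$ norm together with \Cref{thm:GammaStructure}.

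Fix a $\gamma_2$ protocol $\pi$ computing $f$ with protocol structure $\cT$ whose number of leaves equals $\GLCC(f)$. First, I would show by induction on the tree, starting from the root, that $\sum_{\ell \text{ leaf of } \cT} M_\ell^\pi = J$, where $J \in \R^{\cX\times\cY}$ is the all-ones matrix. The base case (a single node) follows from Item~\ref{it:Root} of \Cref{thm:GammaStructure}, and the inductive step follows from Item~\ref{it:Node}, which gives $M_t^\pi = M_{t0}^\pi + M_{t1}^\pi$ for every internal node.

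Next I would split the leaves of $\cT$ into those outputting $0$ and those outputting $1$, and set $S_z = \sum_{\ell \text{ outputs } z} M_\ell^\pi$ for $z\in\ZO$, so that $S_0 + S_1 = J$. By Item~\ref{it:Leaf} of \Cref{thm:GammaStructure}, if $\ell$ outputs $z$ and $M_\ell^\pi[x,y]\neq 0$, then $f(x,y)=z$. Consequently, on inputs with $f(x,y)=1$, every leaf outputting $0$ contributes $0$, so $S_0[x,y]=0$ and hence $S_1[x,y]=1$; symmetrically, on inputs with $f(x,y)=0$, we get $S_1[x,y]=0$. Thus $S_1 = F$ exactly.

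Finally, since $\gamma_2$ is a norm on matrices, it is subadditive, so
\[
\gamma_2(f) = \gamma_2(S_1) \le \sum_{\ell \text{ outputs } 1} \gamma_2(M_\ell^\pi) \le \sum_{\ell \text{ outputs } 1} 1 \le \GLCC(f),
\]
where the second inequality uses Item~\ref{it:Gamma} of \Cref{thm:GammaStructure}, which guarantees $\gamma_2(M_\ell^\pi)\le 1$ for every leaf $\ell$. There is no real obstacle here: the only mildly delicate step is to convince oneself that $\sum_\ell M_\ell^\pi = J$, but this is an immediate telescoping consequence of $M_t^\pi = M_{t0}^\pi + M_{t1}^\pi$ and $M_\lambda^\pi = J$, so the entire argument is a direct consequence of \Cref{thm:GammaStructure}.
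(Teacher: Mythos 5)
Your proposal is correct and follows essentially the same route as the paper: the paper's Lemma~\ref{lem:LeafSum} is exactly your telescoping observation that $\sum_\ell M_\ell^\pi = J$, and both arguments then identify $\sum_{\ell \text{ outputs } 1} M_\ell^\pi = M_f$ via the computational (leaf) constraints and finish with subadditivity of $\gamma_2$ together with $\gamma_2(M_\ell^\pi)\le 1$. The only difference is presentational: you derive the leaf-sum fact inline rather than isolating it as a separate lemma.
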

First, we prove an auxiliary lemma.
\begin{lemma}
\label{lem:LeafSum}
Let $\pi = (\cX \times \cY, \cT, d, \alpha, \beta)$ be a $\gamma_2$ protocol and $\cL$ be the set of leaves of $\cT$.
Then for each $(x,y) \in \cX \times \cY$, it holds that
  \[
    \sum_{\ell \in \cL} \langle \alpha_\ell(x), \beta_\ell(y) \rangle = 1.
  \]
\end{lemma}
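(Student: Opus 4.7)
The plan is to prove this by straightforward induction on the structure of the protocol tree $\cT$, leveraging the additive splitting property we already established in Theorem~\ref{thm:GammaStructure} (item~\ref{it:Node}). Concretely, for any subtree $\cT_t$ of $\cT$ rooted at a node $t$, I claim that $\sum_{\ell \in \cL(\cT_t)} \langle \alpha_\ell(x), \beta_\ell(y) \rangle = \langle \alpha_t(x), \beta_t(y) \rangle$, where $\cL(\cT_t)$ denotes the leaves of the subtree rooted at $t$. Applying this with $t = \lambda$ and combining with Item~\ref{it:Root} of Theorem~\ref{thm:GammaStructure} (which gives $\langle \alpha_\lambda(x), \beta_\lambda(y) \rangle = 1$) yields the lemma.

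For the induction, the base case is when $t$ is itself a leaf, in which case $\cL(\cT_t) = \{t\}$ and the statement is trivial. For the inductive step, suppose $t$ is an internal node with children $t0$ and $t1$. By Item~\ref{it:Node} of Theorem~\ref{thm:GammaStructure}, we have $M^\pi_t[x,y] = M^\pi_{t0}[x,y] + M^\pi_{t1}[x,y]$, i.e. $\langle \alpha_t(x), \beta_t(y) \rangle = \langle \alpha_{t0}(x), \beta_{t0}(y) \rangle + \langle \alpha_{t1}(x), \beta_{t1}(y) \rangle$. This follows immediately by expanding the inner product using whichever node-constraint applies (for an Alice node, $\alpha_t = \alpha_{t0} + \alpha_{t1}$ and $\beta_t = \beta_{t0} = \beta_{t1}$, so bilinearity of the inner product gives the split; symmetrically for a Bob node). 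Applying the induction hypothesis to each of the two subtrees below $t$ then gives $\langle \alpha_t(x), \beta_t(y) \rangle = \sum_{\ell \in \cL(\cT_{t0})} \langle \alpha_\ell(x), \beta_\ell(y) \rangle + \sum_{\ell \in \cL(\cT_{t1})} \langle \alpha_\ell(x), \beta_\ell(y) \rangle = \sum_{\ell \in \cL(\cT_t)} \langle \alpha_\ell(x), \beta_\ell(y) \rangle$, completing the induction.

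There is no genuine obstacle here: the whole argument is really just an unfolding of the telescoping identity built into the node constraints. The only thing worth being careful about is making explicit that the two-child split works identically whether the node belongs to Alice or to Bob, which is immediate from Item~\ref{it:Node} of Theorem~\ref{thm:GammaStructure}.
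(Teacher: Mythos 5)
Your proof is correct and is essentially the same argument as the paper's: both rely on the per-node split identity $\langle \alpha_t(x), \beta_t(y) \rangle = \langle \alpha_{t0}(x), \beta_{t0}(y) \rangle + \langle \alpha_{t1}(x), \beta_{t1}(y) \rangle$ and unroll it by induction over the tree. You phrase it as a bottom-up structural induction on subtrees, while the paper phrases it top-down as repeatedly replacing an internal node in an antichain with its two children until only leaves remain; these are the same telescoping argument in two guises (and both naturally extend to the maximal-antichain remark following the lemma).
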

\begin{proof}
We prove this by induction on the structure $\cT$. 
Fix a pair $(x,y) \in \cX \times \cY$.
At the root $\lambda$ of $\cT$, the root constraints give us
  \[
    \langle \alpha_\lambda(x), \beta_\lambda(y) \rangle = 1.
  \]
  Now, suppose that
  \[
    \sum_{\ell \in \cL'} \langle \alpha_\ell(x), \beta_\ell(y) \rangle = 1
  \]
  for a set $\cL'$ of nodes of $\cT$ containing an internal node $t$.
  Suppose that $t$ is an is an Alice's node (the other case is analogous).
  Then for the children $t0$ and $t1$ of $t$, we have
  \begin{align*}
    \langle \alpha_t(x), \beta_t(y) \rangle & = \langle \alpha_{t 0}(x), \beta_{t 0}(y) \rangle + \langle \alpha_{t 1}(x), \beta_{t 1}(y) \rangle.
  \end{align*}
  Here, we used that $\alpha_t(x) = \alpha_{t0}(x) + \alpha_{t1}(x)$ and $\beta_t(y) = \beta_{t0}(y) = \beta_{t1}(y)$ at Alice's nodes.
  Now, for the set $\cL'' = \cL' \setminus \{t\} \cup \{t0,t1\}$, it still holds that
  \[
  \sum_{\ell \in \cL''} \langle \alpha_\ell(x), \beta_\ell(y) \rangle = 1
  \] 
  The lemma is proven by proceeding in this way until there are only leaves left.

\end{proof}

\begin{remark*}
We remark that Lemma~\ref{lem:LeafSum} holds more generally for $\cL$ being any maximal antichain of $\cT$, not only the set of leaves.
\end{remark*}

\begin{proof}[Proof of Theorem~\ref{thm:LeafLB}]
  Let $\pi = (\cX \times \cY, \cT, d, \alpha, \beta)$ be a $\gamma_2$ protocol computing $f$. 
  For a leaf $\ell = tc$ (i.e., the leaf $\ell$ outputs $c \in \{0,1\}$), it holds that  $\langle\alpha_\ell(x),\beta_\ell(y) \rangle = 0$ for any $(x,y) \in \cX \times \cY$ with $f(x,y) \neq c$ by the computational constraints. 
  Let $\cL_1$ be the set of leaves of $\cT$ outputting 1.  
  By Lemma~\ref{lem:LeafSum}, it follows that 
  \[
    \sum_{\ell \in \cL_1} \langle\alpha_\ell(x),\beta_\ell(y) \rangle = f(x,y).
  \]
  In other words, $M_f = \sum_{\ell \in \cL_1} M^\pi_\ell$.
  Thus, we have
  \begin{align*}
  \gamma_2(f) &= \gamma_2(M_f) \leq \sum_{\ell \in \cL_1} \gamma(M^\pi_\ell) \tag{by the triangle inequality} \\
  &\leq |\cL_1|  \tag{by Item~\ref{it:Gamma} of Theorem~\ref{thm:GammaStructure}}
  \end{align*}  
  and the theorem follows.
\end{proof}

It follows that discrepancy lower-bounds generalized communication complexity.

\begin{corollary}
  $\GDCC(f) \ge \log\frac{1}{\disc(f)} - O(1)$
\end{corollary}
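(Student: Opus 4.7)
Since $\GDCC(f) \ge \log \GLCC(f)$ holds trivially (any depth-$d$ binary tree has at most $2^d$ leaves), it suffices to establish $\GLCC(f) \ge \Omega(1/\disc(f))$. My plan is to mimic the classical discrepancy lower bound for deterministic complexity, replacing the leaf-rectangles by the leaf-matrices $M^\pi_\ell$, which now only satisfy $\gamma_2(M^\pi_\ell) \le 1$.

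Setup: fix a distribution $\mu$ attaining $\disc(f) = \disc_\mu(f)$, and let $\hat M(x,y) := 1 - 2f(x,y)$ be the $\pm 1$ sign matrix of $f$. Combining \Cref{lem:LeafSum} with the computational constraints yields the decomposition $\hat M = \sum_\ell \sigma(\ell)\, M^\pi_\ell$, where $\sigma(\ell) = +1$ if leaf $\ell$ outputs $0$ and $-1$ otherwise. Taking the entrywise product with $\mu$ and then the inner product with $\hat M$, and using $\sum_{x,y} \mu(x,y)\hat M(x,y)^2 = 1$, yields
\[
    1 \;=\; \sum_\ell \sigma(\ell)\, \langle \mu \odot \hat M,\, M^\pi_\ell\rangle \;\le\; \GLCC(f) \cdot \max_\ell \bigl|\langle \mu \odot \hat M,\, M^\pi_\ell\rangle\bigr|.
\]
So it remains to bound $|\langle \mu \odot \hat M, M^\pi_\ell \rangle| \le O(\disc_\mu(f))$ for each leaf $\ell$, using only $\gamma_2(M^\pi_\ell) \le 1$ (Item~\ref{it:Gamma} of \Cref{thm:GammaStructure}).

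This is the main obstacle; for it I would invoke Grothendieck's inequality. Writing $M^\pi_\ell(x,y) = \ip{X_x}{Y_y}$ with unit $\ell_2$-norm vectors $X_x, Y_y$, Grothendieck's inequality tells us that the bilinear form $(s, t) \mapsto \sum_{x,y} \mu(x,y)\hat M(x,y)\, s(x)\, t(y)$, evaluated at unit $\ell_2$-norm inputs, is bounded by the Grothendieck constant $K_G$ times its maximum over $\pm 1$-valued inputs. For any such sign-valued $s, t$, the product $s(x)\, t(y)$ decomposes as a signed sum of indicators of at most four disjoint rectangles, each contributing at most $\disc_\mu(f)$ by definition of discrepancy. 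Hence $|\langle \mu \odot \hat M, M^\pi_\ell \rangle| \le 4 K_G \disc_\mu(f)$, and plugging back into the displayed inequality gives $\GLCC(f) \ge 1/(4 K_G \disc(f))$, which after taking $\log$ yields $\GDCC(f) \ge \log(1/\disc(f)) - O(1)$.
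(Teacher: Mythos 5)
Your proof is correct, and it reaches the same conclusion by inlining the two intermediate facts the paper cites rather than invoking them as black boxes. The paper's proof factors through Theorem~\ref{thm:LeafLB} ($\GLCC(f) \ge \gamma_2(f)$, itself proved via \Cref{lem:LeafSum}) and then applies the Linial--Shraibman duality $\gamma_2^\ast(\mu\circ f^{\pm 1}) = \Theta(\disc_\mu(f))$. You instead use \Cref{lem:LeafSum} together with the computational constraints to write the $\pm1$ sign matrix as $\hat M = \sum_\ell \sigma(\ell) M^\pi_\ell$, pair with $\mu\odot\hat M$, and then directly bound $\bigl|\langle \mu\odot\hat M,\,M^\pi_\ell\rangle\bigr|$ by $4K_G\disc_\mu(f)$ via Grothendieck plus the $\gamma_2(M^\pi_\ell)\le 1$ bound from \Cref{thm:GammaStructure}. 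This is exactly the Grothendieck direction of the Linial--Shraibman equivalence, so in substance the two arguments coincide; yours is merely more self-contained. One small bonus of your version: by pairing $\mu\odot\hat M$ against $\hat M$ (rather than against the Boolean matrix $M_f$) you get the left-hand side equal to $1$ exactly, sidestepping the paper's ``assume $\Pr_\mu[f=1]\ge 1/2$, else negate $f$'' step. The only nit is phrasing: Grothendieck's inequality is about replacing scalar $\pm 1$ inputs by \emph{vector}-valued unit inputs and taking $\ip{u_x}{v_y}$ in place of $s(x)t(y)$; your ``evaluated at unit $\ell_2$-norm inputs'' reads as if $s,t$ remained scalars, though your subsequent use with the $\gamma_2$ factorization $M^\pi_\ell(x,y)=\ip{X_x}{Y_y}$ makes the intent clear.
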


\begin{proof}
  Let $\mu$ be a distribution over $\cX\times \cY$ under which $f$ has $\disc_\mu(f) = \disc(f)$. It has been known since Linial and Shraibman~\cite{LS09} that, up to constant factors, this is equivalent to saying that the matrix $(\mu \circ f^{\pm1})[x,y] = \mu(x,y) \cdot (-1)^{f(x,y)}$ has small $\gamma_2^\ast$ norm:
  \[
    \gamma_2^\ast(\mu \circ f^{\pm1}) = \Theta(\disc_\mu(f)),
  \]
  where $\gamma_2^\ast$ is the dual norm of $\gamma_2$, i.e., $\gamma_2^\ast(M) = \sup_{X: \gamma_2(X) \leq 1} \ip{M}{X}$.
  It follows that
  \[
    \langle M_f, \mu\circ f^{\pm 1} \rangle \le \gamma_2(f) \cdot \gamma_2^\ast(\mu \circ f^{\pm 1}).
  \]
  The left-hand side measures exactly the probability that $f(x,y) = 1$ under $\mu$, which we may assume is $\ge 1/2$ (otherwise negate $f$, this adds at most $1$ to $\gamma_2(f)$). It follows that
  \[
    \GLCC(f) \ge \gamma_2(f) \ge \frac{1}{2} \cdot  \frac{1}{\gamma_2^\ast(\mu\circ f^{\pm 1})} = \Theta\left(\frac{1}{\disc(f)}\right),
  \]
  and thus $\GDCC(f) \ge \log\frac{1}{\disc(f)} - O(1)$ as intended.
\end{proof}

It follows that the inner product in $\mathbb{F}_2$ and a random function have nearly maximal generalized communication complexity.

\begin{corollary}
  $\GDCC(\mathsf{IP}_n) \ge \frac{n}{2} - 1$.
\end{corollary}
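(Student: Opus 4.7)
The plan is to combine the previous corollary (the discrepancy lower bound) with the classical bound on the discrepancy of inner-product-mod-2. By the previous corollary, it suffices to show $\disc(\mathsf{IP}_n) \leq 2^{-n/2}$, and then carefully track the additive constant hiding inside the $O(1)$.

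First I would invoke \textit{Lindsey's lemma}: the $\pm 1$ sign matrix $H = J - 2 M_{\mathsf{IP}_n}$ associated to inner-product-mod-2 is exactly the $2^n \times 2^n$ Hadamard (Sylvester) matrix, so $H H^{\tr} = 2^n \cdot I$. For any rectangle $R = A \times B \subseteq \ZO^n \times \ZO^n$, writing $\mathbf{1}_A, \mathbf{1}_B$ for the indicator vectors and applying Cauchy--Schwarz,
\[
\Bigl| \sum_{(x,y) \in R} H[x,y] \Bigr| \;=\; |\mathbf{1}_A^{\tr} H \mathbf{1}_B| \;\le\; \|\mathbf{1}_A\|_2 \cdot \|H \mathbf{1}_B\|_2 \;\le\; \sqrt{|A|} \cdot \sqrt{2^n |B|} \;\le\; 2^{3n/2}.
\]
Dividing by the total mass $2^{2n}$ of the uniform distribution $\mu$ on $\ZO^n \times \ZO^n$ yields $\disc_\mu(\mathsf{IP}_n, R) \le 2^{-n/2}$ for every rectangle $R$, and hence $\disc(\mathsf{IP}_n) \le \disc_\mu(\mathsf{IP}_n) \le 2^{-n/2}$.

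Plugging into the previous corollary immediately gives $\GDCC(\mathsf{IP}_n) \ge \frac{n}{2} - O(1)$. To pin down the claimed $-1$ rather than an unspecified constant, I would trace the constants through the chain $\GLCC(f) \ge \gamma_2(f) \ge \frac{1}{2 \gamma_2^\ast(\mu \circ f^{\pm 1})}$ and the Linial--Shraibman equivalence $\gamma_2^\ast(\mu \circ f^{\pm 1}) = \Theta(\disc_\mu(f))$, or---more cleanly---bypass the dual-norm detour altogether and bound $\gamma_2(\mathsf{IP}_n)$ directly: since $2 M_{\mathsf{IP}_n} = J - H$ and $\gamma_2(J) = 1$, $\gamma_2(H) = \sqrt{2^n}$, the triangle inequality gives $\gamma_2(\mathsf{IP}_n) \ge (2^{n/2}-1)/2$, so $\GDCC(\mathsf{IP}_n) \ge \log \GLCC(\mathsf{IP}_n) \ge \log \gamma_2(\mathsf{IP}_n) \ge \frac{n}{2} - 1$ for all $n$ large enough (with the small-$n$ cases handled trivially).

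The only obstacle is the bookkeeping of constants; the conceptual content is entirely contained in Lindsey's lemma together with \Cref{thm:LeafLB} and its corollary.
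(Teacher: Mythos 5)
Your proof is correct and conceptually the same as what the paper intends: the corollary is stated without proof, as an immediate consequence of the discrepancy lower bound, and the implicit ingredient is exactly Lindsey's lemma / the classical fact that $\disc(\mathsf{IP}_n) \le 2^{-n/2}$, which you reproduce accurately.

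Your second, ``cleaner'' route is in fact a better way to pin down the constant, since the paper's chain through $\gamma_2^\ast$ and Linial--Shraibman has a $\Theta$ in it that makes the exact additive constant hard to extract. Going directly via $\gamma_2(\mathsf{IP}_n) \ge \frac{1}{2}\bigl(\gamma_2(H) - \gamma_2(J)\bigr) = \frac{2^{n/2}-1}{2}$ and then $\GDCC \ge \log\GLCC \ge \log\gamma_2$ is tighter and shorter. One small point worth flagging: $\log\!\bigl(\tfrac{2^{n/2}-1}{2}\bigr)$ is \emph{strictly} less than $\frac n2 - 1$ for every $n$ (not just small $n$), since it equals $\frac n2 - 1 + \log(1-2^{-n/2})$ and the last term is negative. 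To land on the stated bound you do need the integrality of $\GDCC$: for $n \ge 3$ one has $\log\!\bigl(\tfrac{2^{n/2}-1}{2}\bigr) > \frac n2 - 2$, so rounding up recovers $\GDCC \ge \lceil \frac n2 - 1 \rceil \ge \frac n2 - 1$; you wave at this with ``for all $n$ large enough,'' which is fine but slightly mislocates where the slack is. With that small clarification the argument is complete.
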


\begin{corollary}
  $\GDCC(f) = \Omega(n)$ for a random $f:\ZO^n\times\ZO^n\to\ZO$.
\end{corollary}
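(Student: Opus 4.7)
The plan is to combine the preceding discrepancy lower bound with a standard random-matrix concentration argument. By the previous corollary, it suffices to show that with high probability over the choice of a uniformly random $f:\ZO^n \times \ZO^n \to \ZO$, we have $\disc(f) \le 2^{-\Omega(n)}$, which then immediately yields $\GDCC(f) \ge \log(1/\disc(f)) - O(1) = \Omega(n)$.

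To bound $\disc(f)$, I would take $\mu$ to be the uniform distribution on $\ZO^n \times \ZO^n$, so that $\disc(f) \le \disc_\mu(f) = \frac{1}{2^{2n}}\max_R \bigl|\sum_{(x,y)\in R} (-1)^{f(x,y)}\bigr|$. For any fixed rectangle $R = A \times B$ with $|R| = s$, the quantity $\sum_{(x,y)\in R} (-1)^{f(x,y)}$ is a sum of $s$ independent Rademacher random variables, so by Hoeffding's inequality,
\[
\Pr\left[\,\Bigl|\sum_{(x,y)\in R} (-1)^{f(x,y)}\Bigr| > t\,\right] \le 2\exp\!\left(-\frac{t^2}{2s}\right).
\]
I would then set $t = C \cdot 2^{3n/2}$ for a sufficiently large constant $C$: since $s \le 2^{2n}$, this gives $t^2/(2s) \ge (C^2/2) \cdot 2^n$, so the failure probability for any single rectangle is at most $2\exp(-(C^2/2)\cdot 2^n)$.

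The key remaining step is the union bound over all rectangles. There are at most $2^{2^n} \cdot 2^{2^n} = 4^{2^n}$ rectangles in $\ZO^n \times \ZO^n$, and $4^{2^n} \cdot 2\exp(-(C^2/2) 2^n) = o(1)$ for $C$ large enough. Therefore, with high probability, every rectangle $R$ satisfies $\bigl|\sum_R (-1)^{f(x,y)}\bigr| \le C\cdot 2^{3n/2}$, which gives $\disc(f) \le \disc_\mu(f) \le C \cdot 2^{-n/2}$, and hence $\GDCC(f) \ge n/2 - O(1)$.

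I do not anticipate a serious obstacle here: both the Hoeffding concentration and the $4^{2^n}$ count of rectangles are standard, and they combine cleanly because $t$ grows polynomially in $2^n$ while the single-rectangle failure probability decays doubly-exponentially in $n$. The only mild care needed is to note that it is enough to consider \emph{maximal} rectangles (products of arbitrary subsets), since any smaller rectangle has discrepancy dominated by the enclosing one in the worst case, so the count $4^{2^n}$ is indeed the right one.
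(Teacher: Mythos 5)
Your argument is correct and is exactly the standard route the paper has in mind: since the paper derives this corollary directly from $\GDCC(f) \ge \log\frac{1}{\disc(f)} - O(1)$, all that is needed is the textbook fact that a uniformly random $\ZO$-valued matrix has discrepancy $2^{-\Omega(n)}$, which you establish via Hoeffding plus a union bound over the $4^{2^n}$ rectangles. One small note: your closing paragraph about ``maximal rectangles'' is unnecessary and slightly confused --- by definition a rectangle is a product $A\times B$ of arbitrary subsets, so the count $2^{2^n}\cdot 2^{2^n}=4^{2^n}$ is exact and there is nothing to reduce to; the union bound already ranges over precisely the set of rectangles in the definition of $\disc_\mu$.
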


\subsection{Upper Bound for Equality}
\label{ssec:EQGammaProtocol}

In this section, we design an efficient (constant length) $\gamma_2$ protocol for the equality function -- $\EQ_d: [d] \times [d] \to \{0,1\}$ with $\EQ(x,y) = 1$ if and only if $x = y$.

The protocol construction will appear very mysterious unless some words are said about how it was discovered. The first observation is that protocols can be \textit{symmetrized}. Namely, the constraints defining a $\gamma_2$ protocol are invariant under two kinds of symmetries: the \textit{protocol symmetries} derived from graph isomorphisms of the protocol tree, and the \textit{function symmetries} which permute different inputs of the function while leaving the communication matrix unchanged. Since the solution space of the protocol constraints is convex, we can always take the average of a given solution under all possible symmetries, and the outcome will still be a solution to the constraints, which furthermore is invariant under all such symmetries. So for example, the communication matrix of equality is symmetric under the action of permuting the rows and columns by the same permutation. If we take any $d\times d$ matrix and symmetrize it (take the average) under this action, we will obtain a matrix such that all the diagonal entries have the same value $a$ and the off-diagonal entries have the same value $b$. Having realized this, we were trying to prove a lower-bound for equality by restricting our attention to such symmetric solutions. As it turns out, a $d\times d$ Gram matrix which has only two values, $a$ on the diagonal and $b$ on the off-diagonal, must be the Gram matrix of a family $\{\alpha(1), \ldots, \alpha(d) \}$ of vectors of a very special kind: the vectors must always be of the form $\alpha(x) = a'\phi + b'\eta(x)$, where $\phi$ is a vector that does not depend on $x$, and the vectors $\{\eta(1), \ldots, \eta(d)\}$ are the vertices of a $d$-simplex. This realization allowed us to analyze symmetric $\gamma_2$ protocols (for computing equality) very systematically, covering all possible cases, by varying only the constants $a'$ and $b'$. We expected this systematic analysis to result in a lower-bound, but when trying to prove it, we instead realized that a protocol for solving equality arises by setting the constants appropriately. This discussion, at least, serves to explain why simplexes appear in the protocol.

The protocol will consist only of 2 rounds where Alice ``sends'' 1 of 11 messages and Bob ``replies'' with a 1-bit message (which will be the output of the protocol).
In the previous sections, we considered only $\gamma_2$ protocols where the players send only 1 bit in each round. 
When we are in an Alice's node $t$ of a protocol structure $\cT$ we infer that $\alpha_t(x) = \alpha_{t0}(x) + \alpha_{t1}(x)$ and $\ip{\alpha_{t0}(x)}{\alpha_{t1}(x)} = 0$ for all possible $x$.
In words, $\alpha_{t0}(x)$ and $\alpha_{t1}(x)$ form an orthogonal decomposition of $\alpha_t(x)$.
It is straightforward to generalize $\gamma_2$ protocols so that players can send longer messages in one node.
Say, in an Alice's node $t$, Alice can send 1 of $\ell$ different messages (i.e., the node $t$ has $\ell$ children in $\cT$), then for each possible $x$ we have $\ell$ vectors $\alpha_{t1}(x),\dots,\alpha_{t\ell}(x)$ that form an orthogonal decomposition of $\alpha_t(x)$, i.e., $\alpha_t(x) = \sum_{i \in [\ell]} \alpha_{ti}(x)$ and $\ip{\alpha_{ti}(x)}{\alpha_{tj}(x)} = 0$ for all different $i$ and $j$ in $[\ell]$.

Let $\cT_\ell$ be the following protocol structure:
\begin{enumerate}
    \item The root $\lambda$ of $\cT_\ell$ is an Alice's node and has degree $\ell$.
    \item Each child of the root is a Bob's node and has exactly 2 children.
\end{enumerate}

\begin{theorem}
    Let $\ell \geq 11$ and $d > 0$ be integers.
    Then, there is $\gamma_2$ protocol with the structure $\cT_\ell$ computing the equality function $\EQ_d$.
\end{theorem}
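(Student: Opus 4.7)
The plan is to exhibit explicit vectors $\alpha_i(x), \beta_{i0}(y), \beta_{i1}(y)$ realizing a $\gamma_2$ protocol with structure $\cT_\ell$ for $\EQ_d$, following the symmetrization insight described in the paragraph preceding the theorem. Since the protocol constraints and the communication matrix of $\EQ_d$ are both invariant under the natural $S_d \times S_\ell$ action (permuting inputs and Alice's messages), we restrict to covariant solutions. By the rigidity noted in the discussion, this forces every vector to lie in the span of a fixed unit vector $\phi$, a regular $\ell$-simplex $u_1,\dots,u_\ell \in \R^{\ell-1}$ indexing Alice's messages (so $\|u_i\|=1$, $\sum_i u_i=0$, $\langle u_i,u_j\rangle=-1/(\ell-1)$ for $i\neq j$), and a regular $d$-simplex $\eta(1),\dots,\eta(d)\in\R^{d-1}$ satisfying analogous identities.

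\textbf{Step 1 (the ansatz).}
I set $\alpha_\lambda(x)=\beta_\lambda(y)=\phi$, which trivially satisfies the root constraints. Let $\xi$ be a fresh unit vector orthogonal to the span of all the $\eta(x)$'s. Then I posit
\[
\alpha_i(x) \;=\; \tfrac{1}{\ell}\phi \;+\; b_1\, u_i\otimes\eta(x) \;+\; b_2\, u_i\otimes\xi,
\]
\[
\beta_{i0}(y) \;=\; p\phi \;+\; q_1\, u_i\otimes\eta(y) \;+\; q_2\, u_i\otimes\xi \;+\; r_i\,\rho_i(y), \qquad \beta_{i1}(y)=\phi-\beta_{i0}(y),
\]
where $\rho_i(y)$ lives in a subspace orthogonal to the span of all $\alpha_j(x')$'s, so that including it does not affect any inner product of the form $\langle \alpha_j(x'),\beta_{ib}(y)\rangle$. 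The identities $\sum_i u_i=0$ and $\langle u_i,u_j\rangle=-1/(\ell-1)$ reduce both Alice's node constraints $\sum_i\alpha_i(x)=\phi$ and $\alpha_i(x)\perp\alpha_j(x)$ to the single scalar equation $b_1^2+b_2^2=(\ell-1)/\ell^2$, while $\beta_{i0}(y)\perp\beta_{i1}(y)$ becomes $q_1^2+q_2^2+r_i^2\|\rho_i(y)\|^2=p(1-p)$.

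\textbf{Step 2 (computational constraints and the key inequality).}
Expanding $\langle\alpha_i(x),\beta_{i0}(x)\rangle=0$ and $\langle\alpha_i(x),\beta_{i1}(y)\rangle=0$ for $x\neq y$, and using $\langle\eta(x),\eta(y)\rangle=1$ or $-1/(d-1)$ as appropriate, gives a $2\times 2$ linear system whose solution is $b_1q_1=-(d-1)/(d\ell)$ and $b_2q_2=((1-p)d-1)/(d\ell)$. Consistency of the ansatz amounts to $q_1^2+q_2^2\leq p(1-p)$ (so that $r_i^2\|\rho_i(y)\|^2\geq 0$). Writing $q_j^2=(b_jq_j)^2/b_j^2$ and optimizing over $b_1^2,b_2^2$ subject to $b_1^2+b_2^2=(\ell-1)/\ell^2$ via Cauchy--Schwarz, this reduces to
\[
\bigl((d-1) + |(1-p)d-1|\bigr)^2 \;\leq\; p(1-p)\,d^2\,(\ell-1).
\]
It remains to verify that some $p\in(0,1)$ makes the inequality hold for every $d\geq 1$, once $\ell\geq 11$. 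A short calculus exercise (taking $p\approx 2/3$ in the regime $d\geq 3$, which minimizes the asymptotic ratio, and a small $p$ by hand for $d\in\{1,2\}$) carries this out, pinning down the universal threshold.

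\textbf{Main obstacle.}
The central technical point is the quantitative balancing in Step 2. Without the slack direction $\xi$ in $\alpha_i(x)$ (that is, setting $b_2=0$), the natural single-simplex ansatz forces $r_i^2\|\rho_i(y)\|^2 = (d-1)(\ell-d)/(d^2(\ell-1))$, which is negative whenever $d>\ell$, ruling out any constant-$\ell$ threshold. Introducing $\xi$ decouples $d$ from $\ell$, but one must then optimize three coupled parameters $(b_1,b_2,p)$ so that the inequality above is uniform in $d$; pushing the asymptotics down to the claimed constant and then handling small $d$ by hand is the real content of the proof.
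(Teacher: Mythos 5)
Your proposal is correct and takes essentially the same route as the paper, with a slightly more general (and better-optimized) ansatz. The paper sets $\alpha_m(x) = \tfrac{1}{\ell}\psi + c\phi_m + c\rho_m\otimes\eta_x$ with two separate $\ell$-simplexes $\phi_m, \rho_m$ and a fixed coefficient $c = \sqrt{\ell-1}/(\sqrt{2}\ell)$, and sets $\beta_{mb}(y) = \tfrac{1}{2}\psi \pm (p\phi_m - q\rho_m\otimes\eta_y + r\chi_y)$, then solves the same $2\times 2$ linear system from the computational constraints and verifies $p^2 + q^2 \le 1/4$ for $\ell \ge 11$. Your version is isomorphic: your $u_i\otimes\xi$ plays the role of their $\phi_m$, your $u_i\otimes\eta(x)$ plays the role of their $\rho_m\otimes\eta_x$, and your slack $\rho_i(y)$ plays the role of their $\chi_y$. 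The genuine added value is that you keep three parameters $(b_1, b_2, p)$ free (the paper implicitly hardwires $b_1 = b_2 = c$ and $p = 1/2$) and optimize them out with Cauchy--Schwarz, reducing the whole verification to a single clean inequality $\bigl((d-1) + |(1-p)d-1|\bigr)^2 \le p(1-p)\,d^2(\ell-1)$. Minimizing $(2-p)^2/(p(1-p))$ at $p = 2/3$ correctly gives the asymptotic threshold $\ell \ge 9$, which strictly improves on the paper's $\ell \ge 11$; since the theorem only claims $\ell \ge 11$, and since $p = 1/2$ already satisfies the inequality for all $d \ge 2$ once $\ell \ge 11$ (so the ``small $p$ by hand for $d\in\{1,2\}$'' remark is unnecessary), the final calculus step you leave sketched is indeed a short verification. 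Your ``Main obstacle'' paragraph also correctly identifies why the extra slack direction is unavoidable: without it, $r^2\|\rho\|^2 = (d-1)(\ell-d)/(d^2(\ell-1))$ goes negative for $d > \ell$, exactly the obstruction the paper's second simplex $\phi_m$ is there to fix.
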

\begin{proof}
    We need to design vectors $\alpha_t(x)$ and $\beta_t(y)$ for all possible $x,y \in [d]$ and nodes $t$ of $\cT_\ell$.
    Let $\psi$ be a unit vector and we set $\alpha_\lambda(x) = \beta_\lambda(y) = \psi$ for all $x,y \in [d]$ to satisfy the root constraints.
    Now, we design the vectors for the children of the root $\lambda$.
    Let $m$ be a child of $\lambda$.
    Since the root $\lambda$ of $\cT_\ell$ is an Alice's node, we set $\beta_m(y) = \beta_\lambda(y) = \psi$.
    We define
    \[
    \alpha_m(x) = \frac{1}{\ell} \cdot \psi + c\cdot \phi_m + c\cdot  \rho_m \otimes \eta_x,
    \]
    where
    \begin{enumerate}
        \item $c = \frac{\sqrt{\ell - 1}}{\sqrt{2}\ell}$.
        \item The vectors $\phi_m$'s and $\rho_m$'s are vertices of $\ell$-simplex centered at zero, i.e., $\ip{\phi_m}{\phi_m} = 1$, $\ip{\phi_m}{\phi_{m'}} = -\frac{1}{\ell - 1}$ if $m \neq m'$, and $\sum_{m \in [\ell]} \phi_m = 0$.
        \item The vectors $\eta_x$'s are vertices of $d$-simplex centered at zero.
        \item Vectors $\psi$, $\phi_m$'s, $\rho_m$'s, and $\eta_x$'s are orthogonal to each other -- which can be easily achieved if the dimension of the vectors is large enough.
    \end{enumerate}
    \begin{claim}
    \label{cl:AliceDecomposition}
        For any $x \in [d]$, the vectors $\alpha_1(x),\dots,\alpha_\ell(x)$ form an orthogonal decomposition of $\alpha_\lambda(x)$.
    \end{claim}
    \begin{proof}
        Fix $x \in [d]$.
        \[
            \sum_{m \in [\ell]} \alpha_m(x) = \psi + c \cdot \sum_{m \in [\ell]} \phi_m + c \cdot \left(\sum_{m \in [\ell]} \rho_m\right) \otimes \eta_x = \psi
        \]
        The last inequality holds because $\phi_m$'s and $\rho_m$'s are vertices of $\ell$-simplexes, so they sum to zero.
        Let $m$ and $m'$ be two different messages in $\ell$.
        Then,
        \begin{align*}
        \ip{\alpha_m(x)}{\alpha_{m'}(x)} &= \frac{1}{\ell^2} \cdot \ip{\psi}{\psi} + c^2 \cdot \ip{\phi_m}{\phi_{m'}} + c^2 \cdot \ip{\rho_m}{\rho_{m'}} \cdot \ip{\eta_x}{\eta_x} 
        \tag{by the orthogonality of the vectors and properties of the tensor product} \\
        &= \frac{1}{\ell^2} - \frac{2\cdot c^2}{\ell - 1} \tag{by properties of simplexes and $\psi$ being a unit vector} \\
        &= 0. \tag{since $c = \frac{\sqrt{\ell - 1}}{\sqrt{2}\ell}$}        
        \end{align*}
    \end{proof}

    Now, fix a child $m$ of the root $\lambda$ of $\cT_\ell$ (i.e., an Alice's message).
    The node $m$ is a Bob's node and it has 2 children $m0$ and $m1$.
    Thus, we set $\alpha_{m0}(x) = \alpha_{m1}(x) = \alpha_m(x)$.
    Further, for $b \in \{0,1\}$ and $y \in [d]$, we set
    \[
    \beta_{mb}(y) = \frac{1}{2}\cdot \psi + (-1)^b p \cdot \phi_m - (-1)^b q \cdot \rho_m \otimes \eta_y + (-1)^b r \cdot \chi_y,
    \]
    where
    \begin{enumerate}
        \item The vectors $\phi_m$'s, $\rho_m$'s, and $\eta_y$'s are the same vectors that were used for the definition of $\alpha_m(x)$.
        \item The vectors $\chi_y$'s are unit vectors orthogonal to all other vectors.
        \item The coefficients $p$ and $q$ will be set later, however we will have that 
        \begin{equation}
         0 \leq p \leq \frac{1}{2}, \text{ and }  0 \leq q^2 \leq \frac{1}{4} - p^2. \label{ineq:CoefBound} \\   
        \end{equation}                 
        \item $r^2 = \frac{1}{4} - p^2 - q^2$. Note that $\frac{1}{4} - p^2 - q^2 \geq 0$ by (\ref{ineq:CoefBound}).
    \end{enumerate}
    \begin{claim}
    \label{cl:BobDecomposition}
     Let $m \in [\ell]$ and $y \in [d]$. Then, the vectors $\beta_{m0}(y)$ and $\beta_{m1}(y)$ form an orthogonal decomposition of $\beta_m(y)$.
    \end{claim}
    \begin{proof}
        \begin{align*}
        \beta_{m0}(y) + \beta_{m1}(y) &= \frac{1}{2}\cdot \psi + p \cdot \phi_m - q \cdot \rho_m \otimes \eta_y +  r \cdot \chi_y  \\
        &+ \frac{1}{2}\cdot \psi - p \cdot \phi_m + q \cdot \rho_m \otimes \eta_y -  r \cdot \chi_y    = \psi
        \end{align*}
        \begin{align*}
            \ip{\beta_{m0}(y)}{\beta_{m1}(y)} &= \frac{1}{4} \cdot \ip{\psi}{\psi} - p^2 \cdot \ip{\phi_m}{\phi_m} - q^2 \cdot \ip{\rho_m}{\rho_m} \cdot \ip{\eta_y}{\eta_y} - r^2\cdot \ip{\chi_y}{\chi_y} 
            \tag{by the orthogonality of the vectors and properties of the tensor product} \\
            &= \frac{1}{4} - p^2 - q^2 - r^2 \tag{by properties of simplexes and $\psi$ being a unit vector} \\
            &= 0 \tag{since $r^2 = \frac{1}{4} - p^2 - q^2$}
        \end{align*}
    \end{proof}
    By Claims~\ref{cl:AliceDecomposition} and~\ref{cl:BobDecomposition}, we have that the collections of vectors $\alpha$ and $\beta$ satisfy the Alice's and the Bob's constraints.
    It remains to prove that the $\gamma_2$ protocol $\pi = \bigl([d], [d], \cT_\ell, D, \alpha, \beta)$ (for an appropriate dimension $D$) computes the equality function $\EQ_d$.
    To prove the claim, we need to verify the collections $\alpha$ and $\beta$ satisfy the computational constraints.
    In particular, we need to show that for any first message $m \in [\ell]$, and any $x,y \in [d], x \neq y$ we have that
    \begin{equation}
        \ip{\alpha_{m1}(x)}{\beta_{m1}(y)} = 0, \text{ and } \ip{\alpha_{m0}(x)}{\beta_{m0}(x)} = 0. \label{eq:LeafConstraints}
    \end{equation}  
    We will show there is a setting of $p$ and $q$ satisfying the inequalities (\ref{ineq:CoefBound}) and the computational constraints (\ref{eq:LeafConstraints}).
    First, expand the inner products (\ref{eq:LeafConstraints}).
    \begin{align*}
        \ip{\alpha_{m1}(x)}{\beta_{m1}(y)} &= \ip{\alpha_{m}(x)}{\beta_{m1}(y)} = \frac{1}{2\ell} \cdot \ip{\psi}{\psi} - cp\cdot \ip{\phi_m}{\phi_m} + cq \cdot \ip{\rho_m}{\rho_m} \cdot \ip{\eta_x}{\eta_y} \\
        &= \frac{1}{2\ell} - cp - \frac{cq}{d-1} \\
        \ip{\alpha_{m0}(x)}{\beta_{m0}(x)} &= \ip{\alpha_{m}(x)}{\beta_{m0}(x)} = \frac{1}{2\ell} \cdot \ip{\psi}{\psi} + cp \cdot \ip{\phi_m}{\phi_m} - cq \cdot \ip{\rho_m}{\rho_m} \cdot \ip{\eta_x}{\eta_x} \\
        &= \frac{1}{2\ell} + cp - cq
    \end{align*}
    Thus by the computational constraints (\ref{eq:LeafConstraints}), we get the following system of linear equations (with variables $p$ and $q$).
    \begin{align*}
        \frac{1}{2\ell} - cp - \frac{cq}{d-1} &= 0 \\
        \frac{1}{2\ell} + cp - cq &= 0
    \end{align*}
    The solution of this system is
    \begin{align*}
    p &= \frac{1}{c\ell} \cdot \left(\frac{1}{2} - \frac{1}{d}\right) = \frac{\sqrt{2}}{\sqrt{\ell - 1}} \cdot \left(\frac{1}{2} - \frac{1}{d}\right), \\
    q &= \frac{1}{c\ell} \cdot \left(1 - \frac{1}{d}\right) = \frac{\sqrt{2}}{\sqrt{\ell - 1}} \cdot \left(1 - \frac{1}{d}\right).
    \end{align*}
    Since $\ell \geq 11$ (and we may assume $d \geq 2$ so that the function $\EQ_d$ is non-trivial), we have the following bounds.
    \begin{align*}
        0 &\leq p^2 \leq \frac{1}{2(\ell - 1)} \leq \frac{1}{20} \\
        0 &\leq q^2 \leq \frac{2}{\ell - 1} \leq \frac{2}{10} \leq \frac{1}{4} - p^2
    \end{align*}
    Thus, the constraints (\ref{ineq:CoefBound}) are satisfied by our setting of $p$ and $q$ and we conclude the proof.
\end{proof}

\section{Quantum Lab Protocols}\label{sec:quantumlab}

In the next section we describe the HQFP, and we show that its PSD relaxation results in the functional description given in the introduction.

\subsection{Definition of the Model}

We now define a \emph{deterministic protocol} as a tuple $\pi = (\cX\times\cY, \cT, C)$, where $\cX\times\cY$ is a finite product set of \emph{inputs}, $\cT$ is a protocol structure, and $C$ is a collection of maps $C_t:\cX\times\cY\to\R$, satisfying the following constraints.

\begin{description}
\item[Root constraints.] For the root $\lambda$ of $\cT$ we have:
\begin{align*}
  C_\lambda(x,y) \cdot C_\lambda(x',y') & = 1 & \forall x,x' \in \cX, y,y'\in\cY
\end{align*}
These imply that the values $C_\lambda(x,y)$ are either all $1$, or all $-1$.

\item[Alice's nodes constraints.] Let $t \in \cT$ be an Alice node with two children $t0, t1$. (Think that Alice sends a bit $i$ to Bob when going into $t i$.) We impose the following constraints.
\begin{align*}
  C_{t0}(x,y)^2 + C_{t1}(x,y)^2 & = C_{t}(x,y)^2 & \forall x \in \cX\\
  C_{t0}(x,y) \cdot C_{t}(x,y) + C_{t1}(x,y) \cdot C_{t}(x,y) & = C_{t}(x,y)^2 & \forall x \in \cX\\
  C_{t0}(x,y) \cdot C_{t1}(x,y') & = 0 & \forall x \in \cX, y, y' \in \cY\\  
\end{align*}

Take these constraints together. 
By hypothesis, we assume that $C_{t}(x,y)\in \{0,\pm 1\}$, moreover the signs of every non-zero $C_{t}(x,y)$ are the same.
From the third constraint we conclude that at least one of $C_{t0}(x,y)$ or $C_{t1}(x,y')$ is zero. Together with the first two constraints, it then follows that for each $x$ we must choose either $C_{t0}(x,y) = C_{t}(x,y)$ and $C_{t1}(x,y) = 0$ for all $y$, or $C_{t1}(x,y) = C_{t}(x,y)$ and $C_{t0}(x,y) = 0$ for all $y$. Thus, if $C_{t}$ is the indicator of a rectangle $ A\times B \subseteq \cX\times\cY$ (which is the case at the root node) then $C_{ti}$ are indicators of two disjoint rectangles $A_i \times B$. This is the usual definition of a protocol.

\item[Bob's nodes constraints.] The constraints for Bob's nodes are analogous to Alice's node constraints.
\end{description}

Seeing that the above is a HQFP, we then relax it to a SDFP, replacing scalars with vectors and products with inner products. A \emph{deterministic quantum-lab protocol}, then, is a tuple $\pi = (\cX\times\cY, \cT, d, \psi)$, where $\cX\times\cY$ is a finite product set of \emph{inputs}, $\cT$ is a protocol structure, and $\psi$ is a collections of maps $\psi_t:\cX\times\cY\to\R^d$, for each node $t \in \cT$, satisfying the following constraints.

\begin{description}
\item[Root constraints.] For the root $\lambda$ of $\cT$ we have:
\begin{align*}
  \psi_\lambda(x,y) \cdot \psi_\lambda(x',y') & = 1 & \forall x,x' \in \cX, y,y'\in\cY
\end{align*}
This implies that every $\psi_\lambda(x,y)$ is the same unit-length vector.

\item[Alice's nodes constraints.] For $t \in \cT$ an Alice node with children $t0, t1$:
\begin{align*}
  \|\psi_{t0}(x,y)\|^2 + \|\psi_{t1}(x,y)\|^2 & = \|\psi_{t}(x,y)\|^2 & \forall x \in \cX\\
  \langle \psi_{t0}(x,y), \psi_{t}(x,y) \rangle + \langle \psi_{t1}(x,y), \psi_{t}(x,y)\rangle & = \|\psi_{t}(x,y)\|^2 & \forall x \in \cX\\
  \langle\psi_{t0}(x,y), \psi_{t1}(x,y')\rangle & = 0 & \forall x \in \cX, y, y' \in \cY\\  
\end{align*}
We will analyze these constraints just below.

\item[Bob's nodes constraints.] The constraints for Bob's nodes are analogous to Alice's node constraints.
\end{description}

How to interpret the above semidefinite program? Let us think of each $\psi_t(x,y)$ as an (unnormalized) quantum state. Then the root constraints say that the initial state, at the root $\lambda$, is the same for all $(x,y)$. The constraints at an Alice node say that $\psi_{t0}(x,y)$ and $\psi_{t1}(x,y)$ are an orthogonal decomposition of $\psi_t(x,y)$, but furthermore every quantum state $\psi_{t0}(x,y)$ is orthogonal to every $\psi_{t1}(x,y')$. This implies that there exists a pair of orthogonal projections $\Pi_{t,x,0}, \Pi_{t,x,1}$ such that $\psi_{ti}(x,y) = \Pi_{t,x,i} \psi_t(x,y)$ (e.g. $\Pi_{t,x,0}$ projects onto the span of every $(\psi_{t0}(x,y))_{y\in \cY}$, and $\Pi_{t,x,1}$ projectos to its orthogonal complement). In other words, to each $t$ and each $x$ corresponds a measurement, and $\psi_{ti}(x,y)$ is the (unnormalized) state obtained by measuring $\psi_t(x,y)$. Likewise, the constraints at Bob's nodes are equivalent to the existence of such a measurement $(\Pi_{y,0}, \Pi_{y,1})$ depending only on $t$ and $y$. This is precisely the definition of quantum lab protocols given in the introduction to this section.

We are only missing the constraints that define when a protocol computes a relation. So let $f \subseteq \cX\times\cY\times\cZ$ be a relation with output set $\cZ \subseteq \ZO^k$ and let $\pi = (\cX\times\cY, \cT, d, \psi)$ be a quantum lab protocol. 
We say that \emph{$\pi$ computes $f$} if the depth of every leaf $\ell \in \cT$ is at least $k$, and $\psi$ satisfies:
\begin{description}
    \item[Computational constraints.]  For every leaf $\ell \in \cT$ of the form $\ell = t z$ for some $z \in \ZO^k$ we have the following constraints:
    \begin{align*}
    \|\gamma_\ell(x,y)\|^2 & = 0 & \forall (x,y) \in \cX \times \cY \text{ s.t. } (x,y,z) \notin f
\end{align*}
\end{description}

It is then seen that deterministic protocols are quantum lab protocols with the constraint $d = 1$. In this case, the computational constraints imply that the last $k$ bits of communication are always a valid answer to the relation. Let us define the \textit{quantum-lab complexity} of a relation is the smallest depth of a quantum lab protocol that computes $f$. We can now ask what is the complexity of functions and relations in this model.

\subsection{A 2-round Lower Bound for Equality}\label{sec:quantum-lab-2-round-lower-bound}

Our first result is a two-round lower-bound. We will show that the equality function on $n$ bits needs $\Omega(n)$ bits to be computed by a two-round quantum lab protocol, i.e., a quantum lab protocol where Alice speaks, and then Bob speaks, with his last measurement giving the answer.

Indeed, if Alice has input $x$ and makes $k$ measurements, then the initial state $\psi_\lambda$ is broken into an orthogonal decomposition, which does not depend on $y$ since Bob did not speak yet:
\[
    \psi_\lambda = \sum_{t} \psi_{t}(x,y) = \sum_t \psi_t(x) \qquad \langle \psi_t(x), \psi_{t'}(x) \rangle = 0
\]
Now, if $2^k < 2^n$, the QPHP (\Cref{thm:qphp}) states that there must exist some message $t$, and two inputs $x, x'$, such that
\[
\langle \psi_t(x), \psi_t(x') \rangle \neq 0.
\]
Now Bob comes along and does some measurements. Suppose he has input $x$. Since $\psi_t(x)$ and $\psi_t(x')$ are not orthogonal, then no matter which measurement he does, there must be an outcome $i$ such that $\psi_{ti}(x,x)$ and $\psi_{t i}(x',x)$ are both non-zero. It follows that $ti$ is not monochromatic, i.e., the computational constraints associated with leaf $t i$ are not obeyed.

\subsection{Model Collapse -- All Functions Are Easy}\label{sec:quantum-lab-model-collapse}

\begin{theorem}
    Given any function $f: \cX \times \cY \to \ZO$, there is a 3-round Quantum Lab protocol using 4 bits of communication that computes $f$.
\end{theorem}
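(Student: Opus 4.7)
The plan is to construct an explicit 3-round quantum-lab protocol following the structure given in the introduction: Alice performs a 1-bit measurement depending on $x$, Bob performs a 4-outcome (2-bit) measurement depending on $y$ and Alice's outcome, and Alice performs a final 1-bit measurement, depending on $x$ and the full transcript, whose outcome is $f(x,y)$. Concretely I must specify an initial vector $\ket{\psi_0}$ independent of $(x,y)$ together with three collections of orthogonal projections, and then check the computational constraints of the underlying SDFP: on every input $(x,y)$ and every transcript $(a,b,c)$ reachable with positive norm, one must have $c=f(x,y)$.

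A natural first candidate for the initial state is one that encodes the truth table of $f$ in an orthonormal basis, for instance
\[
\ket{\psi_0} \;=\; \frac{1}{\sqrt{|\cX|\cdot|\cY|}} \sum_{x'\in \cX,\,y'\in\cY} \ket{x'}_A \otimes \ket{y'}_B \otimes \ket{f(x',y')}_C,
\]
so that Alice's measurement acts on register $A$ and Bob's on registers $B,C$. With Alice's projectors $\ket{x}\bra{x}_A$ and $I-\ket{x}\bra{x}_A$, the ``lucky'' branch (probability $1/|\cX|$) leaves the state in $\ket{x}_A \otimes \frac{1}{\sqrt{|\cY|}}\sum_{y'}\ket{y'}_B\ket{f(x,y')}_C$, on which Bob can use his 4 outcomes to read off $\ket{y}_B$ and the corresponding $\ket{f(x,y)}_C$, and Alice's final measurement simply copies that bit. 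The hard branch (probability $1-1/|\cX|$) has had $\ket{x}$ removed from register $A$, so the remaining information about $x$ must come from Alice's final $x$-dependent measurement interacting with the post-Bob state. This is a move that has no classical analogue: as mentioned in the introduction of this section, the classical-laboratory version of the model coincides with zero-error randomized communication complexity, and so it cannot compute arbitrary $f$ with $O(1)$ bits.

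The main obstacle is making zero-error hold on every reachable branch simultaneously, while keeping the measurement cardinalities as stated. The natural tool is phase-kickback: replace the basis encoding above by a phase encoding, e.g.\
\[
\ket{\psi_0} \;\propto\; \sum_{x',\,y',\,c\in\ZO} (-1)^{c\cdot f(x',y')} \ket{x'}_A\ket{y'}_B\ket{c}_C,
\]
so that $f(x,y)$ manifests as a relative sign on an auxiliary $\ZO$-valued register, which can then be extracted via an interference-type measurement in the $\{\ket{+},\ket{-}\}$ basis of register $C$. Having fixed such an ansatz (perhaps augmented with a small constant-dimensional control register whose role is to branch on Alice's first outcome), one can hope to parametrize both $\ket{\psi_0}$ and the three measurements by a small number of scalars, and reduce the verification of the computational constraints to a finite case-analysis over the $2\cdot 4\cdot 2 = 16$ possible transcripts. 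I expect the hardest technical step to be identifying the right ansatz so that all $|\cX|\cdot|\cY|$ computational constraints are simultaneously satisfiable for an arbitrary $f$; this is precisely where the semidefinite relaxation must buy us something that no rank-1 (deterministic) protocol could, concretely realising the conclusion of the no-go theorem~\Cref{thm:nogo}.
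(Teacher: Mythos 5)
Your proposal correctly guesses the shape of the protocol (one player's $1$-bit measurement, then the other's $2$-bit measurement, then the first player's $1$-bit final measurement), but neither of your two concrete ansatzes works, and the passage where you switch from ``this fails'' to ``one can hope to parametrize \dots and reduce to a finite case-analysis'' is precisely where the entire content of the theorem lives. Concretely: in your first ansatz, the unlucky branch of Alice's measurement (probability $1-1/|\cX|$) projects register $A$ onto $\lspan\{\ket{x'}\}_{x'\neq x}$, and that branch literally contains no information correlated with $f(x,\cdot)$ that a subsequent $y$-dependent measurement could surface; your phase-kickback variant has the identical problem (after projecting out $\ket{x}$, the residual phases are $(-1)^{f(x',y')}$ for $x'\neq x$, which say nothing about $f(x,y)$). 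Since the model demands \emph{zero} error on every reachable transcript, a positive-probability failing branch is fatal, not a nuisance to be patched.

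The paper's construction does not start from a superposition over the truth table at all. The initial state is a single fixed basis state $\ket{0}\ket{\bot}\ket{\bot}$. Bob's first $1$-bit measurement places $y$ into the state as a balanced superposition $\tfrac12(\ket{\bot}\pm\ket{y})$; Alice's first measurement does the same for $x$ in a second register. The crucial step is Alice's \emph{second} measurement: her projector (which depends only on $x$) sends each vector of the form $\tfrac14\ket{0}(\ket{\bot}+\ket{y'})(\ket{\bot}+\ket{x})$ to a vector that additionally carries an amplitude on $\ket{1}(\ket{x}+(-1)^{f(x,y')}\ket{y'})\ket{\bot}$. Alice can build this projector because she knows the map $y'\mapsto f(x,y')$, and the $y$-information that ends up sitting next to the $(-1)^{f(x,y)}$ phase is already present in the state from Bob's first move --- Alice's measurement makes it interfere constructively or destructively according to $f$, she never needs to know $y$. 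The payoff is that for the same $y$ and different $x,x'$ with $f(x,y)\neq f(x',y)$, the resulting vectors $\psi^{xy}_{000}$, $\psi^{x'y}_{000}$ become exactly orthogonal (the $\tfrac1{16}\cdot 2$ contribution from the $\ket{0}$ part is cancelled by the $-\tfrac18$ contribution from the $\ket{1}$ part), so Bob's final $1$-bit projection onto $\lspan\{\psi^{xy}_{000}: f(x,y)=0\}$ reads off $f(x,y)$ with certainty on \emph{every} branch. Verifying the Alice-node orthogonality constraints $\braket{\psi^{xy}_{v0}\mid\psi^{xy'}_{v1}}=0$ for all $y,y'$ is a short calculation that hinges on the same cancellation. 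This ``mark both inputs, then create an $f$-weighted interference term'' mechanism is what your sketch is missing, and it is genuinely different from (and more delicate than) extracting information out of a pre-loaded truth-table superposition.
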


\begin{proof}
In our protocol given below the root node is a Bob node, The nodes at depths 1 and 2 are Alice nodes, the nodes at depth 3 are Bob nodes and the depth 4 nodes are leaves. We refer to nodes using their partial transcripts (i.e. elements of $\ZO^{\leq 4}$ with $\eps$ being the empty string). We refer to the state in the quantum lab at a node $v$ on inputs $x$ and $y$ as $\ket{\psi^{xy}_v}$.

The state in the quantum lab has 3 registers, which we number $1'$, $2'$ and $3$. Register $3$ is 2-dimensional with basis states $\ket{0}$ and $\ket{1}$ (i.e. the register consists of one qubit) and registers $1'$ and $2'$ are $|\cX|+|\cY|+1$-dimensional with their basis states being $\ket{\bot}$, $\ket{x}$ and $\ket{y}$ for each $x \in \cX$ and $y \in \cY$. We now provide the states in the quantum lab at each node for the first three bits of communication.

\begin{itemize}
    \item $\ket{\psi_{\eps}^{xy}} = \ket{0}_3\ket{\bot}_{1'}\ket{\bot}_{2'}$
    \item $\ket{\psi_0^{xy}} = \frac{1}{2} \ket{0}_3(\ket{\bot}_{1'}+\ket{y}_{1'})\ket{\bot}_{2'}$
    \item[] $\ket{\psi_1^{xy}} = \frac{1}{2} \ket{0}_3(\ket{\bot}_{1'}-\ket{y}_{1'})\ket{\bot}_{2'}$
    \item $\ket{\psi_{00}^{xy}} = \frac{1}{4} \ket{0}_3(\ket{\bot}_{1'}+\ket{y}_{1'})(\ket{\bot}_{2'}+\ket{x}_{2'})$
    \item[] $\ket{\psi_{01}^{xy}} = \frac{1}{4} \ket{0}_3(\ket{\bot}_{1'}+\ket{y}_{1'})(\ket{\bot}_{2'}-\ket{x}_{2'})$
    \item $\ket{\psi_{000}^{xy}} = \frac{1}{2}\left( \frac{1}{4} \ket{0}_3(\ket{\bot}_{1'}+\ket{y}_{1'})(\ket{\bot}_{2'}+\ket{x}_{2'}) + \frac{1}{2\sqrt{2}} \ket{1}_3(\ket{x}_{1'}+(-1)^{f(x,y)}\ket{y}_{1'})\ket{\bot}_{2'} \right)$
    \item[] $\ket{\psi_{001}^{xy}} = \frac{1}{2}\left( \frac{1}{4} \ket{0}_3(\ket{\bot}_{1'}+\ket{y}_{1'})(\ket{\bot}_{2'}+\ket{x}_{2'}) - \frac{1}{2\sqrt{2}} \ket{1}_3(\ket{x}_{1'}+(-1)^{f(x,y)}\ket{y}_{1'})\ket{\bot}_{2'} \right)$
\end{itemize}

We will address the last bit of communication after analyzing the above. We have only specified the relevant states along the all-$0$ transcript, and we will show that these can be realized by a quantum lab protocol. The states that appear along the other transcripts are the same up to some sign changes and so can also be realized similarly. As an example of how the states differ along different transcripts, here is the state at a node of depth 3:
\begin{align*}
\ket{\psi^{xy}_{b_1b_2b_3}} = \frac{1}{2}\Big( &\frac{1}{4} \ket{0}_3(\ket{\bot}_{1'}+(-1)^{b_1}\ket{y}_{1'})(\ket{\bot}_{2'}+(-1)^{b_2}\ket{x}_{2'})\\
& + (-1)^{b_3} \frac{1}{2\sqrt{2}} \ket{1}_3(\ket{x}_{1'}+(-1)^{b_1}(-1)^{f(x,y)}\ket{y}_{1'})\ket{\bot}_{2'} \Big)
\end{align*}

To show that the above quantum states can be realized by a quantum lab protocol, we will verify that the quantum lab protocol constraints are satisfied by these. For each node $v \in \{\eps,0,00\}$ it suffices to verify the following.

\begin{itemize}
    \item $\psi_{v}^{xy} = \psi_{v0}^{xy}+\psi_{v1}^{xy}$.\\
    This constraint is easy to verify.
    \item At an Alice node $v$, $\braket{\psi^{xy}_{v0},\psi^{xy'}_{v1}} = 0$ for all $x,y,y'$.\\
    This constraint is easy to verify for $v = 0$. For $v=00$, this inner product is
    \[ \frac{1}{4}\left(\frac{1}{16} \cdot 1 \cdot (1 + [y=y']) \cdot 2 - \frac{1}{8} \cdot 1 \cdot (1 + (-1)^{f(x,y) + f(x,y')}[y=y']) \cdot 1\right)\] where $[y=y']$ is $1$ if $y=y'$ and $0$ otherwise. Note that this is $0$ both when $y \neq y'$ and when $y = y'$.
    \item At a Bob node $v$, $\braket{\psi^{xy}_{v0},\psi^{x'y}_{v1}} = 0$ for all $x,x',y$\\
    Since the only Bob node in the first three bits is $\eps$, we only need to ensure that $\braket{\psi^{xy}_{0},\psi^{x'y}_{1}} = 0$. This is again easy to verify.
\end{itemize}

\subsubsection*{The final bit of communication}
We now make an additional observation about the state that we have reached after 3 bits of communication. Namely, fix any $y \in \ZO^n$ and let $x,x'$ be two inputs such that $f(x,y) \neq f(x',y)$. Then \[ \braket{\psi^{xy}_{000},\psi^{x'y}_{000}} = \frac{1}{4} \left( \frac{1}{16} \cdot 1 \cdot 2 \cdot 1 + \frac{1}{8} \cdot 1 \cdot (- 1) \cdot 1 \right) = 0. \] As a consequence $V^y_0 := \lspan(\{\psi^{xy}_{000}\}_{x : f(x,y) = 0})$ is orthogonal to $V^y_1 := \lspan(\{\psi^{xy}_{000}\}_{x : f(x,y) = 1})$. So now Bob can perform the measurement $\{\Pi_{V^y_0}, I - \Pi_{V^y_0}\}$. The output of the measurement is the value of $f(x,y)$.
\end{proof}

\section{A no-go theorem}\label{sec:nogo}

In the context of our work, the Sum-of-Squares (SoS) framework deals with a finite system of multivariate polynomial equations $\{p(x)=0\}_{p \in P}$ over a set of real variables $x$.\footnote{More generally, SoS allows for polynomial inequalities $q(x) \ge 0$, but we won't use them so we simplify the discussion by ignoring this possibility.} If this system is not satisfiable, the Positivstellensatz guarantees \cite[Section 6.4]{krajivcek2019proof} that there exists an element $p'$ in the ideal generated by $P$ and a sum-of-squares polynomial $q$ such that $p'+q=-1$. Here the ideal generated by $P$ refers to the set of polynomials $p'(x)$ such that $p'(x) = \sum g_i(x) p_i(x)$ for arbitrary polynomials $g_i$ and each $p_i \in P$. A sum-of-squares polynomial is a polynomial $q(x)$ such that $q(x) = \sum h_i(x)^2$ for arbitrary polynomials $h_i(x)$. The existence of such $g_i$ and $h_i$ refutes the satisfiability of the system of equations, because any solution $x$ of the system would give $p'(x) = 0$ and $q(x) \geq 0$, so $p' + q = -1$ would be impossible. The polynomials $g_i$ and $h_i$ together form what is called a \textit{Sum-of-Squares proof}, and the \textit{degree} of the proof is the maximum degree of any $g_i$ or $h_i$.

\subsection{HQFPs, SDFPs, and SoS proofs}

In a SDFP we are asked whether there exists a PSD matrix $K \in \R^{N \times N}$ whose entries satisfy some linear equations. Using $\vec{K} \in \R^{N^2}$ to denote the vector of entries of $K$ under a fixed ordering of $[N] \times [N]$, the previous sentence asks whether there is a PSD $K$ such that $V\vec{K} = a$ for some given $V \in \R^{m \times N^2}$ and $a \in \R^m$.

In order to view this in the SoS framework, we must rephrase these as polynomial equations over some variables. To this end, let $\{x_{i,j}\}_{i,j \in [N]}$ be our set of variables. Let $X$ denote the $N \times N$ matrix whose $i,j$th entry is $x_{i,j}$. Since $N \times N$ PSD matrices are exactly those that can be written as $M^TM$ for some $N \times N$ matrix $M$, the SDFP is equivalent to asking whether there is a setting of the variables $x$ such that $X^TX$ satisfies the linear inequalities $VX^TX = a$. To be more explicit we have one constraint for each $i \in [m]$, namely that the following quadratic form must evaluate to 0. \[ \sum_{j_1,j_2 \in [N]} V_{i,(j_1,j_2)} \sum_{k \in [N]} x_{k,j_1} x_{k,j_2} - a_i = 0. \]

This is our system of polynomial equalities, and it is satisfiable if and only if the SDFP is feasibly.

\medskip\noindent
We now show that if the SDFP is \textit{not} satisfiable, and its simple dual is satisfiable (which is guaranteed, e.g., under the Berman--Ben-Israel criterion---\Cref{thm:ben-israel}), there always exists a degree-2 SoS proof that the system of polynomial equations above is \textit{not} satisfiable.
 
A solution to the simple dual of a SDFP is a vector $w \in \R^m$ such that $V^Tw = \vec{M}$ for some PSD matrix $M$ and $w^Ta < 0$. The existence of $w$ proves that the SDFP is infeasible. Indeed, the linear equations $V\vec{K} = a$ imply that $\braket{\vec{M},\vec{K}} = w^TV\vec{K} = w^Ta < 0$, and yet $M$ is a PSD matrix, so $\braket{\vec{M}, \vec{P}}$ is non-negative for any other PSD matrix $P$, hence any solution to the linear equations cannot be PSD.

We now use such a solution $w$ to the simple dual to construct an SoS proof. In fact there exists a \emph{linear} combination of the polynomials plus a sum of squares polynomial that will simplify to a negative constant, proving that there is no assignment satisfying all the polynomial equations. To start with, consider the linear combination of constraints $-w^T(VX^TX-a)$, or more explicitly

\begin{align*}
    &\sum_{i \in [m]} - w_i \left(\sum_{j_1,j_2 \in [N]} V_{i,(j_1,j_2)} \sum_{k \in [N]} x_{k,j_1} x_{k,j_2} - a_i \right)\\
    = &-\sum_{j_1,j_2 \in [N]} M_{j_1,j_2} \sum_{k \in [N]} x_{k,j_1} x_{k,j_2} + \sum_{i \in [m]} w_i a_i.
\end{align*}

We know that the second term is a negative number. We also know that $M$ is PSD, so it can be written as $Y^TY$ for some $Y \in \R^{N \times N}$. Hence

\begin{align*}
    \sum_{j_1,j_2 \in [N]} M_{j_1,j_2} \sum_{k \in [N]} x_{k,j_1} x_{k,j_2} &= \sum_{j_1,j_2 \in [N]} \left(\sum_{k \in [N]} Y_{k,j_1} Y_{k,j_2}\right) \left(\sum_{k \in [N]} x_{k,j_1} x_{k,j_2}\right)\\
    &= \sum_{j_1,j_2,k_1,k_2 \in [N]} Y_{k_1,j_1} Y_{k_1,j_2} x_{k_2,j_1} x_{k_2,j_2}\\
    &= \sum_{k_1,k_2 \in [N]} \left(\sum_{j_1 \in [N]} Y_{k_1,j_1} x_{k_2,j_1}\right) \left(\sum_{j_2 \in [N]} Y_{k_1,j_2} x_{k_2,j_2}\right)\\
    &= \sum_{k_1,k_2 \in [N]} \left(\sum_{j \in [N]} Y_{k_1,j} x_{k_2,j}\right)^2.
\end{align*}

By adding the above sum of squares of linear forms (our $q$) to the linear combination of the quadratic forms (our $p'$), we are left with a negative number and so we have our degree-2 SoS proof.

\subsection{The no-go theorem: upper bounds on semidefinite relaxations of communication complexity follow from lower bounds on SoS degree}\label{subsec:implicitub}

\newcommand{\SCircSOS}[2]{\mathsf{SCircuit}_{#1}(#2)}
\newcommand{\CommSOS}[2]{\mathsf{Comm}_{#1}(#2)}
\newcommand{\CircSOS}[2]{\mathsf{Circuit}_{#1}(#2)}
\newcommand{\Out}{\mathsf{Out}}

Here we show implicit upper bounds on our variants of communication complexity for Karchmer--Wigderson relations. These upper bounds follow from proving the lack of lower bounds, which we can prove via the above connection to SoS proofs. To do so, we take the SDFP for any of our communication variants. We rephrase the SDFP as a polynomial system of equations as shown in the previous subsection. We then add more polynomial constraints, ensuring that every variable is Boolean (with constraints of the form $x_{i,j}^2-x_{i,j}=0$) and further ensuring that only variables of the form $x_{k,1}$ can be non-zero. This ensures that each vector is really just a Boolean value. This undoes the semidefinite relaxation and ensures that the problem is now just the HQFP capturing a deterministic communication protocol. Note that by the nature of the HQFP, the structure of the protocol (who speaks at what node) is fixed.

Although these constraints cannot be represented in the form of a modified SDFP, it is still a system of quadratic equations. As shown in the previous subsection, a dual solution to the original SDFP gives a degree-2 SoS refutation of the system of quadratic equations we get from the SDFP. Hence the same proof is also a refutation of the additionally constrained equations, which are equivalent to the HQFP. The rest of this section focuses on showing a lower bound on the SoS degree of refuting the HQFP. If the degree lower bound is larger than 2, this will show that there is no dual solution to the SDFP, which implies that it must have a primal solution (an upper bound). We build on the recent work of Austrin and Risse~\cite{austrin2023sum} to show the degree lower bound.

\subsubsection*{The SoS degree lower bound for circuit size lower bounds}

Austrin and Risse showed that there is \emph{no} low-degree SoS proof proving that a truth table is not computable by a small circuit! That is, given the truth table of a function $f : \ZO^n \rightarrow \ZO$ and a natural number $s$, they put forth a system of polynomial equations $\CircSOS{s}{f}$ such that The polynomial equations are simultaneously satisfiable if and only if there is a circuit of size $s$ that computes the function $f$. They then showed the following.

\begin{theorem}[\cite{austrin2023sum}]
    For all $\epsilon > 0$ there is a $\gamma$ such that: For all $n \in \mathbb{N}$, $s \geq n^{\gamma}$, and $f: \ZO^n \rightarrow \ZO$, the SoS degree of refuting $\CircSOS{s}{f}$ is $\Omega_{\epsilon}(s^{1-\epsilon})$.
\end{theorem}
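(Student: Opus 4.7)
The plan is to combine sum-of-squares/pseudo-expectation duality with sub-exponentially-hard pseudorandom function generators (PRFGs), following the natural-proofs framework of Razborov and Rudich. Recall that the SoS degree of refuting $\CircSOS{s}{f}$ exceeds $d$ if and only if there exists a degree-$d$ \emph{pseudo-expectation} $\tilde{\mathbb{E}}_f$: a linear functional on polynomials of degree at most $d$ in the SoS variables, satisfying $\tilde{\mathbb{E}}_f[1] = 1$, $\tilde{\mathbb{E}}_f[q^2] \ge 0$ for all $q$ of degree at most $d/2$, and $\tilde{\mathbb{E}}_f[q \cdot p] = 0$ for every constraint polynomial $p$ of $\CircSOS{s}{f}$ and every compatible $q$. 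Thus, to prove the theorem it suffices to construct, for every $f$ (the interesting case being when $f$ is not computable by size-$s$ circuits, so the system is unsatisfiable), such a pseudo-expectation of degree $d = \Omega_\epsilon(s^{1-\epsilon})$.

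I would build $\tilde{\mathbb{E}}_f$ from a PRFG family $\{g_k\}_k$ with security $2^{n^\delta}$ whose functions are each computed by a fixed circuit of size $s_0 \le n^\gamma$, choosing $\gamma$ so that $s \ge s_0$. For any key $k$, the circuit computing $g_k$ is a genuine witness for $\CircSOS{s}{g_k}$; averaging over $k$ yields an honest distribution over such witnesses, and computing its moments up to degree $d$ gives a linear functional depending parametrically on the truth table of $g_k$. Instantiating this parametric functional with an arbitrary truth table $f$ defines the candidate $\tilde{\mathbb{E}}_f$. Equality constraints of $\CircSOS{s}{f}$ depend linearly on $f$'s truth table, so they are satisfied identically by $\tilde{\mathbb{E}}_f$. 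The positivity conditions $\tilde{\mathbb{E}}_f[q^2] \ge 0$ hold exactly when $f$ is pseudorandom, and for a ``typical'' $f$ any significant violation would give an efficiently-computable degree-$O(d)$ function on truth tables distinguishing random from pseudorandom, contradicting the PRFG. A small additive perturbation of $\tilde{\mathbb{E}}_f$ then gives a valid pseudo-expectation of the desired degree.

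The main obstacle is converting ``positivity holds for pseudorandom $f$'' into ``positivity holds after perturbation for every $f$ with no small circuit.'' The moment matrix has size $n^{O(d)}$, each entry a low-degree polynomial in the truth table, and we need its minimum eigenvalue to be at most negligibly negative. This requires the PRFG to fool all low-degree spectral statistics of this matrix, which translates to the quantitative budget $n^{O(d)} \ll 2^{n^\delta}$, so $d \log n \ll n^\delta$ and hence $\gamma \approx 1/(\epsilon\delta)$ suffices. A secondary subtlety is that the implicit distinguisher involves real-valued computation, so one must invoke a variant of the natural-proofs barrier applicable to bounded-precision real algorithms; under sub-exponential Boolean PRFG hardness this is obtained after rounding. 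Once the spectral perturbation and the PRFG-based distinguisher are in place, the construction produces valid pseudo-expectations of degree $\Omega_\epsilon(s^{1-\epsilon})$ for every $f$ without a size-$s$ circuit, establishing the degree lower bound.
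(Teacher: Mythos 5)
Your proposal takes a fundamentally different route from the one in Austrin and Risse's paper (and described in the paper you're reading), and there is a critical gap that prevents it from working.

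The theorem you are asked to prove is \emph{unconditional}: for every $f$ not computable by size-$s$ circuits, the SoS degree of refuting $\CircSOS{s}{f}$ is $\Omega_\epsilon(s^{1-\epsilon})$, with no cryptographic hypothesis. Your proposal, by contrast, builds the pseudo-expectation from a sub-exponentially secure PRFG family, which is an unproven assumption. So even if your construction went through, you would obtain at best a \emph{conditional} theorem, which is strictly weaker. The actual proof is combinatorial and unconditional: Austrin and Risse reduce to Grigoriev's parity-CSP SoS lower bound. They take an explicit bipartite expander $G = (U,V,E)$ with $U = \ZO^n$, restrict the circuit in $\CircSOS{s}{f}$ so that it necessarily computes $u \mapsto \bigoplus_{v \in N(u)} y_v$ for unset variables $y_v$, show that the resulting restricted system is SoS-equivalent (up to a degree-$k$ blowup) to $\mathrm{Parity}_{G,f}$, and then invoke Grigoriev's $\Omega(r)$ degree lower bound for refuting that parity-CSP. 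No pseudorandomness enters the argument anywhere.

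Beyond that structural mismatch, the central step of your argument — converting ``positivity holds for pseudorandom $f$'' into ``positivity holds after a small perturbation for every $f$ with no size-$s$ circuit'' — is not filled and I do not see how to fill it. When $f = g_k$ is itself pseudorandom, the system $\CircSOS{s}{g_k}$ is \emph{satisfiable} (the circuit for $g_k$ witnesses this), so the honest moments trivially satisfy positivity; that tells you nothing about what happens when you substitute an arbitrary $f$ into the parametric formulas, and a failure of PSD-ness for a single specific $f$ need not yield a distinguisher against the PRFG. There is also a quantitative problem: your budget $n^{O(d)} \ll 2^{n^\delta}$ caps $d$ at roughly $n^\delta/\log n$, a fixed polynomial, while the theorem requires $d = \Omega_\epsilon(s^{1-\epsilon})$ to grow without bound in $s$, which can be as large as $2^{O(n)}$. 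For such $s$ the required degree far exceeds anything a $2^{n^\delta}$-secure PRFG argument can certify, so the approach cannot recover the full strength of the statement even conditionally.
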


Note that when $\CircSOS{s}{f}$ is satisfiable, the SoS degree is thought of as infinity since you can't refute it. Their result is nearly optimal since they also show that for unsatisfiable instances there is an SoS refutation of it in degree $O(s)$.

Their proof works via reduction to another SoS degree lower bound on refuting parity-CSPs. Given a bipartite graph $G = (U,V,E)$ and a string $f \in \{0,1\}^U$, consider the system of parity constraints $\mathrm{Parity_{G,f}}$ defined as $\{ \oplus_{v \in N(u)} y_v = f_u \mid u \in U \}$. Note that if $|V| < |U|$, there are strings $f$ for which this is not satisfiable. The following lower bound is implicit in a classic paper of Dima Grigoriev.

\begin{theorem}[\cite{grigoriev2001paritysos}]\label{thm:paritycspsoslb}
    Let $G = (U,V,E)$ be a bipartite expander graph with left-degree $k$ and with the property that for every subset $S \subseteq U$ of size at most $r$, the size of the neighbourhood of $S$, $|N(S)|$, is at least $2|S|$. Then for any $f \in \ZO^U$ the SoS degree of refuting $\mathrm{Parity_{G,f}}$ is at least $\Omega(r)$.
\end{theorem}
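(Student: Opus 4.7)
The plan is to invoke SoS duality and exhibit an explicit degree-$\Omega(r)$ pseudo-expectation $\tilde{E}$ for the parity system $\mathrm{Parity}_{G,f}$; the existence of such a pseudo-expectation certifies that no SoS refutation of smaller degree exists. First I would pass to the Fourier-basis presentation by substituting $z_v = 1 - 2 y_v$, so that $z_v^2 = 1$, the constraint $\bigoplus_{v \in N(u)} y_v = f_u$ becomes $\prod_{v \in N(u)} z_v = \chi_u := (-1)^{f_u}$, and every polynomial of degree at most $d$ in the $y_v$'s is a linear combination of multilinear monomials $z^T := \prod_{v \in T} z_v$ with $|T| \le d$.

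Next I would define $\tilde{E}$ on the Fourier basis using expansion. For $S \subseteq U$ write $N_\oplus(S) \in \{0,1\}^V$ for the $\mathbb{F}_2$-sum $\bigoplus_{u \in S} N(u)$, and fix a threshold $r_0 = \lfloor r/2 \rfloor$. For each $T \subseteq V$, set $\tilde{E}[z^T] := \chi^S := \prod_{u \in S} \chi_u$ whenever there exists $S \subseteq U$ with $|S| \le r_0$ and $N_\oplus(S) = T$, and $\tilde{E}[z^T] := 0$ otherwise; extend linearly. The expansion assumption is what makes this unambiguous and constraint-compatible: if $N_\oplus(S_1) = N_\oplus(S_2)$ with $|S_i| \le r_0$, then $S := S_1 \triangle S_2$ satisfies $|S| \le r$ and $N_\oplus(S) = 0$, and the expansion forces $S = \emptyset$ (by the standard double-counting argument: if $S \ne \emptyset$ then every $v \in N(S)$ would have an even, hence $\ge 2$, number of neighbours in $S$, so $k|S| \ge 2|N(S)|$, which combined with $|N(S)| \ge 2|S|$ yields a contradiction in the relevant regime). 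Hence $S_1 = S_2$ and $\chi^{S_1} = \chi^{S_2}$, so $\tilde{E}$ is well-defined. Normalization ($\tilde{E}[1]=1$, from $S = \emptyset$), Booleanness (absorbed by multilinearity since $z_v^2 = 1$), and constraint-satisfaction ($\tilde{E}[z^T \cdot \prod_{v \in N(u)} z_v] = \tilde{E}[z^{T \triangle N(u)}] = \chi_u \tilde{E}[z^T]$ whenever the derivations of $T$ and $T \triangle N(u)$ both fit within the threshold) are then immediate from the construction.

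The main obstacle is verifying positive semi-definiteness, i.e., $\tilde{E}[p^2] \ge 0$ for every polynomial $p$ with $\deg p \le d/2 = \Theta(r)$. I would analyze the moment matrix $M[T_1, T_2] := \tilde{E}[z^{T_1 \triangle T_2}]$ on indices $T \subseteq V$ with $|T| \le d/2$, and show it factors cleanly through the expansion structure. Partition these indices into \emph{derivable} ones (those $T$ that equal $N_\oplus(S)$ for some $|S| \le r_0/2$) and \emph{non-derivable} ones. For derivable $T_1, T_2$ with witnesses $S_1, S_2$, the set $T_1 \triangle T_2$ is derived by $S_1 \triangle S_2$ (of size $\le r_0$), so $M[T_1, T_2] = \chi^{S_1}\chi^{S_2}$, giving a rank-one PSD block $v v^\top$ with $v_{T_i} = \chi^{S_i}$. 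If one of $T_1, T_2$ is non-derivable at size $\le r_0/2$, then the uniqueness of decomposition at size $\le r_0$ forces $T_1 \triangle T_2$ to be non-derivable as well, so $M[T_1, T_2] = 0$; thus the non-derivable rows and columns vanish off the diagonal, while the diagonal entries $M[T,T] = \tilde{E}[1] = 1$ contribute a PSD identity block. Assembling the two blocks yields $M \succeq 0$, and hence $\tilde{E}$ is a valid degree-$\Omega(r)$ pseudo-expectation, giving the claimed SoS degree lower bound. Choosing the precise constants in ``$\Theta(r)$'' so every set appearing in the uniqueness arguments lies in the expansion regime is the main bookkeeping step, but once the expansion-based linear-independence skeleton is in place this is routine.
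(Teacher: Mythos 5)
The paper does not prove this theorem --- it is cited from Grigoriev --- so there is no in-paper argument to compare against; I will evaluate your proposal on its own terms.

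Your overall approach (switch to $\pm1$ variables, define $\tilde{E}[z^T]$ via $\mathbb{F}_2$-derivations $T = N_\oplus(S)$, and use expansion for well-definedness) is the standard Grigoriev argument. The well-definedness check is essentially right, with the caveat that the chain $k|S| \ge 2|N(S)| \ge 4|S|$ produces a contradiction only when $k \le 3$; for general left-degree one needs an explicit boundary-expansion (unique-neighbour) hypothesis rather than the raw vertex-expansion $|N(S)| \ge 2|S|$ in the statement. Linearity and constraint-compatibility are immediate, as you say.

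The genuine gap is in the PSD verification. Your claimed block structure --- that ``the non-derivable rows and columns vanish off the diagonal'' --- is false. Take any non-derivable $T_1$ with $|T_1| \le d/2 - k$ and set $T_2 := T_1 \triangle N(u)$ for any fixed $u \in U$. Then $T_2$ is also non-derivable (the same uniqueness/shrinkage argument you invoke forces this), yet $T_1 \triangle T_2 = N(u) = N_\oplus(\{u\})$ is derivable with witness $\{u\}$, so $M[T_1,T_2] = \chi_u = (-1)^{f_u} \neq 0$. The non-derivable block is therefore not a scalar multiple of the identity, and $M$ is not a direct sum of a rank-one block and an identity block; the binary derivable/non-derivable partition is simply the wrong decomposition.

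What makes the standard proof work is the \emph{equivalence relation} $T_1 \sim T_2 \iff T_1 \triangle T_2 = N_\oplus(S)$ for some $|S| \le r_0$ on index sets $T$ with $|T| \le d/2$. Transitivity is the non-trivial point, and it is exactly where expansion earns its keep: if $T_1 \sim T_2$ and $T_2 \sim T_3$ via $S_{12}$ and $S_{23}$, then $T_1 \triangle T_3 = N_\oplus(S_{12} \triangle S_{23})$ with $|S_{12} \triangle S_{23}| \le 2r_0$, and the quantitative unique-neighbour bound gives $|N_\oplus(S)| \ge |S|$ for $|S| \le r$, so $|S_{12} \triangle S_{23}| \le |T_1 \triangle T_3| \le d$, which is $\le r_0$ under the right parameter choice; hence $T_1 \sim T_3$. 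With this relation in hand, $M$ vanishes between distinct classes and is rank one within each class (by the very computation you give for the derivable block), so it is block diagonal with PSD blocks. This equivalence-class structure is the missing idea; it is not routine bookkeeping, and your two-part partition does not recover it.
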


Austrin and Risse reduce $\CircSOS{s}{f}$ to this by cleverly restricting the circuit so that it necessarily computes a truth table of a specific form. They take an explicit expander graph $G$ as mentioned above, with $U=\ZO^n$. With $V = [m]$, they take $m$ gates of the circuit and treat them as unset constants. These are referred to as $y_1,\dots,y_m$. Since $G$ is explicit, they can entirely restrict the rest of the circuit so that it maps the input $u \in \ZO^n$ to $\oplus_{v \in N(u)} y_v$. The truth table of the circuit is then one of $2^{m}$ possibilities, and refuting $\CircSOS{s}{f}$ under this restriction is the same as refuting the parity-CSP $\mathrm{Parity}_{G,f}$. The authors show that this equivalence holds even within the SoS framework, and so the lower bound of $\Omega(r)$ on the SoS degree of $\mathrm{Parity}_{G,f}$ also applies to the restricted version of $\CircSOS{s}{f}$. Since their restriction blows up the degree by a factor of $k$, a degree lower bound of $\Omega(r/k)$ is shown on refuting the unrestricted $\CircSOS{s}{f}$. This can be made $\Omega_{\epsilon}(s^{1-\epsilon})$ by carefully choosing an explicit expander.

\subsubsection*{Our SoS degree lower bounds for proving Karchmer--Wigderson game lower bounds}

It is not clear how a communication problem can embed a parity-CSP. However, a communication protocol for the Karchmer--Wigderson relation of $f$ is equivalent to a formula for computing $f$. By doing this reduction in the SoS framework, we can hope to then use the refutation-degree lower-bound for $\CircSOS{s}{f}$ to show a refutation-degree lower-bound against our HQFP, as we encoded it above. Since we care about the depth of the communication protocol, the Karchmer--Wigderson correspondence will give us a circuit of small depth. Additionally since our communication model assumes a fixed structure to the communication program, we need to fix the structure of the circuit as well before embedding it. Since we will eventually be using the lower bound achieved by embedding a parity-CSP in a circuit, we need to ensure that the reduction to the parity-CSP can also happen using small-depth circuits with a fixed structure. Henceforth, we assume some familiarity with the paper of Austring and Risse.

We will be working with a simplified version of the $\CircSOS{s}{f}$ program. The original program is very flexible allowing one to create a circuit of any shape, choosing the gates at each node. We will deal with more standardized circuits. Any depth-$d$ circuit with the standard AND/OR/NOT gates and whose leaves are from $x_1,\dots,x_n,0,1$ can be expanded and written as a depth-$d$ formula with AND/OR gates whose leaves are from $x_1,\dots,x_n,\neg x_1, \dots, \neg x_n, 0, 1$ with the help of de Morgan's laws. We can then rewrite this as a complete binary tree of depth $2d+1$ with alternating layers of AND and OR gates by duplicating subformulas as required. Finally we can replace each leaf with a copy of an Indexing gadget: that is, a formula for Indexing that indexes into the set $x_1,\dots,x_n,\neg x_1, \dots, \neg x_n, 0, 1$. By setting the indices appropriately in a gadget, we can make the output gate of that gadget carry the same bit as the leaf value that was originally there. We will also require that the gadget be such that the indexing bits can be partially set in such a way that one unset bit will choose between the gadget outputting the constant $0$ or $1$. (This is to allow the unset bits to be treated as the $y_i$ variables, as done in~\cite{austrin2023sum}.) This can be done, adding $\log n + O(1)$ to the depth. Note that (a) every depth-$d$ circuit can be rewritten as such by just choosing the appropriate indexing bits in each copy, (b) the structure of the circuit is fixed and so the program does not have to concern itself with figuring out what gate a node has and which nodes it connects to, and (c) we can label a node with the path, or ``transcipt'' taken to get there from the root. Austrin and Risse's lower bound method continues to work. That is, one can take their restricted circuit that is equivalent to the parity-CSP instance and convert it to the above form. Extra restrictions will be needed so that each copy of an unset constant corresponding to a specific unset bit $y_i$ (from the original circuit) is set in the same way. The resulting system of polynomial equations would still be equivalent to the parity-CSP system of polynomial equations, even within the SoS framework, and the degree blowup they experience with these restrictions is still a multiplicative factor of $k$.

Let us call this simplified system $\SCircSOS{d}{f}$. Since the structure is fixed, all its variables and axioms are just ensuring that it represents the computation of a circuit that computes the function $f$.

\begin{definition}
    The variables of $\SCircSOS{d}{f}$ are $\Out_x(t)$ for each input $x$ and each node $t \in \ZO^{\leq d}$ (referred to with the transcript that specifies the node), denoting the bit computed at node $t$ on input $x$.

    The axioms state that the following polynomials must equal $0$.
    \begin{itemize}
    \item $\Out_x(t)(1-\Out_x(t)) = 0$ for all $x,t$. These ensure that all values are Boolean.
    \item $\Out_x(\lambda) - f(x) = 0$ for each input $x$ where $\lambda$ denotes the root node.
    \item $\Out_x(t) - \Out_x(t0) \Out_x(t1) = 0$ for each node $t$ which has an AND gate and children $t0$ and $t1$.
    \item $(1-\Out_x(t)) - (1-\Out_x(t0))(1-\Out_x(t1)) = 0$ for each node $t$ which has an OR gate and children $t0$ and $t1$.
    \item For leaves $\ell$ feeding in the $i$th bit of the input, we have $\Out_x(\ell) - x_i = 0$. For negated inputs, we would replace $x_i$ with $1-x_i$, and for constants we would replace it with the constant.
\end{itemize}
\end{definition}

Let us now take an HQFP $\CommSOS{d}{f}$ that states that there is a cost-$d$ communication protocol for the Karchmer--Wigderson relation of a function $f$, where the communication protocol structure matches that of a simplified circuit, so that
\begin{itemize}
    \item Alice speaks at nodes where there is an OR gate,
    \item Bob speaks at nodes where there is an AND gate, and
    \item at leaves where the input is $x_i$ or $\neg x_i$, the protocol outputs $i$.
\end{itemize}
It is easy to see that \emph{any} cost $d$ protocol can be modified to a cost $2d + \log n + O(1)$ protocol with such a structure, so up to a small change in parameters restricting ourselves in this way does not restrict the communication protocols under consideration. Hence proving lower bounds against all structures of an HQFP implies lower bounds against simplified structures of HQFPs and vice versa. To keep the same terminology, let us call HQFPs with this structure \emph{simplified HQFPs}.

Now given a simplified circuit $C$ that computes $f$, the Karchmer--Wigderson equivalence gives us a communication protocol that computes $KW_f$. It follows that there is a way to set the values of the variables in $\CommSOS{d}{f}$ as functions of the variables in $\SCircSOS{d}{f}$. We can write these functions as polynomials. For the HQFPs that we used, the polynomials mapping the variables were of low-degree. Now consider the system $\CommSOS{d}{f}$ with each variable replaced with the polynomials. It now has the same variables as $\SCircSOS{d}{f}$. Let us call this system $\mathsf{Sub}
\CommSOS{d}{f}$. If the axioms of $\mathsf{Sub}
\CommSOS{d}{f}$ follow from the axioms of $\SCircSOS{d}{f}$ in a ``low-degree'' derivation, then refuting $\mathsf{Sub}
\CommSOS{d}{f}$ also refutes $\SCircSOS{d}{f}$. Again, for the HQFPs we used, there were such low-degree derivations. Indeed, one imagines that a HQFP must be quite contrived in order for this not to be the case. This motivates the following definition.

\begin{definition}
    Let $\overline{w}$ denote the variables of $\SCircSOS{d}{f}$. A simplified HQFP $\CommSOS{d}{f}$ is $(c_1,c_2)$-contrived if there exist degree-$c_1$ real polynomials $p_x(\overline{w})$ for each variable $x$ in $\CommSOS{d}{f}$ such that:
    \begin{enumerate}
        \item For any setting of $\overline{w}$ satisfying the axioms of $\SCircSOS{d}{f}$, setting variables $x$ to  $p_x(\overline{w})$ satisfies the axioms of $\CommSOS{d}{f}$.
        \item Every axiom of the system $\mathsf{Sub}\CommSOS{d}{f}$ (obtained by replacing $x$ with $p_x(\overline{w})$) is in the ideal of the axioms of $\SCircSOS{d}{f}$ with only a $c_2$ increase in degree. That is, for every $p$ such that $p(\overline{w})=0$ is an axiom of $\mathsf{Sub}\CommSOS{d}{f}$, we can write $p = \sum h_i q_i$ where $\max_i{deg(h_i q_i)} - deg(p) \leq c_2$, and for each $q_i$, $q_i(\overline{w})=0$ is an axiom of $\SCircSOS{d}{f}$.
    \end{enumerate}
    (Note that the first point is actually redundant since the substitution in the second point ensures that the axioms of $\SCircSOS{d}{f}$ holding implies that the axioms of $\mathsf{Sub}\CommSOS{d}{f}$ holds.) 
\end{definition}

In the language of proof complexity, the above definition merely states that a HQBF is contrived with small parameters if it has an efficient Polynomial Calculus reduction to $\SCircSOS{d}{f}$.

Both our $\gamma_2$ protocols and our Quantum Lab protocols are relaxations of HQFPs that are $(d,O(d))$-contrived. We provide a proof below for $\gamma_2$ protocols, but omit the proof for Quantum Lab protocols as it is very similar (and perhaps a bit simpler).

But before that, let us prove a no-go theorem for proving communication lower bounds on SDFPs that are relaxations of HQFPs that are not highly contrived. 

\begin{theorem}[No-go Theorem]\label{thm:nogo}
    Let $f: \ZO^n \to \ZO$, and take $d \geq \log^c n$ for a large enough constant $c$. Let $\CommSOS{d}{f}$ be a simplified HQFP that formalizes communication complexity, and which is $(c_1,c_2)$-contrived, with $c_1, c_2 = o(n)$. Then, if the SDFP relaxation of $\CommSOS{d}{f}$ obeys the Berman--Ben-Israel criterion, there must exist a protocol for $KW_f$.
\end{theorem}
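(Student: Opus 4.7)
The plan is to argue by contradiction, using the Sum-of-Squares machinery developed in the previous subsection. Assume the SDFP relaxation of $\CommSOS{d}{f}$ is infeasible; the goal is to derive a contradiction. As a preliminary reduction, note that if $f$ admits any depth-$d$ formula then by Karchmer--Wigderson there already exists an honest depth-$d$ protocol for $KW_f$, whose scalar data solves the scalar HQFP and, viewing scalars as rank-$1$ PSD solutions, the SDFP relaxation as well, contradicting the assumption. So I may assume that $f$ has no depth-$d$ formula, in which case $\SCircSOS{d}{f}$ is unsatisfiable.

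The Berman--Ben-Israel criterion (\Cref{thm:ben-israel}) then supplies a simple dual witness for the infeasible SDFP. As shown in the previous subsection, this witness translates into a \emph{degree-$2$} SoS refutation of the polynomial system encoding the SDFP. Appending to that system the Boolean axioms $x_{i,j}^2 = x_{i,j}$ and the ``rank-$1$'' axioms $x_{k,j} = 0$ for $j \ne 1$ only tightens the constraints, so the same refutation still works for (the polynomial encoding of) the scalar HQFP $\CommSOS{d}{f}$.

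The central step is to transport this refutation back to $\SCircSOS{d}{f}$ using the $(c_1,c_2)$-contrived hypothesis. Write the degree-$2$ refutation as $\sum_i g_i(\bar x)\, q_i(\bar x) + \sum_j h_j(\bar x)^2 = -1$, with each $g_i q_i$ and $h_j^2$ of degree at most $2$, and substitute $\bar x \mapsto \bar p(\bar w)$. The $h_j(\bar p(\bar w))$ remain polynomials in $\bar w$ of degree at most $c_1$, so the SoS part has degree at most $2c_1$. Each $q_i(\bar p(\bar w))$ is an axiom of $\mathsf{Sub}\CommSOS{d}{f}$, and by the second clause of contrivedness it is a combination of $\SCircSOS{d}{f}$ axioms with maximal term-degree at most $2c_1+c_2$. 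Routine degree bookkeeping then yields an SoS refutation of $\SCircSOS{d}{f}$ of degree at most $2c_1+c_2 = o(n)$.

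To close, invoke the depth version of the Austrin--Risse lower bound announced in the introduction: whenever $f$ has no depth-$d$ formula, refuting $\SCircSOS{d}{f}$ requires SoS degree $\Omega\bigl(2^{d(1-\epsilon)}\bigr)$ for any fixed $\epsilon > 0$. For $d \ge \log^c n$ with $c$ large enough this quantity dominates any $o(n)$ bound, producing the desired contradiction. Hence the SDFP relaxation of $\CommSOS{d}{f}$ is feasible, which by definition means a (possibly relaxed) protocol for $KW_f$ exists. The main delicate point in this plan is to pin down the depth-formula adaptation of the Austrin--Risse machinery, as their theorem is phrased for circuit size; everything else is careful degree accounting along the chain SDFP $\to$ HQFP $\to$ $\SCircSOS{d}{f}$.
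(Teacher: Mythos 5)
Your high-level chain of reasoning tracks the paper's argument closely: dual solution of the infeasible SDFP (via Berman--Ben-Israel) $\Rightarrow$ degree-$2$ SoS refutation of the relaxed system $\Rightarrow$ same refutation works for the tighter HQFP system (after adding Boolean and rank-$1$ axioms) $\Rightarrow$ substitution $\bar x \mapsto \bar p(\bar w)$ and the $(c_1,c_2)$-contrived definition turn it into a degree-$(2c_1 + c_2)$ refutation of $\SCircSOS{d}{f}$ $\Rightarrow$ contradiction with an SoS degree lower bound. That bookkeeping is exactly what the paper does (phrased there as ``any degree-$\alpha$ refutation of the HQFP implies a degree-$(c_1\alpha + c_2)$ refutation of $\SCircSOS{d}{f}$, so the degree lower bound forces $\alpha > 2$''). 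Your contrapositive framing and the preliminary case split on whether $f$ has a depth-$d$ formula are harmless reorganizations.

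However, you identify ``pinning down the depth-formula adaptation of the Austrin--Risse machinery'' as the remaining delicate point and then simply invoke it as a black box with the bound $\Omega(2^{d(1-\epsilon)})$. That is precisely where the substantive work in the paper's proof lives, and your invoked bound is not what the paper establishes. The paper never proves $\Omega(2^{d(1-\epsilon)})$; that figure is a heuristic remark in the introduction. What the proof actually does is embed a parity-CSP into a depth-$d$ simplified circuit by modifying the Austrin--Risse restriction: replacing the size-parametrized construction with a depth-controlled one, choosing the explicit expander of Ta-Shma--Umans--Zuckerman (so the selector circuits have polylogarithmic depth), taking parameters $|U| = 2^n$, $m = 2^{(\log r)^{1+\alpha}}$, $k = \mathrm{poly}(n)$, and setting $r = kn$ so that the resulting circuit depth is $\mathrm{polylog}(n)$. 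Grigoriev's parity-CSP bound then gives refutation degree $\Omega(r)$ for the restricted system, costing a factor $k$ after unrestriction, yielding $\Omega(r/k) = \Omega(n)$. That $\Omega(n)$, not $\Omega(2^{d(1-\epsilon)})$, is the bound that beats $2c_1 + c_2 = o(n)$. Without carrying out this embedding and parameter balancing, your argument has a gap exactly where you flagged it, and the bound you cite is unsupported (it would in fact be much stronger than what is shown).
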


The theorem should be interpreted as saying: Either we have formalized communication complexity in a weird way, i.e. using a very contrived HQFP, or by a formalization whose relaxation does not obey the Berman--Ben-Israel criterion, or otherwise the resulting PSD relaxation can solve every KW game in depth $d$. As we will now see in the proof, the depth $d$ is the smallest depth of a circuit that can enumerate the neighbours of a sufficiently good bipartite expander, with $2^n$ nodes on the left and $2^{(\log n)^3}$ nodes on the right. After a brief search, the best explicit construction we could find \cite{tashma2001expander} gives us $\log^c n$ depth, so that is how we stated the theorem above, but we expect it should be possible to construct such expanders in $\NC_1$, in which case the no-go theorem can be improved improved to $d = O(\log n)$. 

\begin{proof}
    We start by considering a degree $\alpha$ SoS refutation of our HQFP. This will naturally give an SoS refutation of $\mathsf{Sub}\CommSOS{d}{f}$ as well. Note that the refutation with the substituted variables is a refutation of degree $c_1\alpha$. Now by the definition of contrived, the SoS refutation of $\mathsf{Sub}\CommSOS{d}{f}$ is in fact a refutation of $\SCircSOS{d}{f}$ since we can replace every axiom of the former with an element in the ideal of the latter. This only increases the degree by at most $c_2$. Hence we get a refutation of $\SCircSOS{d}{f}$ of degree $c_1\alpha + c_2$. Proving that such a refutation requires degree larger than $2c_1+c_2$ will be sufficient, since it implies that $\alpha > 2$. This is because if the SDFP relaxation has a solution to its dual, the HQFP must have a refutation of degree $2$ (see the discussion at the beginning of Section~\ref{subsec:implicitub}). Since the dual to the SDFP would not be satisfiable, the primal must be satisfiable. That is, the SDFP has a protocol of depth $d$.

    The question now is, along the lines of \cite{austrin2023sum}, how large an expander $G$ can we have while embedding the system $\mathrm{Parity_{G,f}}$ inside a depth $d$ simplified circuit? We are trying to maximize the value of $r/k$ (see Theorem~\cite{grigoriev2001paritysos} for the definitions of $r,k$) to ensure that it is $\omega(2c_1 + c_2)$. This is because the SoS degree lower bound on the parity-CSP is $\Omega(r)$, which will translate to a lower bound of $\Omega(r/k)$ for $\SCircSOS{d}{f}$. We know from~\cite{austrin2023sum} that an embedding like above works for the circuit size program, we just need to modify it to work for circuit depth.

    For our embedding we start by modifying the construction in the proof of \cite[Lemma 25]{austrin2023sum}. We change their explicit choice of the $m$-bit parity portion of their circuit to have depth $\log m$ instead. This does not change their proof. We also have to ensure that their Selector circuits are low-depth. This depends on the explicitness of the expander they use. By using the expander mentioned in~\cite[Theorem 3]{tashma2001expander} (building on the condenser from~\cite{raz1999condenser}), the selector is implemented in depth polylogarithmic in its input size. As a reminder of the notation for the parameters of the expander, $|U| = 2^n, |V| = m$, the left degree is $k$ and sets of size up to $r$ expand by a factor of $2$. The expander can achieve, for any $\alpha>0$, the parameters $k = \mathrm{poly}(n)$, $m = 2^{(\log r)^{1+\alpha}}$. The depth of the circuit is then $d = \log n + \log m + \mathrm{polylog}(n+\log m)$. We can set $r = kn$ satisfying the above constraints, making $d = \mathrm{polylog}(n)$ as well. Finally, we make their circuit into the form of a simplified circuit as we mentioned earlier in the section, doubling the depth and then adding $\log n$ more to the depth. The depth will remain $\mathrm{polylog}(n)$. This will give us a lower bound of $\Omega(r/k) = \Omega(n)$, which implies that $\alpha \geq 
    \Omega((n - c_2)/c_1) \geq \omega(1)$, thus finishing the proof.
\end{proof}

\subsection{\texorpdfstring{$\gamma_2$}{gamma-2} protocols are not ``weird''}

We have already seen that the SDFPs defining $\gamma_2$ protocols obey the Berman--Ben-Israel criterion. We now show that the SDFP defining $\gamma_2$ protocols is not very contrived. The same can be shown for quantum-lab protocols. Intuitively, for a formalization of communication complexity to be $(c_1, c_2)$-contrived for small $c_1,c_2$, it should suffice that the variables which define the protocol that solves a Karchmer--Wigderson game of $f$ depend on few of the variables that define the formula for solving $f$. This is a kind of ``locality'' constraint seems to hold for all arguments where one shows that one kind of algorithm is simulating another. In principle there could be exceptions to this rule, but we cannot think of any.

\begin{theorem}
    The SDFP defining $\gamma_2$ protocols is a PSD relaxation of an HQFP that is $(d,O(d))$-contrived.
\end{theorem}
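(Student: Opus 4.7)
The plan is to give an explicit polynomial substitution $A_t(x) \mapsto p_{A_t(x)}$ and $B_t(y) \mapsto p_{B_t(y)}$ from the variables of $\CommSOS{d}{f}$ into degree-$d$ polynomials in the $\SCircSOS{d}{f}$ variables $\{\Out_x(s)\}$, encoding the Karchmer--Wigderson translation from a depth-$d$ formula to a depth-$d$ KW protocol. Fix the deterministic rule that, at each OR gate $s$, Alice moves to $s0$ when $\Out_x(s0)=1$ and to $s1$ otherwise, together with the dual rule for Bob at AND gates. Then define
\[
p_{A_t(x)} \;=\; \prod_{s} c^A_{s, i_s(t)},
\]
where the product is over the OR-ancestors $s$ of $t$, $i_s(t)\in\{0,1\}$ indicates which child of $s$ lies on the path to $t$, and $c^A_{s,0} = \Out_x(s0)$, $c^A_{s,1} = 1 - \Out_x(s0)$. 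Define $p_{B_t(y)}$ dually as a product over the AND-ancestors of $t$, with factors $c^B_{s,0} = 1 - \Out_y(s0)$ and $c^B_{s,1} = \Out_y(s0)$. Since each node has at most $d$ ancestors and each factor is linear, we get $c_1 = d$.

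For the non-computational axioms the derivations are routine algebra. At an Alice (OR) node $t$, since $p_{A_{ti}(x)} = p_{A_t(x)}\cdot c^A_{t,i}$ and $c^A_{t,0} + c^A_{t,1} = 1$, the sum-of-squares, sum-of-products, and orthogonality axioms all reduce to $p_{A_t(x)}^2$ times expressions that either vanish identically or vanish modulo the Boolean idempotency axiom $\Out_x(t0)(1 - \Out_x(t0)) = 0$; the Bob-side axioms at an Alice node are trivially satisfied because $p_{B_{ti}(y)} = p_{B_t(y)}$. The Bob-node (AND) constraints are handled symmetrically, and the root constraints become $1 = 1$. Each such derivation has degree $O(d)$.

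The core case is the computational constraint at a leaf $\ell$ feeding the literal $x_i$ (WLOG positive), which demands $p_{A_\ell(x)} p_{B_\ell(y)} \in \mathrm{ideal}$ whenever $x_i = y_i$. The plan here is to prove by induction on the depth of $t$ the two invariants that $p_{A_t(x)}(1 - \Out_x(t))$ and $p_{B_t(y)}\Out_y(t)$ both lie in the ideal of $\SCircSOS{d}{f}$ with $O(d)$-degree derivations. The inductive step at an OR ancestor $s$ with path through $s \to s1$ uses the OR axiom $1 - \Out_x(s) = (1 - \Out_x(s0))(1 - \Out_x(s1))$ to reduce to the IH, while the case $s \to s0$ and both cases at an AND ancestor use the AND axiom $\Out_x(s) = \Out_x(s0)\Out_x(s1)$ (or its Bob-side analog) together with Boolean idempotency. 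At the leaf, when $x_i = y_i = 1$, the input axiom $\Out_y(\ell) = y_i$ gives $\Out_y(\ell) \equiv 1$ modulo the ideal, hence $p_{B_\ell(y)} \equiv p_{B_\ell(y)}\Out_y(\ell) \in \mathrm{ideal}$; the case $x_i = y_i = 0$ uses the $A$-side invariant symmetrically.

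The main obstacle is controlling degree growth across the $d$ inductive levels of this last argument. Each inductive step multiplies the IH derivation by a constant-degree factor and introduces new ideal terms of degree at most $d + O(1)$; summing over $d$ levels, the entire derivation stays of degree $O(d)$, yielding $c_2 = O(d)$ as claimed.
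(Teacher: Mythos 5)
Your proposal is correct and takes essentially the same approach as the paper's proof: the same recursive/product substitution encoding Alice's and Bob's greedy traversal, the same routine check of the root/node axioms, and the same induction on the two invariants $p_{A_t(x)}(1-\Out_x(t))$ and $p_{B_t(y)}\Out_y(t)$ lying in the ideal. The only small deviation is that you finish the leaf case by a case split on $x_i = y_i \in \{0,1\}$ rather than the paper's single combined identity $B_t(y)\bigl(A_t(x)(1-\Out_x(t))\bigr) + A_t(x)\bigl(B_t(y)\Out_y(t)\bigr) - A_t(x)B_t(y)\bigl(\Out_x(t)-\Out_y(t)\bigr) = A_t(x)B_t(y)$, which is an equally valid way to close the argument.
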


\begin{proof}
    As a reminder, the variables of the HQFP are $A_{t}(x)$ and $B_{t}(y)$ for each node $t \in \{0,1\}^{\leq d}$ in the protocol tree (again referred to using the transcript), every $x \in f^{-1}(1)$ and $y \in f^{-1}(0)$. These denote whether or not the rectangle at node $t$ contains $x$ and $y$. We use the following polynomials to set these values:
    \begin{itemize}
        \item $A_{\lambda}(x) = B_{\lambda}(y) = 1$. The rest of the values are recursively defined as follows.
        \item For an Alice node $t$ (corresponding to an OR gate in the circuit), define $B_{t0}(y) = B_{t1}(y) = B_{t}(y)$. Also define $A_{t0}(x) = A_{t}(x) \Out_x(t0)$ and $A_{t1}(x) = A_{t}(x) (1-\Out_x(t0))$. Note that if $\Out_x(t) = 1$, this ensures that Alice goes to its left-most child that also outputs $1$.
        \item For a Bob node $t$ (corresponding to an AND gate in the circuit), define $A_{t0}(x) = A_{t1}(x) = A_{t}(x)$. Also define $B_{t0}(y) = B_{t}(y) (1-\Out_y(t0))$ and $B_{t1}(y) = B_{t}(y) \Out_y(t0)$. Note that if $\Out_y(t) = 0$, this ensures that Alice goes to its left-most child that also outputs $0$.
    \end{itemize}

    Note that the variables $A_{t}(x)$ and $B_{t}(y)$ are defined recursively and so as a polynomial in the $\Out$ variables they would have degree comparable to (and at most) the depth of the node $t$, which is at most $d$.

    We now move on to the second point of the definition of contrived. We will consider the axioms of $\mathsf{Sub}\CommSOS{d}{f}$. Before we dive into it, note that given a Boolean axiom $x^2 -x = 0$ and a monomial $m$ of the form $m'x^2$, we can write $m$ as $(x^2-x)m' + xm'$. In this fashion, we can remove any degrees larger than $1$ without any degree increase in the proof.
    \begin{itemize}
        \item $A_{\lambda}(x)A_{\lambda}(x') - 1$ is just $0$ once the substitution is done. All the root constraints are of the same sort.
        \item At an Alice node $t$, we have the axiom $A_{t0}(x) A_{t1}(x)$ that must be $0$. Expanding the substitution by one step, we see that it is $A_{t}(x)^2 \Out_x(t0)(1-\Out_x(t0))$, which is in the ideal of the Boolean axiom of $\Out_x{t0}$. 
        \item At an Alice node $t$, we also have the axiom $A_{t0}(x)^2 + A_{t1}(x)^2 - A_{t}(x)^2$ must be $0$. Using the Boolean axioms of the $\Out$ variables, this reduces to asking whether $A_{t0}(x) + A_{t1}(x) - A_{t}(x)$ is in the ideal. This is true by definition since the latter is $A_{t}(x)(1-1)$ which is $0$.
        \item The final axiom at an Alice node $t$ is $A_{t0}(x)A_{t}(x) + A_{t1}(x)A_{t}(x) - A_t(x)^2$. Again, this simplifies to $0$ after expanding the substitution by one step.
        
        Bob's corresponding axioms are handled similarly. None of these showed any increase in the degree. This leaves us with the leaf axioms.
        \item For all $x,y$ such that $x_i=y_i$ and for all leaf nodes $t$ that are labeled $i$, we have the axiom $A_{t}(x)B_{t}(y)$ must be $0$. To show this, we translate the proof of the KW reduction to the language of polynomials. As intermediate steps, we want to show the following polynomials are in the ideal.
        \begin{itemize}
            \item $A_{t}(x)(1-\Out_x(t))$. This being $0$ means that if Alice reaches a node $t$, that node in the circuit must output $1$.
            \item $B_{t}(y)\Out_y(t)$. This being $0$ means that if Bob reaches a node $t$, that node in the circuit must output $0$.
        \end{itemize}
        Assume we've shown the above are in the ideal. At a leaf node $t$ labeled $i$, we have the axioms $\Out_x(t) - x_i = 0$ and $\Out_y(t) - y_i = 0$. So if $x_i = y_i$, $\Out_x(t) - \Out_y(t)$ is in the ideal. Hence the polynomial $B_{t}(y) (A_{t}(x)(1-\Out_x(t))) + A_{t}(x) (B_{t}(y)\Out_y(t)) - A_{t}(x)B_{t}(y)(\Out_x(t) - \Out_y(t))$, which simplifies to $A_{t}(x)B_{t}(y)$, is also in the ideal. This last step does incur a degree increase of $1$.
    
        We prove that the needed polynomials are in the ideal by induction. It is true at the root because we have the circuit axioms $\Out_x(\lambda) - 1$ and $\Out_y(\lambda)$ must be $0$. We give the induction step at a Bob node, the Alice node case is similar. At a Bob node $t$ (which is an AND gate in the circuit),
        \begin{itemize}
            \item $B_{t0}(y)\Out_y(t0) = B_{t}(y)(1-\Out_y(t0))\Out_y(t0)$ which is in the ideal of the Boolean axioms.
            \item $B_{t1}(y)\Out_y(t1) = B_{t}(y) \Out_y(t0) \Out_y(t1)$ which can be derived from $B_{t}(y) \Out_y(t)$ (which is in the ideal by induction) using the axiom $\Out_y(t0) \Out_y(t1) - \Out_y(t)$.
            \item $A_{t0}(x)(1-\Out_x(t0)) = A_{t}(x)(1-\Out_x(t0))$. This in turn happens to be equal to
            \begin{align*}
                &A_{t}(x)(1-\Out_x(t0)\Out_x(t1))(1+\Out_x(t0)\Out_x(t1)-\Out_x(t0))\\
                - &A_{t}(x) \Out_x(t0)^2 \Out_x(t1)(1-\Out_x(t1)).
            \end{align*} $A_{t}(x)(1-\Out_x(t0)\Out_x(t1))$ is in the ideal since we have the axiom $\Out_x(t0)\Out_x(t1) - \Out_x(t)$ and since $A_{t}(x)(1-\Out_x(t))$ is in the ideal by induction. Since we have the Boolean axiom for $\Out_x(t1)$, the ideal also includes the above expression, which is $A_{t}(x)(1-\Out_x(t))$. A similar derivation holds for $A_{t1}(x)(1-\Out_x(t1))$. The degree has increased by $3$ in this inductive step.
        \end{itemize}
        The overall degree increase after all the induction and the final step is $O(d)$.
    \end{itemize}
\end{proof}


\addcontentsline{toc}{section}{Acknowledgements}
\section*{Acknowledgements}

The authors would like to thank Carlos Florentino for fun conversations around this topic.

This work was funded by the European Union (ERC, HOFGA, 101041696). Views and opinions expressed are however those of the author(s) only and do not necessarily reflect those of the European Union or the European Research Council. Neither the European Union nor the granting authority can be held responsible for them.
It was also supported by FCT through the LASIGE Research Unit, ref.\ UIDB/00408/2020 and ref.\ UIDP/00408/2020, and by CMAFcIO, FCT Project UIDB/04561/2020, \url{https://doi.org/10.54499/UIDB/04561/2020}.
P. Dvořák was supported by Czech Science Foundation GAČR grant \#22-14872O.
\bibliographystyle{alpha}
\addcontentsline{toc}{section}{References}
\bibliography{main}



\end{document}